\documentclass[twoside,11pt]{article}
\usepackage{blindtext}

\usepackage[preprint]{jmlr2e}

\usepackage[utf8]{inputenc} 

\usepackage{lipsum}
\usepackage{subfiles}
\usepackage{booktabs}
\usepackage{chngcntr}

\usepackage [english]{babel}
\usepackage [autostyle, english = american]{csquotes}
\MakeOuterQuote{"}

\usepackage{amsmath,amsthm,amssymb}
\usepackage{amsfonts} 

\usepackage{natbib}
\usepackage{tikz} 
\usepackage{caption,subcaption}
\usepackage{graphicx}
\usepackage{amsmath, amsthm, amsfonts, microtype,booktabs}
\usepackage{wrapfig}
\usepackage{lscape}
\usepackage{rotating}

\usepackage{algorithm}
\usepackage{algpseudocode}


\usepackage[T1]{fontenc}    
\usepackage{hyperref}       
\usepackage{url}            
\usepackage{booktabs}       
\usepackage{amsfonts}       
\usepackage{nicefrac}       
\usepackage{microtype}      
\usepackage{lipsum}		
\usepackage{graphicx}
\usepackage{doi}
\usepackage{lscape}
\usepackage{chngcntr}
\usepackage{caption,subcaption}
\usepackage{setspace}

\interfootnotelinepenalty=10000


\usepackage{algorithm}
\usepackage{algpseudocode}

\usepackage{microtype}
\usepackage{graphicx}
\usepackage{booktabs} 

\makeatletter
\newcommand\footnoteref[1]{\protected@xdef\@thefnmark{\ref{#1}}\@footnotemark}
\makeatother

\usepackage{microtype}
\usepackage{graphicx}
\usepackage{float}
\usepackage{nicefrac,xfrac}
\usepackage{booktabs} 
\usepackage{commath}
\usepackage{bbm}
\usepackage{color}

\usepackage{harpoon}
\definecolor{editcolor}{rgb}{0, 0, 0} 


\newcommand{\edit}[1]{\textcolor{editcolor}{#1}}


\newcommand{\R}{\mathbb{R}}

\newcommand{\prob}{\mathbb{P}}
\newcommand{\N}{\mathbb{N}}

\newcommand{\E}{\mathbb{E}}

\newcommand{\hilb}{\mathcal{H}}

\newcommand{\var}{\text{var}}

\newcommand{\loss}{\mathcal{L}}
\newcommand{\normal}{\mathcal{N}}
\newcommand{\iid}{\overset{\text{i.i.d.}}{\sim}}
\newcommand{\disteq}{\overset{\text{d}}{=}}

\newcommand{\noisevar}{\sigma_{\text{noise}}^2}

\newcommand{\indep}{\perp \!\!\! \perp}
\newcommand{\tens}{%
  \mathbin{\mathop{\otimes}\limits}%
}

\newcommand\inner[2]{\langle #1, #2 \rangle}

\usepackage{mathtools}
\usepackage{lipsum,graphicx,multicol}

\usepackage{newfloat}

\usepackage{mathtools}
\usepackage[capitalize, sort]{cleveref}

\DeclareFloatingEnvironment[fileext=lop, name=Table, within=none]{Table}

\crefname{Table}{Table}{Tables}

\usepackage{thmtools}
\theoremstyle{plain}
\newtheorem{nthm}{Theorem}

\newtheorem{nprop}{Proposition}
\newtheorem{nlem}{Lemma}
\newtheorem{ncor}{Corollary}

\theoremstyle{definition}
\newtheorem{ndefn}{Definition}
\newtheorem{nexa}{Example}
\newtheorem{nassum}{Assumption}


\crefname{nassum}{Assumption}{Assumptions}

\crefformat{footnote}{#1\footnotemark[#2]#3}


\ShortHeadings{The SKIM-FA Kernel}{Agrawal and Broderick}
\firstpageno{1}

\usepackage{lastpage}
\jmlrheading{23}{2022}{1-\pageref{LastPage}}{11/21; Revised
9/22}{11/22}{21-1403}{Raj Agrawal and Tamara Broderick}
\ShortHeadings{The SKIM-FA Kernel}{Agrawal and Broderick}

\begin{document}


\title{The SKIM-FA Kernel: High-Dimensional Variable Selection and Nonlinear Interaction Discovery in Linear Time}

\author{\name Raj Agrawal \email r.agrawal@mit.edu \\
       \addr Department of Electrical Engineering and Computer Science\\
       Massachusetts Institute of Technology\\
       Cambridge, MA 02139-4307, USA
       \AND
       \name Tamara Broderick \email tbroderick@mit.edu \\
       \addr Department of Electrical Engineering and Computer Science\\
      Massachusetts Institute of Technology\\
      Cambridge, MA 02139-4307, USA}

\editor{Daniela Witten}

\maketitle

\begin{abstract}%
Many scientific problems require identifying a small set of covariates that are associated with a target response and estimating their effects. Often, these effects are nonlinear and include interactions, so linear and additive methods can lead to poor estimation and variable selection. Unfortunately, methods that simultaneously express sparsity, nonlinearity, and interactions are computationally intractable --- with runtime at least quadratic in the number of covariates, and often worse. In the present work, we solve this computational bottleneck. We show that suitable interaction models have a kernel representation, namely there exists a "kernel trick" to perform variable selection and estimation in $O$(\# covariates) time. Our resulting fit corresponds to a sparse orthogonal decomposition of the regression function in a Hilbert space (i.e., a functional ANOVA decomposition), where interaction effects represent all variation that cannot be explained by lower-order effects. On a variety of synthetic and real data sets, our approach outperforms existing methods used for large, high-dimensional data sets while remaining competitive (or being orders of magnitude faster) in runtime.
\end{abstract}

\begin{keywords}
functional ANOVA, interaction discovery, kernel ridge regression, nonlinear variable selection, sparse high-dimensional regression
\end{keywords}

\section{Introduction} \label{sec:intro}
 Many scientific and decision-making tasks require learning complex relationships between a set of $p$ covariates and target response from $N$ observed datapoints
with $N \ll p$. For example, in genomics and precision medicine, researchers would like to identify a small set of genetic and environmental factors (out of potentially thousands or millions) associated with diseases and quantify their effects \citep{missing_herit, gwas_interaction, epistatis_gene2, gpu_epistasis}. Estimating these effects can be challenging, however,  without sufficiently flexible models. Blood sugar levels, for example, could vary sinusoidally with the time of day (e.g., depending on when an individual has breakfast, lunch, and dinner). In other instances, effects can be challenging to estimate due to multiplicative interactions. A particular drug could help individuals with certain genetic characteristics but harm others. To learn such nuances in our data for  better decision-making, we need statistical methods that can model nonlinear and interaction effects. We also need computationally efficient methods that can scale to large-$p$ settings. Unfortunately, as we detail below, existing methods suffer in at least one of these three categories.

Sparse linear regression methods (e.g., the Lasso) are typically fast but do not have the flexibility to learn nonlinear or interaction effects \citep{atom_pursuit, dantzig, pruning_lasso}. SpAM extends the Lasso to model nonlinear effects but assumes additive effects \citep{spam}. Conversely, the hierarchical Lasso models interactions but assumes linearity, and its runtime scales quadratically with dimension \citep{lass_heirch}. Recently, \citet{kit} developed a kernel trick to learn interactions in time linear in dimension, but this method assumes linear effects. Black-box models, such as neural networks and random forests, often include interactions and nonlinear effects for the sake of prediction. However, it is  unclear how to access the effects from the  fitted prediction model.

The \emph{hierarchical functional ANOVA} \citep{stone1994}, which includes many of the models
described above as special cases,  provides a general framework to jointly model interactions and nonlinear effects through a variance decomposition of the regression function. As long as the response has finite variance and the covariates vary over a compact set, the functional ANOVA decomposition exists. Such a variance decomposition, which includes classical ANOVA decompositions of contingency tables and generalized additive models as special cases, has been widely used in applications due to desirable interpretability properties. For example, in genetic applications, biologists use ANOVA decompositions to isolate the marginal effects of particular genetic or environmental factors on disease in a population \citep{anova_bio, missing_herit, gwas_interaction}. 
 Unfortunately, existing functional ANOVA methods, which are primarily kernel-regression based, do not scale well with dimension \citep{ss_anova, cosso, supanova}; these methods use kernels that take $O(p^Q)$ time to evaluate, where $Q$ equals the size of the highest order interaction. Hence, running kernel ridge regression for inference takes $O(p^QN^2 + N^3)$ time.  

\noindent \textit{Contributions.} \edit{We consider two interconnected tasks: (1) high dimensional variable selection and (2) estimation of nonlinear additive and interaction effects. We define a new class of kernels called "model selection kernels" to simultaneously solve each of these tasks via kernel ridge regression. Model selection kernels have the flexibility to select a sparse subset of covariates that drive the response, and estimate nonlinear effects among the selected covariates.  However, model selection kernels are  computationally intractable to compute in general. Hence, we propose SKIM-FA
kernels, a type of model selection kernel that also enjoys
computational efficiency.} We show how to compute SKIM-FA kernels in $O(pQ)$ time by exploiting special low-dimensional structure. We motivate this structure from the perspective of hierarchical Bayesian modeling. Then, we use equivalences between kernel ridge regression, Gaussian processes, and conjugate Bayesian  regression to  develop our efficient inference procedure. 
 
 \noindent \textit{Outline.} We start by describing how to model nonlinear interaction effects and encode sparsity using hierarchical Bayesian modeling in \cref{sec:prelims}. In \cref{sec:two_kern_tricks}, we \edit{define model selection kernels and} develop two kernel tricks to perform inference more efficiently when the covariates are independent. Then, we extend our procedure to the general covariate case in \cref{sec:dependent_covariates}. We defer implementation details of our final algorithm to \cref{sec:alg_implement}. We conclude by discussing related work in \cref{sec:related_anova_work} and benchmarking our method against other methods often used to model high-dimensional data in \cref{sec:experiments}. 

\section{A Framework for \edit{Modeling} Nonlinear \edit{Additive and  Interaction Effects} and \edit{Inducing} Sparsity} \label{sec:prelims}
\noindent \textit{Problem statement.}
Suppose we collect data $D = \{(x^{(n)}, y^{(n)})\}^N_{n=1}$ with covariates $x^{(n)} \in \R^p$ and continuous scalar responses $y^{(n)}$. We model $y^{(n)} =  f^*(x^{(n)}) + \epsilon^{(n)}$, where $ \epsilon^{(n)} \iid \normal(0, \noisevar)$, $x^{(n)} \iid \mu$, and the unknown regression function $f^*$ belongs to some class of functions $\hilb$. Using only noisy realizations of $f^*$, we would like to identify which covariates  $f^*$ depends on\footnote{\edit{When the derivative exists, this set equals
 equals all covariates with non-zero derivatives; that is, all $x_j$ with $j \in \{1,\ldots,p\}$ such that $\| \partial f^* / \partial x_j \| \neq 0$. See also \cref{lem:kern_var_select}.}} (i.e., perform variable selection), and recover \edit{main effects and} interaction effects. For example, a biologist might seek to identify a small set of genes \edit{(out of tens of thousands of possible genes)} associated with a disease — e.g., to design new gene-based drug targets. Understanding the relationship between the selected genes and disease response could help the biologist properly administer the drug.  

\edit{ To perform variable selection and estimation, existing methods often assume the majority of effects equal zero. Problematically, when there are interactions present, sparsity in effects does not guarantee that a sparse set of covariates is selected. For example, suppose we include all additive and pairwise interaction effects in our model. A method that selects $p$ non-zero effects might be considered sparse in the effects since $p \ll p^2$. But the selected effects could correspond to $p$ (or nearly $p$) selected covariates, so the burden of data collection is not reduced relative to the original problem; see \citet{lass_heirch} for further
motivation of sparsity in covariates, by contrast to sparsity in
effects. To ensure sparsity in covariates, many interaction methods impose a "hierarchy" or "heredity" constraint \citep{lass_heirch, radchenko_var, haris_herid}. Such a constraint allows interactions to be present only among selected additive effects. If the additive effects are weak, then this constraint will lead to poor inference; see also our extended discussion in \cref{A:lit_review}.  }


 To perform variable selection and estimation \edit{without requiring a heredity constraint}, we use penalized regression:
\begin{equation} \label{eq:pen_ls}
	\hat{f} = \arg \min_{f \in \hilb} \sum_{n=1}^N \loss(y^{(n)}, f(x^{(n)})) + J(f),
\end{equation}
where $ \loss(\cdot, \cdot)$ and $J(f)$ denote some loss function and penalty on model complexity, respectively. This paper focuses on four subproblems resulting from  \cref{eq:pen_ls}: (P1) picking $\hilb$ to model interactions, (P2) selecting $\loss(\cdot, \cdot)$ and $J(f)$ to induce sparsity (i.e., to identify the small subset of covariates that influences the response), (P3) tractably solving \cref{eq:pen_ls} for our choice of sparsity-inducing $J(f)$, and (P4) efficiently reporting effects in $\hat{f}$.

\subsection{Our Contributions: An Overview} 
We describe, at a high level, our solutions to subproblems P1 through P4, and what parts of our solutions are new. Our solutions to P3 and P4 are our core contributions. \\

\noindent \textit{P1: Constructing $\hilb$.} Our construction of $\hilb$ in \cref{sec:construct_space} is based on \citet{projection_anova}. We use the hierarchical functional ANOVA introduced in \citet{stone1994} to make recovering interaction effects a well-defined inference task (i.e., statistically identifiable).  \\

\noindent \textit{P2: Selecting the loss and penalty.} We select the loss and penalty from a hierarchical Bayesian modeling point of view in \cref{sec:general_prior_class}. In particular, we choose the loss $\loss$ to correspond to a negative log-likelihood of the data and the penalty $J(f)$ to correspond to a negative log prior of $f$. Existing sparse Bayesian interaction methods do not work at our level of generality. Nevertheless, our proposed class of priors is heavily influenced by existing sparse Bayesian interaction models. \\

\noindent \textit{P3: Solving \cref{eq:pen_ls}.} We solve \cref{eq:pen_ls} in time linear in $p$ by using two kernel tricks to (1) reduce the cost of modeling nonlinear functions and (2) avoid summing over a combinatorial number of interaction terms. The first kernel trick, described in \cref{sec:two_kern_tricks}, is based on the foundational smoothing spline ANOVA (SS-ANOVA) work by  \citet{ss_anova}. To make the connection to SS-ANOVA, we show that there exists a duality between our class of hierarchical models (see P2) and reproducing kernel Hilbert spaces induced by what we call \emph{model selection kernels}. Our model selection kernels generalize the kernels used in \citet{ss_anova} by removing the requirement that all covariates be independent. Our second kernel trick, which allows us to avoid summing over a combinatorial number of interaction terms,  applies to a subset of model selection kernels. We call this subset of kernels \emph{SKIM-FA} kernels and prove that SKIM-FA kernels have desirable statistical properties from the hierarchical Bayesian modeling point of view.  \\

\noindent \textit{P4: Reporting effects.} For the case of independent covariates, we report effects using the procedure in \citet{ss_anova}. Our new contribution, provided in \cref{sec:dependent_covariates}, is developing an efficient algorithm to report effects for the non-independent case. 

\subsection{Interactions and Identifiability for Nonlinear Functions} \label{sec:construct_space}
We construct $\hilb$ in \cref{eq:pen_ls} by considering functions on $\R^p$ that can be written as a sum of lower-dimensional functions (i.e., interaction effects) that depend on at most $Q$ covariates with $Q < p$. Our goal is to estimate these interaction effects. Unfortunately, as we detail below, such an expansion is not unique, and therefore not a valid target of inference. To make inference over $\hilb$ well-defined, we use the \emph{hierarchical functional ANOVA} \citep{stone1994}. \\  

\noindent \textit{Modeling Interactions.}  Let $\hilb =\hilb_Q \coloneqq \bigoplus_{V: |V| \leq Q} \hilb_V$, where $\hilb_V$ belongs to the space of all square-integrable functions of $x_V$  (with respect to the probability measure $\mu$) and $V \subset [p] \coloneqq \{1, \cdots, p \}$. Then, for $f_{\emptyset}$ in the space of constant functions $\hilb_{\emptyset} = \{\theta:  \theta \in \R \}$,
\begin{equation} \label{eq:general_mod_space}
	\begin{split}
		\bigoplus_{V: |V| \leq Q}  \hilb_V &= \left \{ f: f = \sum_{V: |V| \leq Q} f_V(x_V), \ f_V \in \hilb_V \right \} \\
		& = \left \{ f: f= f_{\emptyset} + \sum_{i=1}^p f_{\{i\}}(x_i) +  \sum_{i<j}^p f_{\{i, j\}}(x_i, x_j) + \cdots + \sum_{V: |V| = Q} f_{V}(x_V)  \right\}.
	\end{split}
\end{equation}
Similar to additive models, $f_{\{i\}}(x_i)$  has the interpretation as the main or marginal effect of covariate $x_i$ on $y$. Similarly, $ f_{\{i, j\}}(x_i, x_j)$ has the interpretation as the two-way or pairwise effect of $x_i$ and $x_j$ on $y$. Unfortunately, the components in \cref{eq:general_mod_space} are not identifiable without further constraints.  For example, if $f^*(x) = f_{\{1\}}(x_1) + f_{\{2\}}(x_2) + f_{\{1,2\}}(x_1, x_2)$, then $f^*$ also decomposes as $f_{\{1\}}(x_1) + [f_{\{2\}}(x_2) + 5] + [f_{\{1,2\}}(x_1, x_2) - 5]$. \\ 

\noindent \textit{Identifiability with the Functional ANOVA.} To resolve identifiability issues, we construct a smaller space of functions $\hilb_V^o \subset \hilb_V$, where  $\hilb_V^o $ includes only functions whose variation cannot be explained by lower-order effects of $x_V$:
\begin{equation} \label{eq:orth_hilb}
	\hilb_V^o= \{f_V \in \hilb_V: \forall A \subsetneq V,  \ \forall f_A \in \hilb_A,  \  \inner{f_V}{f_A}_{\mu} = 0 \},
\end{equation}
where $\inner{\cdot}{\cdot}_{\mu}$ is an inner product on $L^2$. That is,  $\inner{f_A}{f_B}_{\mu} = \E_{x \sim \mu}[f_A(x_A) f_B(x_B)]$.
\begin{nthm} \citep{stone1994, projection_anova} \label{thm:unique_fanova}
Suppose $f \in \hilb_Q$ and $\mu$ is absolutely continuous with respect to Lebesgue measure. Further, suppose that the domain of functions in $\hilb_Q$ is $\mathcal{X}$, and $\mathcal{X}$ is a compact set of $\R^p$.  Then, there exist ($\mu$-almost everywhere) unique functions $f_V \in  \hilb_V^o$ such that $f =  \sum_{V: |V| \leq Q} f_V$.
\end{nthm}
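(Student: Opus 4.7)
The plan is to handle existence and uniqueness separately, working in the ambient Hilbert space $L^2(\mu)$. For existence, I would prove by induction on $|V|$ the structural claim that each $\hilb_V$ admits an orthogonal decomposition $\hilb_V = \hilb_V^o \oplus \sum_{A \subsetneq V} \hilb_A$ inside $L^2(\mu)$. This follows from the very definition of $\hilb_V^o$ as the orthogonal complement of $\sum_{A \subsetneq V} \hilb_A$ within $\hilb_V$, provided the latter sum is closed in $L^2(\mu)$; the absolute continuity of $\mu$ on the compact $\mathcal{X}$ is what guarantees this closedness, since each $\hilb_A$ is itself a closed subspace (the $L^2(\mu)$-space of $\sigma(x_A)$-measurable square-integrable functions). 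Iterating this decomposition yields $\hilb_V = \sum_{A \subseteq V} \hilb_A^o$, and in particular $\hilb_Q = \sum_{|V| \leq Q} \hilb_V^o$, from which existence of a decomposition $f = \sum_V f_V$ with $f_V \in \hilb_V^o$ is immediate.

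For uniqueness, suppose $\sum_V f_V = \sum_V g_V$ with both decompositions in the required subspaces; setting $h_V \coloneqq f_V - g_V \in \hilb_V^o$ gives $\sum_V h_V = 0$. I would then prove $h_V = 0$ $\mu$-a.e.\ by induction on $|V|$. For $|V| = 0$, take $\E_\mu$ of both sides; since $\inner{h_W}{1}_\mu = 0$ for every $W \neq \emptyset$ by definition of $\hilb_W^o$, only $h_\emptyset$ survives, forcing $h_\emptyset = 0$. For $|V| = k \geq 1$, taking the $L^2(\mu)$-inner product of $\sum_W h_W = 0$ against an arbitrary test element $g \in \hilb_V^o$ eliminates the $|W| < k$ contributions by the inductive hypothesis and the $W \supsetneq V$ contributions by orthogonality (since $g \in \hilb_V$ and $h_W \perp \hilb_V$), leaving only $W = V$ and $W$-incomparable-with-$V$ terms.

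The main obstacle is precisely the incomparable case: for $V$ and $W$ that are not nested, $\hilb_V^o$ and $\hilb_W^o$ need not be $L^2(\mu)$-orthogonal when $\mu$ is not a product measure, so the naive inner-product argument does not isolate $h_V$. I would close the gap by treating all $h_V$ with $|V| = k$ simultaneously and exploiting that each $h_V$ depends only on $x_V$: the relation $\sum_W h_W = 0$, with components of size less than $k$ already killed, reduces to a separation-of-variables problem for the remaining terms, which the absolute continuity of $\mu$ on the compact $\mathcal{X}$ resolves by ensuring the marginals $\mu_V$ are rich enough that no nontrivial linear relation among functions of distinct coordinate subsets (subject to the defining orthogonality constraints of the $\hilb_W^o$'s) can exist. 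This measure-theoretic input is the real content of the theorem beyond the standard product-measure setting, and is the step carried out in detail by \citet{projection_anova}.
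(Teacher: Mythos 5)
First, a point of reference: the paper does not prove \cref{thm:unique_fanova} itself --- the result is imported from \citet{stone1994} and \citet{projection_anova} and stated without proof --- so there is no in-paper argument to compare against, and your proposal has to stand on its own.

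Your skeleton is the standard one (existence by peeling off orthogonal complements so that $\hilb_V = \hilb_V^o \oplus \sum_{A \subsetneq V} \hilb_A$, uniqueness by pairing $\sum_W h_W = 0$ against test functions in $\hilb_V^o$ and inducting on $|V|$), and you have correctly located the two places where the general-$\mu$ case is genuinely harder than the product-measure case: (i) the sum $\sum_{A \subsetneq V} \hilb_A$ must be \emph{closed} in $L^2(\mu)$ for the orthogonal decomposition to exist, and (ii) for incomparable $V$ and $W$ the spaces $\hilb_V^o$ and $\hilb_W^o$ are not orthogonal, so the pairing argument does not isolate $h_V$. But at exactly these two points the proposal stops at an assertion --- ``absolute continuity of $\mu$ on the compact $\mathcal{X}$ is what guarantees this closedness,'' and ``the marginals are rich enough that no nontrivial linear relation \ldots can exist'' --- with the actual argument deferred to the cited reference. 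Since this is precisely the mathematical content of the theorem, what you have is a correct outline with a gap at its core, not a proof. Worse, the assertion is not right as stated: a finite sum of closed subspaces of a Hilbert space need not be closed, and absolute continuity alone does not repair this. If the density of $\mu$ is allowed to degenerate (for instance, $\mu$ absolutely continuous but concentrating mass near the diagonal $x_1 = x_2$), the angle between $\hilb_{\{1\}}$ and $\hilb_{\{2\}}$ in $L^2(\mu)$ degenerates, which is exactly the failure mode for both closedness and uniqueness. The hypothesis that actually does the work in \citet{stone1994} and \citet{projection_anova} is that the joint density is bounded away from zero and infinity on $\mathcal{X}$: this makes $\| \cdot \|_{\mu}$ equivalent to the norm under the product (or Lebesgue) measure, transfers the positive-angle and closedness properties from the product-measure case, and then both of your inductions go through. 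To complete the proof you would need to state and invoke that norm-equivalence lemma (or an equivalent positivity-of-angles condition) explicitly; the heuristic about ``rich marginals'' does not substitute for it.
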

\begin{ndefn} \label{def:fanova}
Suppose $f = \sum_{V: |V| \leq Q} f_V$ where $f_V \in \hilb_V^o$. Then, $ \sum_{V: |V| \leq Q} f_V$  is called the \emph{functional ANOVA decomposition} of $f$ with respect to $\mu$.
\end{ndefn}
In light of \cref{thm:unique_fanova}, we assume compactness throughout to have a well-defined target of inference (i.e., the functional ANOVA decomposition of $f$ in \cref{def:fanova}). By the orthogonality constraints in \cref{eq:orth_hilb}, the effect $f_{\{i,j\}}(x_i, x_j)$ in \cref{def:fanova} represents, for example, the variation that cannot be explained by 1D functions of $x_i$ and $x_j$ and an intercept. When the covariates are independent, then the signal variance decomposes as 
\begin{equation} \label{eq:sig_var_decomp}
	\begin{split}
		\var(f) = \var(f_{\{\emptyset\}}) + \sum_{i} \var( f_{\{i\}}) + \sum_{i, j} \var( f_{\{i, j\}}) + \cdots \var(f_{\{1,2, \cdots, p\}}(x_1, \cdots, x_p)),
	\end{split}
\end{equation}
where $\var(f) = \inner{f}{f}_{\mu}$. Hence, \cref{eq:sig_var_decomp} allows us to \emph{analyze} how the \emph{variance} of the \emph{function} is distributed across the interactions of different orders. Hence, the name functional analysis of variance or functional ANOVA. When all the covariates are categorical, then the functional ANOVA reduces to the classical ANOVA decomposition of a contingency table.

\subsection{How to Achieve Sparsity for Nonlinear Functions} \label{sec:general_prior_class}
To complete our specification of \cref{eq:pen_ls}, we still need to pick a loss and penalty function on $\hilb_Q$. We motivate our choice of loss and penalty from a Bayesian point of view. That is, we view $\loss(\cdot, \cdot)$ as the negative log-likelihood function, $J(f)$ as the negative log prior on $f$, and $\hat{f}$ as the maximum a priori (MAP) estimate under our proposed Bayesian model. \\

\noindent \textit{Our loss.} Since the noise terms are Gaussian (see "Problem Statement" \edit{at the start of} \cref{sec:prelims}), the negative log-likelihood is quadratic: $\loss(y, f(x)) = (y - f(x))^2$  (i.e., squared-error loss). \\

\noindent \textit{Our penalty.} We are primarily interested in the case when $f^*$ is sparse, i.e., when $f^*$ depends on a small number of covariates. So $J(f)$ should promote such sparsity. To that end, we first take a basis expansion of each component space, and then place a sparsity prior on the basis coefficients. We assume that for all  $V \subset [p]$ and $1 \leq |V| \leq Q$, there exists a $B_V \in \N \cup \{\infty\}$ and feature map $\Phi_V: \R^{|V|} \mapsto \R^{B_V}$ such that the components of $ \Phi_V$ form a basis of $\hilb_V^o$. Then, for any $f_V \in \hilb_V^o$, there exists $\Theta_{V} \in \R^{B_V}$ such that $f_V(x_V) = \Theta_{V}^T \Phi_V(x_V)$. Hence, if we can estimate $\Theta_V$, we can estimate the functional ANOVA decomposition of $f^*$ by \cref{thm:unique_fanova}. 
 
To obtain a MAP estimate of $\Theta_V$, we draw each $\Theta_V \sim \normal(0,  \theta_{V} \cdot I_{B_V})$, where $\theta_V \in \R$ is a non-negative auxiliary parameter drawn from a sparsity prior (e.g., a Laplace prior) \edit{and $I_{B_V}$ denotes the $B_V \times B_V$ identity matrix}; see \cref{sec:alg_implement} for our particular choice of prior. If $\theta \coloneqq \{\theta_V\}$ is sparse, then we claim that $\{f_V \}$ is sparse. To understand why, suppose $\theta_V = 0$. Then, the prior variance of $f_V$ equals 0. Hence, $f_V$ will equal 0. Thus, a prior that induces sparsity in $\theta$ enables us to get sparsity in the number of effects selected. However, sparsity in \emph{effects} does not automatically guarantee that a sparse subset of \emph{covariates} is selected, \edit{as we discussed at the start of \cref{sec:prelims}.} 

To get sparsity in covariates  \edit{without requiring a heredity constraint}, we draw $\theta$ from a hierarchical sparsity prior; see \cref{sec:trick_two} for details. Since $\Phi_V$ is a basis of $\hilb_V^o$ and our prior on $\Theta_V$ has full support on $\R^{B_V}$, our choice of likelihood and prior allows us to model any $f \in \hilb_Q$ as summarized below: 
\begin{subequations} \label{eq:gen_bayes_model}
	\begin{align}
			\Theta_{V} \mid \theta_{V} &\sim \normal(0,  \theta_{V} \cdot \edit{I_{B_V}}), \ V \subset [p], \ |V| \leq Q, \ \theta_V \geq 0 \label{eqn:top_gen_bayes_model}  \\
				y^{(n)} \mid x^{(n)}, \Theta, \noisevar &\sim \normal(f(x^{(n)}), \noisevar), \ f = \sum_{V: |V| \leq Q} \Theta_V^T \Phi_V(\cdot), \ n \in [N], \label{eqn:bot_gen_bayes_model}
	\end{align}
\end{subequations}
 where the likelihood in the first equation corresponds to $\exp(-J(f))$ and $\exp(-\loss(y^{(n)}, f(x^{(n)})))$ corresponds to the likelihood in the last equation.\footnote{\edit{While \cref{eq:gen_bayes_model} has the flexibility to induce sparsity in both covariates and effects, it does not lead to sparsity in the basis expansion of a selected effect (e.g., if $f_V$ is selected, then $\Theta_V$ will be a dense vector with probability one). Since our goal is not to learn a sparse representation of $f_V$, our choice of a Gaussian prior (or $L_2$ regularization) is not very limiting since irrelevant basis components will just be shrunk close to zero. However, if there are many irrelevant basis components in $\Phi_V$, then a Laplace prior (or $L_1$ penalty) might be preferable. Other methods, such as sparse additive models, also use $L_2$ regularization to penalize the basis expansion coefficients \citep{spam}.}} While other likelihoods and priors exist to model interactions and sparsity, many existing sparse Bayesian methods are instantiations of \cref{eq:gen_bayes_model}, and have desirable statistical shrinkage properties; see, for example,  \citet{sparse_bayes_pairwise,Curtis13fastbayesian, heirc_sparisty, kit, chipman_bayes_glm, george1993variable}. In the next section, we exploit the special Gaussian and interaction structure in \cref{eq:gen_bayes_model} for faster inference.

\section{Using Two Kernel Tricks to Reduce Computation Cost} \label{sec:two_kern_tricks}
In principle, we can analytically compute the MAP estimate of $\Theta_V$ in \cref{eq:gen_bayes_model} (and hence solve \cref{eq:pen_ls} in closed-form);  conditional on $\theta$, \cref{eq:gen_bayes_model} reduces to conjugate Bayesian regression. Unfortunately, unless $p$ is very small or $Q=1$, computing this closed-form solution is typically computationally intractable, for reasons we describe next. To remedy this computational intractability, we show how to make inference scale linearly with $p$ by exploiting special model structure in  \cref{sec:trick_one} and \cref{sec:trick_two}.\\ 
 
\noindent \textit{Intractability of Conjugate Bayesian Regression.} Our model in \cref{eq:gen_bayes_model} has $B_Q \coloneqq \sum_{V: |V| \leq Q} B_V$ parameters. In general, computing the MAP estimate of these $B_Q$ parameters requires inverting a $B_Q \times B_Q$ covariance matrix \citep[Chapter 2]{gp_book}. So the computational cost of MAP inference scales as $O(B_Q^3 + NB_Q^2)$. $B_Q$ may be prohibitively large for two reasons. First, $B_Q$ is large if any basis-expansion size (i.e., any $B_V$) is large. For example, if $\hilb_V^o$ is infinite-dimensional (e.g., if $\hilb_V$ equals the space of all square-integrable functions of $x_V$), then $B_V = \infty$. Even if all the $\hilb_V^o$ are finite-dimensional (e.g., if $\hilb_V$ is generated from a finite polynomial basis), $B_V$ typically grows exponentially with the size of $|V|$; see, e.g., \citet{projection_anova}. $B_Q$ may also be large due to the combinatorial sum over interactions; even if all of the $B_V$ equal 1, $B_Q$ still has on the order of $O(p^Q)$ terms. Hence, without additional structure, the computation time for conjugate Bayesian regression is lower bounded by $\Omega(p^{3Q} + p^{2Q}N)$. Fortunately, due to unique structure in our problem, we show how to avoid the cost of explicitly generating the basis expansion ("Trick 1" in \cref{sec:trick_one}), and summing over all $O(p^Q)$ interactions ("Trick 2" in \cref{sec:trick_two}). In what follows, we assume $\theta$ is fixed. Then, we show how to estimate $\theta$ in \cref{sec:alg_implement}.  

\subsection{Trick 1: Represent and Access Sparsity Without Basis Expansion} \label{sec:trick_one}

We show how to remove the computational dependence on the size of $B_V$ through a kernel trick. Our kernel generalizes the one used in \cite{ss_anova}, which assumes independent covariates, to the case of general covariate distributions. In order to prove the existence of a kernel trick, we make the following assumption:
\begin{nassum} \label{assum:base_rkhs}
Each $\hilb_V$ is a reproducing kernel Hilbert space (RKHS). 
\end{nassum}
Given that there exist reproducing kernels that can approximate any continuous function arbitrarily well, \cref{assum:base_rkhs} is a mild condition \citep{univ_kernel}. The non-trivial part is proving the existence of a kernel to induce $\hilb_V^o$, which is not immediate due to the orthogonality constraints in \cref{eq:orth_hilb}. 

\begin{nprop} \label{prop:kern_exist}
(existence of a kernel trick) Under \cref{assum:base_rkhs}, there exists a positive-definite kernel $k_V$ such that $k_V(x, \tilde{x}) = \inner{\Phi_V(x)}{\Phi_V(\tilde{x})}$, where the components of $\Phi_V \in \R^{B_V}$ form a countable basis of $\hilb_V^o$. 
\end{nprop}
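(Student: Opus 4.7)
The plan is to realize $\hilb_V^o$ as a closed subspace of the RKHS $\hilb_V$, invoke the standard fact that closed subspaces of an RKHS are themselves RKHSs, and then read off the desired feature map from any countable orthonormal basis of $\hilb_V^o$. Concretely, my first step is to verify that $\hilb_V^o$ is closed in $\hilb_V$. The defining orthogonality in \cref{eq:orth_hilb} is with respect to the $L^2(\mu)$ inner product rather than the RKHS inner product, so closure is not immediate; the key intermediate claim is that the inclusion $\hilb_V \hookrightarrow L^2(\mu)$ is continuous. This follows from the standing compactness assumption inherited from \cref{thm:unique_fanova}: pointwise evaluation is bounded in the RKHS norm by $\sqrt{k_V^{\text{full}}(x,x)}$, and boundedness of the (continuous) reproducing kernel of $\hilb_V$ on the compact domain gives $\|f_V\|_{L^2(\mu)}^2 \leq C\, \|f_V\|_{\hilb_V}^2$ for a finite constant $C$. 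Consequently, for every $A \subsetneq V$ and every $f_A \in \hilb_A$, the linear functional $T_{f_A}: f_V \mapsto \inner{f_V}{f_A}_{\mu}$ is continuous on $\hilb_V$, and $\hilb_V^o = \bigcap_{A \subsetneq V,\, f_A \in \hilb_A} \ker T_{f_A}$ is closed as an intersection of kernels of bounded linear functionals.

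Second, I would invoke the standard result (e.g., Aronszajn's theorem; see Berlinet and Thomas-Agnan) that any closed subspace $\hilb_V^o$ of an RKHS $\hilb_V$ is itself an RKHS. Its reproducing kernel $k_V$ can be written explicitly as $k_V(x, \tilde{x}) = (P_V^o K_V(\cdot, \tilde{x}))(x)$, where $K_V$ is the reproducing kernel of $\hilb_V$ and $P_V^o: \hilb_V \to \hilb_V^o$ is the orthogonal projection in the RKHS inner product. By construction $k_V$ is symmetric and positive definite because it is a reproducing kernel.

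Third, to produce the feature map, I would use separability: since $\hilb_V$ is a separable RKHS (continuous kernel on a compact metric space), so is its closed subspace $\hilb_V^o$, so $\hilb_V^o$ admits a countable orthonormal basis $\{\phi_i\}_{i=1}^{B_V}$ with $B_V \in \N \cup \{\infty\}$. Defining $\Phi_V(x) = (\phi_1(x), \phi_2(x), \ldots) \in \R^{B_V}$ and applying Parseval's identity to $k_V(\cdot, \tilde x) \in \hilb_V^o$ yields
\[
k_V(x, \tilde x) \;=\; \sum_{i=1}^{B_V} \inner{k_V(\cdot, \tilde{x})}{\phi_i}_{\hilb_V^o}\, \phi_i(x) \;=\; \sum_{i=1}^{B_V} \phi_i(\tilde{x})\, \phi_i(x) \;=\; \inner{\Phi_V(x)}{\Phi_V(\tilde{x})},
\]
using the reproducing property in the second equality. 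This is exactly the claimed identity.

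The main obstacle is the first step, namely confirming that $\hilb_V^o$ is closed in the RKHS norm despite being carved out by $L^2(\mu)$-orthogonality constraints; once continuity of the embedding $\hilb_V \hookrightarrow L^2(\mu)$ is established, the remaining pieces are routine appeals to RKHS theory. A minor technical point to handle is that the index set $\{f_A : A \subsetneq V,\, f_A \in \hilb_A\}$ defining $\hilb_V^o$ is uncountable, but this causes no difficulty since an arbitrary intersection of closed sets is closed.
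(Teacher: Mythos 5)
Your proposal is correct and follows essentially the same route as the paper: both arguments reduce to showing $\hilb_V^o$ is a closed (equivalently, complete) subspace of the RKHS $\hilb_V$, with the key step in each case being the continuity of the embedding $\hilb_V \hookrightarrow L^2(\mu)$ obtained from the reproducing property and Cauchy--Schwarz, so that the $L^2(\mu)$-orthogonality constraints cut out a closed set. Your treatment is in fact slightly more complete, since you make explicit the countable orthonormal basis and the Parseval identity giving $k_V(x,\tilde{x}) = \inner{\Phi_V(x)}{\Phi_V(\tilde{x})}$, a step the paper's proof leaves implicit after establishing that $\hilb_V^o$ is an RKHS.
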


We prove \cref{prop:kern_exist} in \cref{A:proof_exists}. In \cref{sec:trick_two}, we show how to efficiently evaluate $k_V$ without explicitly computing the feature maps.  In light of \cref{prop:kern_exist}, we introduce \emph{model selection kernels} to rewrite the model in \cref{eq:gen_bayes_model} as a Gaussian process. We then show how this reparametrization allows us to perform inference more efficiently. 
\begin{ndefn} \label{def:model_select_kernel}
A kernel $k_{\theta}$ is a \emph{model selection kernel} if it can be written as $\sum_{V: |V| \leq Q} \theta_V k_V$, where $k_V$ is the reproducing kernel for $\hilb_V^o$ and $k_{\emptyset}(x, \tilde{x}) = 1$ (i.e., the kernel $k_{\emptyset}$ induces the space of constant functions $\hilb_{\emptyset})$.
\end{ndefn}
\begin{nlem} \label{prop:model_select_bayes} 
\edit{Let $\{y^{(n)} \}_{n=1}^N$ be generated according to the model in \cref{eq:gen_bayes_model}. Suppose that $\{\tilde{y}^{(n)} \}_{n=1}^N$ is generated according to the model below: 
\begin{equation}
	\begin{split}
		f &\sim GP(0, k_{\theta}) \\
		\tilde{y}^{(n)} \mid f, x^{(n)} &\sim \normal(f(x^{(n)}), \noisevar), \quad n \in [N],
	\end{split}
\end{equation}
where $k_{\theta}$ is defined in \cref{def:model_select_kernel}. Then,  $\{y^{(n)} \}_{n=1}^N \mid X \disteq \{\tilde{y}^{(n)} \}_{n=1}^N \mid X$ , where $\disteq$ denotes equality in distribution.}
\end{nlem}
Based on the reparametrization in \cref{prop:model_select_bayes} (see \cref{A:reparam_proof} for the proof), $J(f)$ equals the penalty induced by the kernel $k_{\theta}$. Hence, the solution to \cref{eq:pen_ls} reduces to kernel ridge regression (or equivalently equals the posterior predictive mean of the Gaussian process) by \citet[Chapter 2]{gp_book}:
\begin{equation} \label{eq:kern_ridge}
	\hat{f}(x) = \bar{f}_{\theta}(x) \coloneqq \sum_{n=1}^N \hat{\alpha}_n k_{\theta}(x_n, x), \quad \hat{\alpha} =  (K_{\theta} + \noisevar I_{N \times N})^{-1} Y,
\end{equation}
where $Y$ is a column vector with $n$th component $Y_n = y^{(n)}$ and $[K_{\theta}]_{nm} = k_{\theta}(x^{(n)}, x^{(m)})$. 

Unlike the "weight-space view" in \cref{sec:general_prior_class} where $f_V = \Theta_V^T \Phi_V(\cdot)$, it is not clear how to actually recover the effects $f_V$ from the prediction function $\bar{f}_{\theta}$. For general kernels, accessing $f_V$ (and consequently computing the functional ANOVA of  $\bar{f}_{\theta}$) lacks an analytical form. Fortunately, we can easily recover $f_V$ from $\bar{f}_{\theta}$  for model selection kernels: 
\begin{nlem}  \label{lem:kern_select}
Let $k_{\theta}$ be a model selection kernel and $f^{(M)}(x) = \sum_{m=1}^M \alpha_m k_{\theta}(x_m, x)$ for $\alpha_m \in \R$ and $x_m \in \R^p$. Then,  $f^{(M)}(x)  = \sum_{V: |V| \leq Q} f_V$, where $f_V =  \theta_V \sum_{m=1}^M \alpha_m k_{V}(x_m, x) \in \hilb_V^o$.
\end{nlem}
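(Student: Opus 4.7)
The plan is to show this essentially by direct substitution, relying on the definition of the model selection kernel together with the characterization of $k_V$ from \cref{prop:kern_exist}. The statement has two parts: (i) the algebraic decomposition $f^{(M)} = \sum_V f_V$ with $f_V$ of the claimed form, and (ii) the membership $f_V \in \hilb_V^o$. Uniqueness of the decomposition, though not explicitly asserted in the lemma, will follow for free from \cref{thm:unique_fanova} once (ii) is established.

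For (i), I would start by substituting \cref{def:model_select_kernel} into the definition of $f^{(M)}$ and exchanging the finite sums over $m$ and $V$:
\begin{equation*}
f^{(M)}(x) = \sum_{m=1}^M \alpha_m \sum_{V:|V|\le Q} \theta_V k_V(x_m, x) = \sum_{V:|V|\le Q} \theta_V \sum_{m=1}^M \alpha_m k_V(x_m, x).
\end{equation*}
Defining $f_V(x) \coloneqq \theta_V \sum_{m=1}^M \alpha_m k_V(x_m, x)$ gives the desired expression $f^{(M)} = \sum_V f_V$. This step is purely formal.

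For (ii), I would invoke \cref{prop:kern_exist}: $k_V$ is the reproducing kernel of the RKHS $\hilb_V^o$, so for each fixed $x_m \in \R^p$ the slice $k_V(x_m, \cdot)$ lies in $\hilb_V^o$. Since $\hilb_V^o$ is a vector space, the finite linear combination $\theta_V \sum_m \alpha_m k_V(x_m, \cdot)$ is again an element of $\hilb_V^o$, which is exactly $f_V$. This is the only non-purely-algebraic step, and it is the main (mild) obstacle: it relies on knowing that each $k_V$ actually reproduces the orthogonality-constrained space $\hilb_V^o$ rather than the larger $\hilb_V$. That fact is precisely what \cref{prop:kern_exist} provides, so here we simply cite it.

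Finally, to connect the lemma to the functional ANOVA interpretation, I would observe that because each $f_V \in \hilb_V^o$, the decomposition $f^{(M)} = \sum_{V:|V|\le Q} f_V$ satisfies the hypotheses of \cref{def:fanova}, and hence by \cref{thm:unique_fanova} it is \emph{the} (unique, $\mu$-almost-everywhere) functional ANOVA decomposition of $f^{(M)}$. This is what makes the lemma useful in practice: once $\hat\alpha$ is computed via \cref{eq:kern_ridge}, the individual interaction effects $f_V$ of the fitted $\hat f = \bar f_\theta$ are read off immediately, with no further basis-expansion computation, by applying the formula to each kernel component $k_V$ separately.
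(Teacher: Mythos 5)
Your proposal is correct and follows essentially the same route as the paper's proof: expand $k_\theta$ via \cref{def:model_select_kernel}, swap the finite sums over $m$ and $V$, and then note that each slice $k_V(x_m,\cdot)$ lies in $\hilb_V^o$ (since $k_V$ reproduces $\hilb_V^o$ by \cref{prop:kern_exist}) so that $f_V$, being a finite linear combination, does too. The closing remark connecting the decomposition to \cref{thm:unique_fanova} is a harmless addition beyond what the paper's proof states.
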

It follows from \cref{lem:kern_select} that model selection kernels enable easy variable selection; we just need to examine the sparsity pattern of $\theta$. For general kernels,  we would need to search over the entire domain of the fitted regression function (a $p$-dimensional space) to perform variable selection.
\begin{ncor} (nonlinear variable selection)  \label{lem:kern_var_select}
Suppose $f^{(M)}(x)  = \sum_{m=1}^M \alpha_m k_{\theta}(x_m, x)$. Then, $f^{(M)}(x)$ functionally depends on the set of covariates $\{i: \exists V \subset [p], i \in V \text{ s.t. } \theta_V \neq 0\}$.
\end{ncor}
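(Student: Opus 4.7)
This corollary is essentially an immediate consequence of \cref{lem:kern_select}, so the plan is to invoke that lemma and then read off the covariate dependence from the sparsity pattern of $\theta$. The main work is just to track which of the component functions in the functional ANOVA decomposition are forced to vanish.

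First, I would apply \cref{lem:kern_select} with the given $f^{(M)}(x) = \sum_{m=1}^M \alpha_m k_\theta(x_m, x)$ to write
\begin{equation*}
    f^{(M)}(x) \;=\; \sum_{V:\, |V| \leq Q} f_V(x_V), \qquad f_V(x_V) \;=\; \theta_V \sum_{m=1}^M \alpha_m k_V(x_m, x),
\end{equation*}
with each $f_V \in \hilb_V^o$. Because $\hilb_V^o \subset \hilb_V$ and $\hilb_V$ is a space of functions of $x_V$ only, each $f_V$ is (as the notation already suggests) a function that depends only on the coordinates indexed by $V$.

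Next, I would observe that whenever $\theta_V = 0$, the explicit formula above forces $f_V \equiv 0$, so such terms contribute nothing. Hence
\begin{equation*}
    f^{(M)}(x) \;=\; \sum_{V:\, \theta_V \neq 0} f_V(x_V),
\end{equation*}
and this is a function that depends on $x$ only through the coordinates in $\bigcup_{V:\, \theta_V \neq 0} V = \{i : \exists V \subset [p] \text{ with } i \in V \text{ and } \theta_V \neq 0\}$, as claimed.

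I do not anticipate a serious obstacle: the identifiability in \cref{thm:unique_fanova} guarantees that the decomposition produced by \cref{lem:kern_select} really is the functional ANOVA decomposition, so there is no ambiguity in labelling a given piece of $f^{(M)}$ as depending on $x_V$. The only mild subtlety worth spelling out is that the statement bounds the set of covariates on which $f^{(M)}$ \emph{can} functionally depend; whether every such covariate is actually used depends on the particular choice of $\alpha_m$ and $x_m$, and the corollary (correctly) does not assert the converse.
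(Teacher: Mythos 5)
Your proof is correct and matches the paper's treatment: the paper gives no separate argument for this corollary, noting only that it "follows from \cref{lem:kern_select}" by examining the sparsity pattern of $\theta$, which is exactly the decomposition-and-union argument you spell out. Your closing remark that the stated set is really an upper bound on the covariates $f^{(M)}$ can depend on (the converse need not hold for particular $\alpha_m$, $x_m$) is a fair and worthwhile clarification, but it does not change the substance.
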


We have avoided the cost of generating the basis expansion to solve \cref{eq:pen_ls}, but \cref{eq:kern_ridge} is still computationally intractable; $k_{\theta}$ sums over $O(p^Q)$ kernels. Hence, the cost to compute the kernel matrix $K_{\theta}$ and invert $ (K_{\theta} + \lambda I_{N \times N})$  takes $O(N^2p^Q)$ and $O(N^3)$ time, respectively. 

\subsection{Trick 2: A Recursion to Avoid a Combinatorially Large Summation Over Interactions Given Covariate Independence}  \label{sec:trick_two}

We show how to compute $k_{\theta}$ in $O(pQ)$ time (and hence solve  \cref{eq:kern_ridge} in $O(pQN^2 + N^3)$ time) for a particular subset of model selection kernels that we call \emph{SKIM-FA} kernels.  
In what follows, we start by motivating SKIM-FA kernels from the hierarchical Bayesian characterization of model selection kernels in \cref{eq:gen_bayes_model}. We show that, when the covariates are independent, we can compute SKIM-FA kernels much more efficiently using a second kernel trick.  In \cref{sec:dependent_covariates}, we generalize to the non-independent covariate case by building on the procedure described in this section. \\

\noindent \textit{The Sparse Kernel Interaction Model for Functional ANOVA (SKIM-FA).}  \edit{To motivate SKIM-FA, suppose for the moment we know what covariates $f$ functionally depends on. Let the binary vector $\kappa \in \{0, 1\}^p$ encode this knowledge, where $\kappa_j = 0$ if and only if $f$ does not depend on covariate $x_j$. Then, $f_V = 0$ if there there  exists some $j \in V$ such that $\kappa_j = 0$. Equivalently, it suffices to check that $\prod_{j \in V} \kappa_j = 0$. Hence, if we knew $\kappa$, we should adjust our prior variance for $f_V$ from $\theta_V$ to $\theta_V \prod_{j \in V} \kappa_j$. This update to the prior enforces support on only selected covariates, and is the same as fitting the model without the irrelevant covariates included. Since we do not know $\kappa$ in advance, however, we treat $\kappa$ as learnable parameter, and use it to (1) induce sparsity in effects via the product structure above, and (2) perform variable selection. We propose the following prior on $\Theta_V$ in \cref{eq:gen_bayes_model}, which generalizes the prior for linear pairwise interaction models in \citet{kit}}:
\begin{equation} \label{eq:skim_fa_weight}
	\begin{split}
			\Theta_{V} \mid \eta, \kappa &\sim \normal\left(0,  \eta_{|V|}^2 \prod_{i \in V} \kappa_i^2 \cdot I_{B_V \times B_V} \right),
	\end{split}
\end{equation}
for non-negative random vectors $\kappa \in \R^p_+$ and $\eta \in \R^{Q + 1}_+$. We do not restrict $\kappa \in \{0, 1\}^p$ so that we can leverage gradient-based techniques for learning $\kappa$ more easily; see \cref{sec:alg_implement} for details. \\ 


\noindent \textit{SKIM-FA interpretation.}  In \cref{eq:skim_fa_weight}, $\eta_{|V|}^2$ quantifies the overall strength of $|V|$-way interactions by modifying the prior variance of all effects of order $|V|$. \edit{Hence, $\eta_{|V|}$ plays an analogous role to the "global scale" in sparse Bayesian linear models; see, for example, \citet{finnish_prior, horse_prior, kit}}. $\kappa_i$ plays the role of a "variable importance" measure for covariate $x_i$ by affecting the prior variance of all effects involving covariate $x_i$. Hence, if it turns out an effect involving $x_i$ is strong, the posterior of $\kappa_i$ will place high probability at large values (i.e., indicating that covariate $x_i$ has high "importance"). Notice that if $\kappa_i = 0$, then the prior variance of $\Theta_V$ equals 0 whenever $i \in V$. Consequently, all effects involving $x_i$ will equal 0. Hence, we can perform variable selection in $O(p)$ time by just examining the sparsity pattern of $\kappa$ instead of in $O(p^Q)$ time using \cref{lem:kern_var_select}. In \cref{sec:alg_implement}, we show how we select our sparsity prior on $\kappa$.  Finally, note that while we added more structure to the prior, we have not lost modeling flexibility; as long as $\prob(\kappa_i > 0)$ does not equal $0$, then the prior variance of $\Theta_V$ will be non-zero.\footnote{\edit{SKIM-FA considers all interactions of order $Q$ among selected covariates (i.e., does not assume sparsity in interactions between selected covariates).  Specifically, suppose $\{x_1, x_2, x_3\}$ are selected and $Q=2$. Then, SKIM-FA considers all additive and pairwise effects between the first three covariates. If $f_{\{1, 2\}} = 0$, for example, then the posterior of $f_{\{1, 2\}}$ is non-zero because the prior on $\Theta_{\{1, 2\}}$ is drawn from a Gaussian distribution with non-zero variance. However, as $N$ increases, the posterior of $f_{\{1, 2\}}$ will be close to 0.}} Hence, our prior will have support on all of $\hilb_Q$.




\begin{ndefn}  \label{def:skim_kern} A \emph{SKIM-FA kernel} is a  model selection kernel that can be written as 
\begin{equation*}
	k_{\text{SKIM-FA}}(x, \tilde{x}) = \sum_{V: |V| \leq Q} \left[  \eta_{|V|}^2 \prod_{i \in V} \kappa_i^2 \right] k_V(x, \tilde{x}).
\end{equation*}
for some $\kappa \in \R^p$ and $\eta \in \R^{Q + 1}$. 
\end{ndefn}

\begin{nprop} \label{cor:skim_kern_ridge_equiv} For a SKIM-FA kernel, \cref{eqn:top_gen_bayes_model} can be replaced by \cref{eq:skim_fa_weight} in \cref{prop:model_select_bayes}. 
\end{nprop}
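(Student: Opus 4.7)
The plan is to observe that the SKIM-FA prior in \cref{eq:skim_fa_weight} is exactly the general prior in \cref{eqn:top_gen_bayes_model} specialized to the choice $\theta_V = \eta_{|V|}^2 \prod_{i \in V} \kappa_i^2$. Once conditioned on $(\eta, \kappa)$, these $\theta_V$ are just fixed non-negative scalars, so the SKIM-FA hierarchical model is a particular instance of the general model in \cref{eq:gen_bayes_model}, and the proposition should reduce to a direct invocation of \cref{prop:model_select_bayes}.

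First I would verify that, for this choice of $\theta_V$, the kernel $k_{\text{SKIM-FA}}$ of \cref{def:skim_kern} qualifies as a model selection kernel in the sense of \cref{def:model_select_kernel}. By construction, $k_{\text{SKIM-FA}} = \sum_{V:|V|\leq Q} \theta_V k_V$ with $k_V$ the reproducing kernel for $\hilb_V^o$ (whose existence is guaranteed by \cref{prop:kern_exist}) and with $k_{\emptyset} \equiv 1$. The coefficients $\theta_V = \eta_{|V|}^2 \prod_{i \in V} \kappa_i^2$ are non-negative, so $k_{\text{SKIM-FA}}$ is a non-negative linear combination of positive-definite kernels and is itself positive-definite; it therefore fits the template of \cref{def:model_select_kernel}.

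Next I would simply apply \cref{prop:model_select_bayes} with this specific $\theta$. That lemma, conditional on $\theta$, shows the equivalence between the weight-space prior $\Theta_V \sim \normal(0, \theta_V \cdot I_{B_V}\times I_{B_V})$ together with the likelihood in \cref{eqn:bot_gen_bayes_model} and the function-space prior $f \sim \mathrm{GP}(0, k_\theta)$ combined with the same likelihood. Substituting $\theta_V = \eta_{|V|}^2 \prod_{i \in V}\kappa_i^2$ converts the weight-space prior into \cref{eq:skim_fa_weight} and $k_\theta$ into $k_{\text{SKIM-FA}}$, yielding exactly the claim.

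I do not expect any real obstacle, since this is essentially a direct corollary of \cref{prop:model_select_bayes}. The only thing to be careful about is to state the equivalence conditionally on $(\eta, \kappa)$, to emphasize that the hyperprior on $(\eta, \kappa)$ is preserved on both sides of the reparametrization; in particular, the marginal (over $(\eta,\kappa)$) distribution of $y^{(n)} \mid x^{(n)}$ under the SKIM-FA hierarchical model agrees with that under a hierarchical GP model in which, conditionally on $(\eta, \kappa)$, $f$ is a zero-mean Gaussian process with covariance $k_{\text{SKIM-FA}}$.
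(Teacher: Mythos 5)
Your proposal is correct and matches the paper's proof, which is exactly the one-line substitution $\theta_V = \eta_{|V|}^2 \prod_{i \in V} \kappa_i^2$ into \cref{prop:model_select_bayes}. The extra checks you include (positive-definiteness of the resulting kernel, conditioning on $(\eta,\kappa)$) are sensible elaborations of the same argument rather than a different route.
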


\begin{proof}
Set $\theta_V =  \eta_{|V|}^2 \prod_{i \in V} \kappa_i^2$ in \cref{prop:model_select_bayes}.
\end{proof}


\noindent \textit{Efficient evaluation of SKIM-FA kernels.} Recall that $k_i$ is the reproducing kernel for $\hilb_i^o$. Suppose, for the moment, that the reproducing kernel $k_V$ for $\hilb_V^o$ equals $\prod_{i \in V} k_i$; we will shortly show that this condition holds when the covariates are independent. Then, by \cref{thm:skim_fast_form} and \cref{corr:skim_kern_time} below, we can compute SKIM-FA kernels orders of magnitude faster by not explicitly summing over all $O(p^Q)$ interactions in \cref{def:skim_kern}. 
\begin{nthm} \label{thm:skim_fast_form}
Suppose  $k_V(x, \tilde{x}) = \prod_{ i \in V} k_i(x_i, \tilde{x}_i)$. Then,  
\begin{equation} \label{eq:skim_fast_form}
	\begin{split}
	& k_{\mathrm{SKIM-FA}}(x, \tilde{x}) = \sum_{q=1}^Q \eta_q^2 \bar{k}_q(x, \tilde{x}) \quad \mathrm{s.t.} \\
	& \bar{k}_q(x, \tilde{x}) = \frac{1}{q} \sum_{s=1}^q (-1)^{s+1} \bar{k}_{q-s}(x, \tilde{x}) k^s(x, \tilde{x}),  \quad \bar{k}_0(x, \tilde{x}) = 1, \\
	& k^s(x, \tilde{x}) = \sum_{i=1}^p \kappa_i^{2s} [k_i(x_i, \tilde{x}_i)]^s.
	\end{split}
\end{equation}
\end{nthm}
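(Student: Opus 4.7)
The plan is to reduce the identity to Newton's identities for symmetric polynomials. First, I would plug the factorization $k_V(x,\tilde x)=\prod_{i\in V}k_i(x_i,\tilde x_i)$ into \cref{def:skim_kern} and group terms by interaction order $q=|V|$, obtaining
\begin{equation*}
 k_{\mathrm{SKIM\text{-}FA}}(x,\tilde x) \;=\; \sum_{q=0}^{Q}\eta_q^2 \sum_{V:|V|=q}\prod_{i\in V}\kappa_i^{2}\,k_i(x_i,\tilde x_i).
\end{equation*}
This already matches the first display of \cref{eq:skim_fast_form} provided I define
\begin{equation*}
 \bar k_q(x,\tilde x) \;=\; \sum_{V:|V|=q}\prod_{i\in V}\kappa_i^{2}\,k_i(x_i,\tilde x_i),
\end{equation*}
with the convention $\bar k_0=1$. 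So the whole content of the theorem is the recursion for $\bar k_q$.

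The key observation is that if I set $z_i := \kappa_i^{2}\,k_i(x_i,\tilde x_i)$ for $i=1,\dots,p$ (with $x,\tilde x$ held fixed), then $\bar k_q$ is exactly the elementary symmetric polynomial $e_q(z_1,\dots,z_p)$, while the auxiliary kernel $k^s$ in the theorem statement is exactly the power sum $p_s(z_1,\dots,z_p)=\sum_{i=1}^p z_i^s$. Newton's identity states
\begin{equation*}
 q\,e_q \;=\; \sum_{s=1}^{q}(-1)^{s+1} e_{q-s}\,p_s,
\end{equation*}
which translated back to $\bar k_q$ and $k^s$ is precisely the recursion in \cref{eq:skim_fast_form}. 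Dividing by $q$ and reading $x,\tilde x$ pointwise finishes the proof.

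The main step in the argument is therefore invoking Newton's identity, which is standard; I would either cite it or include a one-line inductive sketch (multiplying the generating-function identity $\prod_i(1+tz_i)=\sum_q e_q t^q$ by $\sum_i z_i/(1+tz_i)$ and matching coefficients of $t^{q-1}$). The one place to be careful is the edge case $\bar k_0=1$ and the fact that $\kappa_i^{2s}[k_i(x_i,\tilde x_i)]^s = z_i^s$ for every $s\ge 1$, so the identification of $k^s$ with the power sum is exact without any residual factors. Once the symmetric-function identification is in place, no further computation is needed.
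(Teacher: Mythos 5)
Your proof is correct and follows essentially the same route as the paper: expand $k_{\mathrm{SKIM-FA}}$ using $k_V=\prod_{i\in V}k_i$, group terms by interaction order, and recognize the recursion for $\bar{k}_q$ as the elementary-symmetric-polynomial/power-sum identity applied to $z_i=\kappa_i^2 k_i(x_i,\tilde{x}_i)$. The only cosmetic difference is that the paper cites the recursive kernel formula from Vapnik (1995, pg.\ 199) rather than naming it as Newton's identity; the underlying algebra is identical.
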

As we show in \cref{A:kern_trick_two_proof}, the key to proving \cref{thm:skim_fast_form} is an old recursive kernel formula provided in \citet[pg.\ 199]{vapnik95}. From \cref{thm:skim_fast_form}, we have two corollaries. The first requires a short inductive argument; see \cref{A:recur_proof}. The second follows immediately by setting $Q=2$ into \cref{eq:skim_fast_form}.

\begin{ncor} \label{corr:skim_kern_time}
$k_{\mathrm{SKIM-FA}}(x, \tilde{x})$ takes $O(pQ)$ time to evaluate on a pair of points.
\end{ncor}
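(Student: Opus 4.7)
The plan is to read the cost directly off the recursion in \cref{thm:skim_fast_form} using a single forward pass that caches intermediate quantities, and then to confirm the $O(pQ)$ bound by a one-line induction on $Q$. Nothing here is deep; the only content is to organize the computation so that each $\bar{k}_q$ and each power sum $k^s$ is evaluated exactly once.

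First, I would precompute the $p$ base kernel evaluations $k_i(x_i, \tilde{x}_i)$ and the weights $\kappa_i^2$ in $O(p)$ time. Next, for $s = 1, \ldots, Q$ I would compute
\[
k^s(x, \tilde{x}) = \sum_{i=1}^p \kappa_i^{2s} [k_i(x_i, \tilde{x}_i)]^s,
\]
which costs $O(p)$ per $s$ and $O(pQ)$ in total (the $s$-th powers can either be recomputed at $O(1)$ cost per term or built incrementally by multiplying the previous power). Then I would compute $\bar{k}_1, \bar{k}_2, \ldots, \bar{k}_Q$ in order via the recursion $\bar{k}_q = \frac{1}{q}\sum_{s=1}^q (-1)^{s+1}\bar{k}_{q-s} k^s$. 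Because all of the $\bar{k}_{q-s}$ and $k^s$ are cached scalars, computing $\bar{k}_q$ is $O(q)$, so producing the whole sequence costs $\sum_{q=1}^Q O(q) = O(Q^2)$. The final aggregation $\sum_{q=1}^Q \eta_q^2 \bar{k}_q$ contributes another $O(Q)$.

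Combining, the total runtime is $O(pQ + Q^2)$. Since the paper assumes $Q < p$ throughout (see \cref{sec:construct_space}), we have $Q^2 \le pQ$, and hence the total is $O(pQ)$, as claimed. The short inductive wording alluded to in the excerpt simply packages this: letting $T(Q)$ be the cost of evaluating $k_{\mathrm{SKIM-FA}}$ at maximum order $Q$ (while caching $\{k^s\}_{s\le Q}$ and $\{\bar{k}_q\}_{q\le Q}$), the base case $Q = 1$ is $T(1) = O(p)$, and extending from $Q-1$ to $Q$ requires only the new entries $k^Q$ (cost $O(p)$) and $\bar{k}_Q$ (cost $O(Q) \le O(p)$), giving $T(Q) = T(Q-1) + O(p)$ and thus $T(Q) = O(pQ)$.

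There is no real obstacle; the statement is essentially a complexity accounting. The only point to be careful about is that the recursion is exploited in the right order (increasing $q$, with all prior $\bar{k}_{q-s}$ and $k^s$ cached), so that no quantity is recomputed; otherwise a naive expansion of the recursion would blow up combinatorially and defeat the purpose of \cref{thm:skim_fast_form}.
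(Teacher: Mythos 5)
Your proposal is correct and follows essentially the same route as the paper's proof: cache the power sums $k^1,\dots,k^Q$ in $O(pQ)$ time, evaluate the $\bar{k}_q$ recursion in increasing order for a total of $O(Q^2)$, and invoke $Q < p$ to absorb the $O(Q^2)$ term into $O(pQ)$. The inductive packaging at the end is just a restatement of the same accounting, so there is nothing further to add.
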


\begin{ncor}
Suppose $Q=2$. Then, $k_{\text{SKIM-FA}}(x, \tilde{x})$ equals
\begin{equation}
	0.5 \eta_2^2 \left [\left( \sum_{i=1}^p \kappa_i^2 k_i(x_i, \tilde{x}_i) \right)^2 - \sum_{i=1}^p \kappa_i^4 [k_i(x_i, \tilde{x}_i)]^2  \right] + \eta_1^2 \sum_{i=1}^p \kappa_i^2 k_i(x_i, \tilde{x}_i) + \eta_0^2.
\end{equation}
\end{ncor}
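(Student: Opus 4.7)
The plan is a direct substitution of $Q = 2$ into the recursive formula of \cref{thm:skim_fast_form}, with no cleverness required beyond expanding $\bar{k}_0$, $\bar{k}_1$, and $\bar{k}_2$ one by one. Since the theorem already establishes the general recursive identity, the corollary is really just bookkeeping.

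First I would write
\begin{equation*}
 k_{\text{SKIM-FA}}(x,\tilde{x}) = \eta_0^2 \bar{k}_0(x,\tilde{x}) + \eta_1^2 \bar{k}_1(x,\tilde{x}) + \eta_2^2 \bar{k}_2(x,\tilde{x}),
\end{equation*}
using the convention $\bar{k}_0 \equiv 1$ (which accounts for the constant $\eta_0^2$ term coming from the $V = \emptyset$ summand of \cref{def:skim_kern}). Next I would apply the recursion at $q=1$: the sum over $s$ collapses to the single term $s=1$, giving
\begin{equation*}
\bar{k}_1(x,\tilde{x}) = \bar{k}_0(x,\tilde{x}) \, k^1(x,\tilde{x}) = \sum_{i=1}^p \kappa_i^2 \, k_i(x_i,\tilde{x}_i).
\end{equation*}

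Then I would apply the recursion at $q=2$:
\begin{equation*}
 \bar{k}_2(x,\tilde{x}) = \tfrac{1}{2}\bigl[\bar{k}_1(x,\tilde{x}) k^1(x,\tilde{x}) - \bar{k}_0(x,\tilde{x}) k^2(x,\tilde{x})\bigr] = \tfrac{1}{2}\left[\Bigl(\sum_{i=1}^p \kappa_i^2 k_i(x_i,\tilde{x}_i)\Bigr)^{\!2} - \sum_{i=1}^p \kappa_i^4 [k_i(x_i,\tilde{x}_i)]^2\right],
\end{equation*}
where I have substituted $k^1$ and $k^2$ from the definition in \cref{thm:skim_fast_form}. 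Assembling the three pieces and grouping by powers of $\eta$ yields exactly the claimed expression.

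There is essentially no obstacle here; the only minor subtlety is keeping track of the sign $(-1)^{s+1}$ in the recursion and confirming that $\eta_0^2$ corresponds to the $\bar{k}_0 = 1$ contribution (so that the ``sum from $q=1$'' in the statement of \cref{thm:skim_fast_form} is understood together with the $V = \emptyset$ term of \cref{def:skim_kern}). Everything else is arithmetic.
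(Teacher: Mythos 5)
Your proposal is correct and matches the paper's proof exactly: the paper states that this corollary "follows immediately by setting $Q=2$" into the recursion of \cref{thm:skim_fast_form}, and your expansion of $\bar{k}_1$ and $\bar{k}_2$ is precisely that computation carried out explicitly. Your side remark about the $\eta_0^2$ term arising from the $V=\emptyset$ summand of \cref{def:skim_kern} (since the theorem's sum formally starts at $q=1$) is a correct and worthwhile clarification of a small gap in the paper's own bookkeeping.
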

To see why "trick 2" in \cref{eq:skim_fast_form} indeed acts as another kernel trick, consider the linear interaction case when $\hilb_Q$ consists of interactions of the form $\prod_{i \in V} x_i$. Suppose further that $\kappa$ and $\eta$ are equal to the ones vector. Then, $k_{\mathrm{SKIM-FA}}(x, \tilde{x}) = \sum_{V: |V| \leq Q} \prod_{i \in V} x_i \tilde{x}_i$, which explicitly generates and sums over the interactions $\prod_{i \in V} x_i$. However, it is well known that polynomial kernels implicitly generate interactions, and hence can be used instead to avoid summing over all interactions. The core idea in \cref{eq:skim_fast_form} is similar; the kernel $k^s$, which sums over kernels raised to the $s$ power in \cref{eq:skim_fast_form}, implicitly generates interactions of order equal to $s$ just like a polynomial kernel. However, instead of generating interactions of the form $\prod_{i \in V} x_i$, $k^s$ operates on one-dimensional kernels $k_i$, where its product with $\bar{k}_{q-s}$ generates interactions of the form $\prod_{i \in V} k_i$. Since $k_V = \prod_{i \in V} k_i$ by assumption, these "interactions" of kernels span $\hilb_V^o$ by the product property of kernels. 

To understand when  $k_V = \prod_{i \in V} k_i$ in \cref{thm:skim_fast_form} holds, we provide sufficient conditions based a result from \citet{ss_anova}. We leave our construction of $k_i$ to \cref{A:model_implement_details}. 
\begin{nassum} \label{assum:tens_space} (Tensor product space) For all $V \subset [p]$ and $1 \leq |V| \leq Q$, $\hilb_V =  \bigotimes_{i \in V} \hilb_i$.
\end{nassum}
\begin{nprop} \label{prop:ss_anova_construct} \citep{ss_anova} Suppose $\mu = \mu_{\tens}$, where $\mu_{\tens}(x) \coloneqq \mu_1(x_1) \tens \mu_2(x_2) \cdots \tens \mu_p(x_p)$ and $\mu_j$ is the marginal distribution of $x_j$. Then, under \cref{assum:base_rkhs} and \cref{assum:tens_space}, $k_V = \prod_{i \in V} k_i$.
\end{nprop}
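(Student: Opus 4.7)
The plan is to establish $\hilb_V^o = \bigotimes_{i \in V} \hilb_i^o$ as RKHSs (where $\hilb_i^o$ is the orthogonal complement of the constants in $\hilb_i$ under $\mu_i$), and then invoke the standard fact that the reproducing kernel of a tensor product RKHS is the product of the component reproducing kernels. The product-measure assumption is what makes this decomposition clean, because then orthogonality in $L^2(\mu)$ factors across coordinates.

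First, I would unpack the orthogonality condition \cref{eq:orth_hilb} under $\mu = \mu_{\tens}$. For a singleton $\{i\}$, a function $f_i \in \hilb_i$ lies in $\hilb_i^o$ iff $\inner{f_i}{c}_{\mu_i} = 0$ for every constant $c$, i.e., iff $\int f_i\, d\mu_i = 0$. More generally, for $V \subseteq [p]$ and $f_V \in \hilb_V$, the $\mu$-inner product against any function of $x_A$ for $A \subsetneq V$ factors as an iterated integral by Fubini, so $f_V \in \hilb_V^o$ iff $\int f_V(x_V)\, d\mu_j(x_j) = 0$ for every $j \in V$.

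Next, I would use \cref{assum:tens_space} to write $\hilb_V = \bigotimes_{i \in V} \hilb_i$, and for each $i$ decompose $\hilb_i = \hilb_{\emptyset} \oplus \hilb_i^o$ as an orthogonal sum in $L^2(\mu_i)$ (the constant $1$ belongs to each $\hilb_i$ because the kernels include $k_{\emptyset}$-type content, or more precisely because $\hilb_i^o$ is defined as the complement of constants inside $\hilb_i$; one uses \cref{assum:base_rkhs} to ensure the projection onto constants is well-defined). Distributing tensor products over direct sums yields
\begin{equation*}
\hilb_V = \bigotimes_{i \in V} (\hilb_{\emptyset} \oplus \hilb_i^o) = \bigoplus_{A \subseteq V} \bigotimes_{i \in A} \hilb_i^o,
\end{equation*}
with the convention that $\bigotimes_{i \in \emptyset} \hilb_i^o = \hilb_{\emptyset}$. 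Under the product measure, the summands on the right are mutually $\mu$-orthogonal: a pure tensor $\bigotimes_{i \in A} g_i$ with each $g_i$ mean-zero is $\mu$-orthogonal to a pure tensor $\bigotimes_{i \in A'} h_i$ with $A \neq A'$, because any coordinate $j$ lying in the symmetric difference contributes a factor $\int g_j\, d\mu_j = 0$ or $\int h_j\, d\mu_j = 0$.

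Finally, I would identify the summands $\bigotimes_{i \in A} \hilb_i^o$ with $A \subsetneq V$ as functions of $x_A$ only, hence as elements of $\hilb_A \subsetneq \hilb_V$ in the sense of \cref{eq:general_mod_space}, and thus orthogonal to $\hilb_V^o$ by definition. This forces $\hilb_V^o = \bigotimes_{i \in V} \hilb_i^o$. The conclusion follows from the tensor-product kernel identity: if $k_i$ is the reproducing kernel of $\hilb_i^o$, then $\prod_{i \in V} k_i$ is a positive-definite kernel whose associated RKHS is exactly $\bigotimes_{i \in V} \hilb_i^o$, so by uniqueness of reproducing kernels $k_V = \prod_{i \in V} k_i$. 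The main subtlety I expect is verifying that the tensor-product decomposition of $\hilb_V$ respects the $L^2(\mu)$ inner product (so that the algebraic direct sum is also an orthogonal direct sum) — this is precisely where the product-measure hypothesis does the work, via the Fubini factorization in the first step.
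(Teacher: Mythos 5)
Your argument is correct: establishing $\hilb_V^o = \bigotimes_{i \in V} \hilb_{\{i\}}^o$ via the Fubini factorization of $\inner{\cdot}{\cdot}_{\mu_{\tens}}$, distributing the tensor product over $\hilb_{\{i\}} = \hilb_{\emptyset} \oplus \hilb_{\{i\}}^o$, and then invoking the tensor-product kernel identity is exactly the standard SS-ANOVA construction. The paper offers no proof of this proposition at all --- it is stated with a citation to \citet{ss_anova} --- so there is nothing to compare against beyond noting that your route is the one taken in that reference; the only points worth making explicit in a full write-up are that $1 \in \hilb_{\{i\}}$ (which the paper's construction of $\hilb_{\{i\}}$ guarantees) is needed for the orthogonal splitting $\hilb_{\{i\}} = \hilb_{\emptyset} \oplus \hilb_{\{i\}}^o$, and that the orthogonality killing the $A \subsetneq V$ summands is in the $L^2(\mu)$ inner product, so the identification of the summand $W_V$ with $\hilb_V^o$ holds $\mu$-almost everywhere, consistent with the uniqueness statement in \cref{thm:unique_fanova}.
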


Since any Hilbert space of square-integrable functions of $x_V$ can be approximated arbitrarily well by taking tensor products of  one-dimensional Hilbert spaces by \citet{stone1994, projection_anova}, \cref{assum:tens_space} is a mild assumption. The more problematic assumption is that all covariates are independent (i.e., that $\mu = \mu_{\tens}$).

\section{How to Get Sparsity, Interactions, and Fast Inference When Covariates Are Dependent} \label{sec:dependent_covariates}
Here we extend to the general $\mu$ case. We start by motivating this extension in \cref{sec:issue_prod_measure}. In  \cref{sec:change_basis}, we develop a change-of-basis formula to take the functional ANOVA decomposition of $\bar{f}_{\theta}$ with respect to $\mu_{\tens}$ to one with respect to $\mu$. In this section, we assume $Q=2$. We defer the general $Q$ case to \cref{A:skim_extend}. 

\subsection{Practical Problems From Assuming Independent Covariates} \label{sec:issue_prod_measure}
Since $\mu_{\tens} = \mu_1(x_1) \tens \mu_2(x_2) \cdots \tens \mu_p(x_p)$,  $\mu_{\tens}$ has the same 1D marginal distributions as  $\mu$. Nevertheless, we prove that the functional ANOVA decomposition of an $f \in \hilb$ can be \emph{arbitrarily} different depending on if $\mu_{\tens}$ or $\mu$ is selected. We prove this claim by showing something stronger, namely that the intercepts between two functional ANOVA decompositions can be arbitrarily far apart (see \cref{A:far_intercept_proof} for the proof of \cref{lem:arb_far_intercept}).

\begin{nprop} \label{lem:arb_far_intercept}
For any $\Delta > 0$, there exists a probability measure $\mu$ and square-integrable $f$ such that the relative difference
\begin{equation*}
	\frac{| f_{\emptyset}^{\mu} - f_{\emptyset}^{\mu_{\tens}} |}{ |f_{\emptyset}^{\mu}|} > \Delta, \quad \text{where} \quad f_{\emptyset}^{\mu} =  \E_{\mu} f(X) \quad \text{and} \quad f_{\emptyset}^{\mu_{\tens}} =  \E_{\mu_{\tens}} f(X).
\end{equation*}
\end{nprop}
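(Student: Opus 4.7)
My approach is to exhibit an explicit counterexample in $p = 2$ dimensions. The key observation is that the intercept under any measure is just an expectation of $f$, so the inequality collapses to: produce a joint law $\mu$ whose marginals match $\mu_{\tens}$ but for which $\mathbb{E}_\mu f$ can be made small while $\mathbb{E}_{\mu_{\tens}} f$ stays bounded away from zero. The cleanest way to do this is to concentrate $\mu$ on the anti-diagonal of $\{0,1\}^2$ so that some product of coordinates vanishes $\mu$-almost surely, even though its expectation under the product of the marginals is a fixed positive constant.

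\textbf{Construction.} Let $\mu$ place mass $1/2$ on each of $(0,1)$ and $(1,0)$, supported on the compact set $\mathcal{X}=\{0,1\}^2\subset\R^2$. Each marginal is Bernoulli$(1/2)$, so $\mu_{\tens}$ is uniform on $\{0,1\}^2$. Take the bounded (hence square-integrable) regression function
\begin{equation*}
f(x_1,x_2) \;=\; x_1 x_2 - c
\end{equation*}
for a parameter $c>0$ to be chosen. Since $X_1 X_2 = 0$ $\mu$-almost surely, $f_{\emptyset}^{\mu}= -c$, while under $\mu_{\tens}$ we get $f_{\emptyset}^{\mu_{\tens}}= \mathbb{E}[X_1]\mathbb{E}[X_2]-c = 1/4 - c$. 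Therefore
\begin{equation*}
\frac{\bigl|f_{\emptyset}^{\mu}-f_{\emptyset}^{\mu_{\tens}}\bigr|}{\bigl|f_{\emptyset}^{\mu}\bigr|} \;=\; \frac{1/4}{c},
\end{equation*}
which exceeds $\Delta$ as soon as $c < 1/(4\Delta)$. Choosing, say, $c = 1/(5\Delta)$ finishes the argument.

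\textbf{Where I expect the only wrinkle.} The numerical calculation is trivial; the one thing I would double-check is that my $(\mu,f)$ is compatible with the ambient assumptions the paper invokes elsewhere (compact support and, if desired, absolute continuity with respect to Lebesgue as in Theorem~1). Compactness is immediate. If absolute continuity is required, I would smooth $\mu$ by convolving its two atoms with a narrow compactly supported density (e.g., a small truncated Gaussian) inside $[0,1]^2$; both intercepts are continuous functionals of $\mu$ in the weak topology on this bounded set, so for a sufficiently small bandwidth both $f_{\emptyset}^{\mu}$ and $f_{\emptyset}^{\mu_{\tens}}$ deviate from $-c$ and $1/4-c$ by arbitrarily little, and the strict inequality $1/(4c)>\Delta$ is preserved. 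No step requires more than direct calculation after this, which is why the statement can be made as sharp as ``arbitrarily different.''
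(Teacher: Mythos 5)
Your proof is correct, and it takes a genuinely different route from the paper's. The paper also works in $p=2$, but it uses a bivariate Gaussian $\mu$ with correlation $\rho$ and the function $f(x_1,x_2)=1+(x_1-x_2)^{2k}I(|x_1|\leq M)I(|x_2|\leq M)$: it makes the \emph{numerator} $|f_{\emptyset}^{\mu}-f_{\emptyset}^{\mu_{\tens}}|$ blow up by taking $k$ large (under independence $(x_1-x_2)^{2k}$ has large expectation, of order $c^{2k}$ on a set of fixed probability), while pinning the denominator in $(1,2)$ via a dominated-convergence argument as $\rho\to 1$. You instead fix the numerator at $1/4$ and shrink the \emph{denominator} to $c$, by concentrating $\mu$ on the anti-diagonal of $\{0,1\}^2$ so that $x_1x_2=0$ $\mu$-a.s.\ while $\E_{\mu_{\tens}}[x_1x_2]=1/4$. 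Your computation is a two-line verification with no limiting arguments, which is a real simplification. What the paper's construction buys in exchange is that its $\mu$ is absolutely continuous with respect to Lebesgue measure from the start (so it sits squarely inside the hypotheses of Theorem~1, under which the functional ANOVA and hence the intercept $f_{\emptyset}$ are the paper's actual objects of interest) and it mirrors the correlated-Gaussian setup of the paper's motivating Example~1. You correctly flag this as the only wrinkle, and your smoothing fix is sound: $f$ is continuous and bounded on $[0,1]^2$, the marginals (and hence $\mu_{\tens}$) depend weakly continuously on $\mu$, so both intercepts are perturbed by arbitrarily little and the strict inequality survives. Since the proposition as stated only requires \emph{some} probability measure and a square-integrable $f$, your argument is complete even without the smoothing step.
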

\begin{figure}
        \centering
        \begin{subfigure}[b]{0.29\textwidth}
            \centering
            \includegraphics[width=\textwidth]{./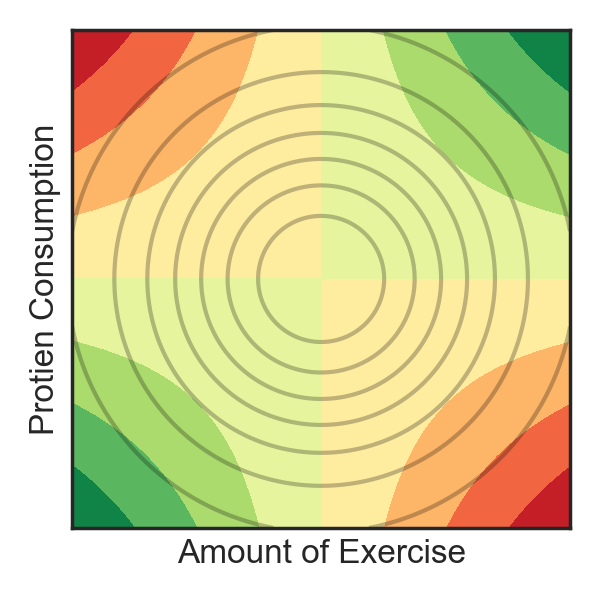}
            \caption{Product measure} \label{fig:toy_prod_measure_issue}  
        \end{subfigure}
        \begin{subfigure}[b]{0.29\textwidth}  
            \centering 
            \includegraphics[width=\textwidth]{./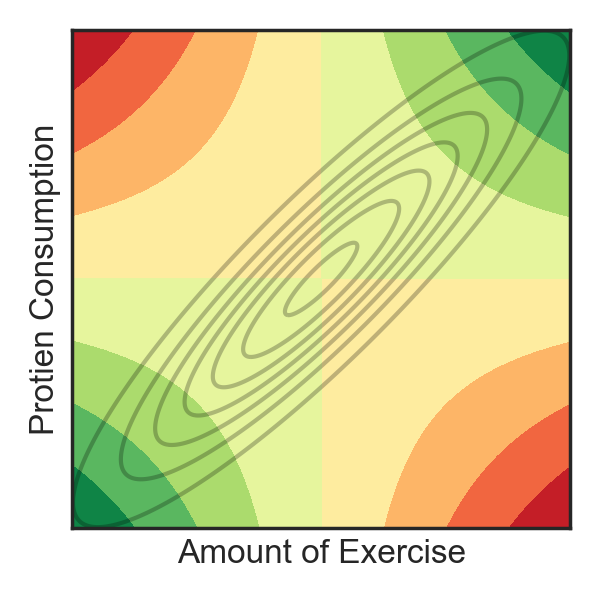}
		 \caption{Covariate measure}  \label{fig:toy_prod_measure_issue_two}  
        \end{subfigure}
        \begin{subfigure}[b]{0.39\textwidth}  
            \centering 
            \includegraphics[width=\textwidth]{./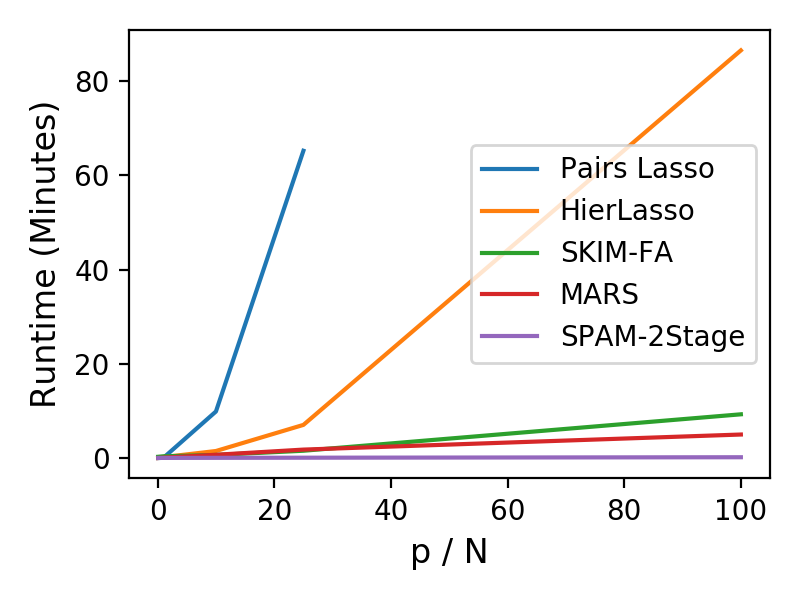}
            \caption{Runtimes on Simulated Data} \label{fig:syn_runtime}
        \end{subfigure}
                	\caption{\emph{Left and middle}: the colors denote the contour plot of the function $f^*(x_1, x_2) =100 x_1 x_2 - 50$. Darker green indicates larger positive values while darker red indicates larger negative values. The gray solid lines in the left and right hand figures represent the density contours of $\mu_{\tens}$ and $\mu$ in \cref{ex:prod_measure_issue}, respectively. \emph{Right}:  runtime comparisons of different methods as $p / N$ increases; see \cref{sec:experiments} for details.}
\end{figure}
To build intuition for \cref{lem:arb_far_intercept}, and motivate using $\mu$ instead of $\mu_{\tens}$ to compute the functional ANOVA decomposition of $\bar{f}_{\theta}$, consider the following toy example.
\begin{nexa} \label{ex:prod_measure_issue}
Suppose $f^*(x_1, x_2) = 100 x_1 x_2 - 50$, where $x_1$ could represent exercise, $x_2$ protein consumption, and $f^*(x_1, x_2)$ the expected percent decrease in body mass index after taking a weight-loss drug for an individual who consumes $x_1$ grams of protein and exercises $x_2$ minutes per week. Suppose exercise and protein consumption are positively correlated and that $\mu$ corresponds to a multivariate Gaussian distribution with mean zero, unit \edit{variance}, and correlation equal to $0.9$. Then, $\mu_{\tens}$ corresponds to a multivariate Gaussian distribution with mean zero, unit covariance but correlation equal to $0$. Suppose we  report the intercept $f_{\emptyset}$ to summarize the typical decrease in body mass index in a population of people who might take the weight-loss drug (e.g., after drug approval). In the functional ANOVA decomposition of $f^*$ with respect to $\mu$, $f_{\emptyset} = \E_{\mu}[f^*] =  \E_{\mu}[f^* + \epsilon] = \E_{\mu}[y] \edit{= 40}$. Hence, this intercept says that, on average, people in this population should decrease their body mass index by \edit{40}\% if they take the drug. If we instead use $\mu_{\tens}$, then $f_{\emptyset} = \E_{\mu_{\tens}}[ f^*] = -50 \neq  \E_{\mu}[f^*]$, suggesting that the drug \emph{increases} body mass index. In the $\mu_{\tens}$ case, it is not clear how to interpret the intercept; $\mu_{\tens}$ averages the regression surface $f^*$ over individuals who rarely occur in the actual population (e.g., those who exercise very frequently but do not consume much protein); see also \cref{fig:toy_prod_measure_issue} and \cref{fig:toy_prod_measure_issue_two} for a visualization. 
\end{nexa}

\subsection{ A Change of Basis to Handle Covariate Dependence} \label{sec:change_basis}
We generalize to the non-independent case through a change-of-basis formula provided in \cref{thm:theo_proj}. Our formula allows us to re-express the effects estimated using the kernel in \cref{sec:trick_two}, which assumes independent covariates, to one with respect to the actual distribution $\mu$. Our idea is similar to ideas in numerical linear algebra; we use one parameterization of a vector space, in our case the space of functions $\hilb_Q$, that makes computation "nice." Once we finish computation in the "nice" parameterization, we use a change-of-basis formula to report the actual quantity we care about in the original parameterization of the space, namely reporting the functional ANOVA decomposition of our fit $\bar{f}_{\theta}$ with respect to $\mu$. 

To make this idea mathematically precise, suppose we can write $\hilb_Q$ using two different parameterizations, one that uses $\mu_{\tens}$ in \cref{eq:orth_hilb} (denoted as $\hilb^o_{V, \mu_{\tens}}$) and the other that uses $\mu$ in \cref{eq:orth_hilb} (denoted as $\hilb^o_{V, \mu}$). Then,
%
%
%
\begin{align}
		\hilb_Q =  \underbrace{\bigoplus_{V: |V| \leq Q} \hilb^o_{V, \mu_{\tens}}}_{\mathrm{(a)}} =  \underbrace{\bigoplus_{V: |V| \leq Q} \hilb^o_{V, \mu}}_{\mathrm{(b)}}. \label{eq:diff_param}
\end{align}
If these equalities indeed hold, then we can use \cref{thm:skim_fast_form} to estimate $f^*$ in $O(pQN^2 + N^3)$ time. Hence, it suffices to show how to take this estimate of $f^*$ and compute its functional ANOVA decomposition with respect to $\mu$ instead of $\mu_{\tens}$ (i.e., move from the parameterization in \cref{eq:diff_param}(a) to the one in \cref{eq:diff_param}(b)). We show how to compute this change-of-basis when all the $\hilb_{\{i\}}$ are finite-dimensional.


%
%

\begin{nassum} \label{assum:finite_dim}
For all $i \in [p]$ there exists a  $B_i < \infty$ and linearly independent set of continuous functions $\{ \phi_{ib} \}_{b=1}^{B_i}$ such that $\hilb_{\{i\}} =\text{span}\{1, \phi_{i1}, \cdots, \phi_{iB_i} \}$ and $\Phi_i = [ \phi_{i1}, \cdots,  \phi_{iB_i}]^T$.
\end{nassum}

\cref{assum:finite_dim} is a mild condition since we can approximate any function arbitrarily well by setting $B_i$ sufficiently large given that $\hilb_{\{i\}}$ is separable; see \citet{projection_anova} for rates of convergence for different finite-basis approximations. Under this assumption, \cref{eq:universal_space} implies that $ \hilb^o_{V, \mu_{\tens}} =  \hilb^o_{V, \mu}$. Hence, a change-of-basis formula exists. We provide the change-of-basis formula for $Q=2$ in \cref{thm:theo_proj}.

\begin{nlem} \label{eq:universal_space}
Under \cref{assum:tens_space}, \cref{assum:finite_dim}, and compactness of the domain of $f$, any $f \in \hilb$ is square-integrable with respect to any probability measure.  
\end{nlem}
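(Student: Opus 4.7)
The plan is to show every $f \in \hilb_Q$ is continuous on the compact domain $\mathcal{X}$, hence bounded, and therefore square-integrable with respect to any probability measure. Since the conclusion follows instantly from boundedness, the real work is to exhibit a finite, continuous basis for $\hilb_Q$.

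First, I would use \cref{assum:tens_space} to write $\hilb_V = \bigotimes_{i \in V} \hilb_{\{i\}}$ for every $V$ with $1 \leq |V| \leq Q$. Combined with \cref{assum:finite_dim}, which supplies the finite, continuous basis $\{1, \phi_{i1}, \dots, \phi_{iB_i}\}$ of each $\hilb_{\{i\}}$, the tensor product $\hilb_V$ is spanned by the finite collection of product functions $\prod_{i \in V} \psi_{ib_i}(x_i)$, where each $\psi_{ib_i}$ is one of the continuous basis elements of $\hilb_{\{i\}}$. Pointwise products of continuous functions are continuous, so every element of $\hilb_V$ is a finite linear combination of continuous functions and is therefore itself continuous.

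Next, because $\hilb_Q = \bigoplus_{V : |V| \leq Q} \hilb_V$ is a finite direct sum of finite-dimensional spaces of continuous functions, any $f \in \hilb_Q$ is continuous on $\mathcal{X}$. By compactness of $\mathcal{X}$, the extreme value theorem gives $M := \sup_{x \in \mathcal{X}} |f(x)| < \infty$. For any probability measure $\nu$ on $\mathcal{X}$, one then has $\int f^2 \, d\nu \leq M^2 < \infty$, which is the desired square-integrability.

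The only place that needs a bit of care is identifying the abstract tensor product $\bigotimes_{i \in V} \hilb_{\{i\}}$ with the space of pointwise products on $\mathcal{X}$, so that "basis of products of continuous functions" really is a basis of continuous functions on $\mathcal{X}$. This identification is standard for function spaces on product domains, and under \cref{assum:finite_dim} it is immediate: the finite set $\{\prod_{i \in V} \psi_{ib_i}\}$ of continuous functions spans $\hilb_V$ and has the correct dimension $\prod_{i \in V} (B_i + 1) - \text{(lower-order terms)}$, so no completion step or delicate analytic argument is needed. This is the main (and only) conceptual step; everything else is bookkeeping.
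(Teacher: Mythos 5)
Your proof is correct and follows essentially the same route as the paper's: both reduce to the finite tensor-product basis guaranteed by \cref{assum:tens_space} and \cref{assum:finite_dim}, use continuity of the basis functions on the compact domain to conclude boundedness of any $f$, and then note that a bounded function is square-integrable against any probability measure. The only cosmetic difference is that you bound $f$ via continuity and the extreme value theorem, whereas the paper bounds each basis function individually and sums the bounds.
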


\begin{nthm} \label{thm:theo_proj}
Suppose $Q=2$ and that \cref{assum:tens_space,assum:finite_dim} hold. For $f \in \hilb$, let
\begin{equation*}
	\begin{split}
		f &= f_{\emptyset}^{\mu_{\tens}} + \sum_{i=1}^p f_{\{ i\}}^{\mu_{\tens}} + \sum_{i,j=1}^p f_{\{ i, j \}}^{\mu_{\tens}} \\
			&= f_{\emptyset}^{\mu} + \sum_{i=1}^p f_{\{ i\}}^{\mu} + \sum_{i,j=1}^p f_{\{ i, j \}}^{\mu}
	\end{split}
\end{equation*}
be the functional ANOVA decompositions of $f$ with respect to $\mu_{\tens}$ and $\mu$, respectively. Then, there exist unique coefficients, $\Psi_{ij}^i \in \R^{1 \times B_i}, \Psi_{ij}^j  \in \R^{1 \times B_j}, \Psi_{ij}^{0} \in \R$, such that
\begin{equation}
	 \begin{split} \label{eq:change_basis}
	 	 f_{\{ i, j \}}^{\mu}(x_i, x_j) &=  f_{\{ i, j \}}^{\mu_{\tens}}(x_i, x_j) - [\Psi_{ij}^i \Phi_i(x_i) +  \Psi_{ij}^j \Phi_j(x_j) + \Psi_{ij}^{0}] \\
	 	f_{\{ i \}}^{\mu}(x_i) &=f_{\{ i \}}^{\mu_{\tens}}(x_i) + \sum_{p \geq j > i} \Psi_{ij}^i \Phi_i(x_i) + \sum_{1 \leq j < i} \Psi_{ji}^{i} \Phi_i(x_i)  \\
	 	f_{\emptyset}^{\mu} &= f_{\emptyset}^{\mu_{\tens}} + \sum_{1 \leq i < j \leq p} \Psi_{ij}^{0},
	 \end{split}
\end{equation}
where $\Phi_i$ denotes the (finite-dimensional) feature map in \cref{def:model_select_kernel}.
\end{nthm}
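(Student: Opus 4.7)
The plan is to construct the $\mu$-orthogonal components by orthogonally projecting each $\mu_{\tens}$-pairwise piece onto the low-order subspace in $L^2(\mu)$, and then redistributing the projections to the lower-order components. More concretely, since by \cref{eq:universal_space} all $f \in \hilb$ are in $L^2(\mu)$, the finite-dimensional subspace
\[
W_{ij} = \mathrm{span}\{1,\ \phi_{i,1},\ldots,\phi_{i,B_i},\ \phi_{j,1},\ldots,\phi_{j,B_j}\}
\subset L^2(\mu)
\]
is well defined. For each pair $i<j$, I take $Q_{ij}$ to be the orthogonal projection of $f_{\{i,j\}}^{\mu_{\tens}}$ onto $W_{ij}$ with respect to $\langle \cdot,\cdot\rangle_\mu$. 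This reduces to solving a linear system of size $1+B_i+B_j$, and yields coefficients so that $Q_{ij} = \Psi_{ij}^i\Phi_i(x_i) + \Psi_{ij}^j\Phi_j(x_j) + \Psi_{ij}^0$. Defining the right-hand sides of \cref{eq:change_basis} as the candidate $\mu$-ANOVA components is then purely algebraic, and the total still telescopes to $f$.

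The substantive work is to verify that the candidate components obey the orthogonality relations defining $\hilb^o_{V,\mu}$, so that by the uniqueness part of \cref{thm:unique_fanova} they must coincide with the true $\mu$-ANOVA decomposition. First, the projection residual $f_{\{i,j\}}^{\mu_{\tens}} - Q_{ij}$ is by construction orthogonal in $L^2(\mu)$ to every element of $W_{ij}$, which is exactly the orthogonality requirement for $\hilb^o_{\{i,j\},\mu}$. Second, I need each updated main effect $f_{\{i\}}^{\mu}$ to have $\mu$-mean zero. For the original piece $f_{\{i\}}^{\mu_{\tens}}$ this is free, because $\mu_{\tens}$ and $\mu$ share the marginal $\mu_i$; for the added correction terms I must enforce this by decomposing $Q_{ij}$ inside the finite-dimensional space $W_{ij}$ via its own functional ANOVA (again using \cref{thm:unique_fanova} in the two-variable, $Q=1$ case) so that $\Psi_{ij}^i\Phi_i(x_i)$ and $\Psi_{ij}^j\Phi_j(x_j)$ are each chosen $\mu$-orthogonal to constants and the constant piece is pushed into $\Psi_{ij}^0$. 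This splitting is unique precisely because applying \cref{thm:unique_fanova} to a function in $\hilb_{\{i\}} + \hilb_{\{j\}}$ gives a unique intercept/main-effect decomposition.

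Uniqueness of the coefficients $\Psi_{ij}^i, \Psi_{ij}^j, \Psi_{ij}^0$ then reduces to linear independence of the basis $\{1,\phi_{i,b},\phi_{j,b}\}$ inside $L^2(\mu)$; this is where the main obstacle lies. The assumption gives linear independence as continuous functions, but to transfer this to $L^2(\mu)$ I will need to invoke absolute continuity of $\mu$ with respect to Lebesgue measure (as already used for \cref{thm:unique_fanova}) together with the fact that, on the compact domain, a nontrivial linear combination of the continuous basis functions cannot vanish $\mu$-a.e.\ without vanishing on an open set and hence everywhere. Given this, the projection step produces unique coefficients, the candidate ANOVA components lie in the right orthogonal subspaces, and \cref{thm:unique_fanova} forces them to equal the true $\mu$-functional ANOVA decomposition, completing the proof.
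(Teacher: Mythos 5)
Your proposal is correct and follows essentially the same route as the paper's proof: orthogonally project each $f_{\{i,j\}}^{\mu_{\tens}}$ onto $\mathrm{span}\{1,\Phi_i,\Phi_j\}$ in $L^2(\mu)$ (the paper invokes the Hilbert Projection Theorem for this), define the candidate components by redistributing the projection, check the telescoping sum and the orthogonality/zero-mean conditions, and conclude by uniqueness of the functional ANOVA. The only differences are cosmetic: your worry about re-centering the correction terms is automatic because the components of $\Phi_i$ are already $\mu_i$-mean-zero by construction, and your remark on transferring linear independence of the basis to $L^2(\mu)$ is slightly more careful than the paper, which simply asserts it.
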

We prove \cref{thm:theo_proj} in \cref{A:theo_proj_proof}. By \cref{corr:skim_kern_time}, we can estimate $f_{\{ i \}}^{\mu_{\tens}}(x_i)$  and $f_{\{ i, j \}}^{\mu_{\tens}}(x_i, x_j)$ in time linear in $p$. Hence, it remains to show how we can actually compute each $\Psi_{ij}^i$ in \cref{thm:theo_proj}. In \cref{sec:alg_implement}, we show how to estimate $\Psi_{ij}^i$ arbitrarily well using a Monte Carlo approach.

\section{Final Algorithm and Implementation Details} \label{sec:alg_implement}
We start by describing and motivating our choice of sparsity prior on $\kappa$. Then, we show how we  fit $\kappa$ and $\eta$ using cross-validation and our computational tools in \cref{sec:two_kern_tricks}. We conclude by showing how we compute $\Psi_{ij}^i$  in \cref{thm:theo_proj} via Monte Carlo. 

\begin{algorithm}[h]
    \caption{Learn SKIM-FA Kernel Hyperparameters and Kernel Ridge Weights}
    \label{algo:learn_skim_hyp}
    \begin{algorithmic}[1] 
         \Procedure{LearnHyperParams}{$M$, $\gamma$, $T$, $u_{\text{init}}$, $c$} \edit{where $M$ equals the cross-validation fold size, $\gamma$ the learning rate, $T$ the number of gradient descent steps, $u_{\text{init}}$ initializer for $\tilde{U}$, and $c$ the value selected in \cref{eq:kappa_def}}

		\State \edit{Initialize $\tilde{U}^{(0)} = (u_{\text{init}}, \cdots, u_{\text{init}}) \in \R^p$ }
		\State  \edit{Initialize $\eta = (1, \cdots, 1) \in \R^{Q+1}$ \Comment{These are the global scale parameters in \cref{eq:skim_fa_weight}}}
		
		
		\State $\sigma^{(0)}_{\text{noise}} = \sqrt{0.5\var(Y)}$ \Comment{Initialize noise variance as half of the response variance}
		
		\State $\tau^{(0)} = (\tilde{U}^{(0)}, \eta^{(0)},\sigma^{(0)}_{\text{noise}})$  \Comment{\edit{Collect all parameters into a single vector}}

            \For{$t \in 1:T$} \Comment{\edit{Make $T$ gradient step updates}}
            		\State \edit{Sample $A \sim \pi$ \Comment{$\pi$ is uniform distribution over all $N-M$ subsets of $[N]$}}
            		\State \edit{Collect $N-M$ covariates in \edit{$X_A \in \R^{(N-M) \times p}$ and responses in $Y_A \in \R^{(N-M)}$}}
				
				\For{$i \in 1:p$}            		
            		
            		\State \edit{$U_i^{(t-1)} = [\tilde{U}_i^{(t-1)}]^2 / \left([\tilde{U}_i^{(t-1)}]^2 + 1\right)$}
            		
            		\State $\kappa_i^{(t-1)} = \max\left( U_i^{(t-1)}- c, 0\right)$ \label{algo1:kappa_val}

            		 \EndFor
            		
            		\State Compute kernel matrix \edit{$K_{\tau}^A \in \R^{(N-M) \times (N-M)}$, where $[K_{\tau}^A]_{ij} = k_{\text{SKIM-FA}}([X_{A}]_i, [X_{A}]_j)$ via \cref{eq:skim_fast_form} and $[X_{A}]_i, [X_{A}]_j \in \R^p$}
            		
            		\State Let $f_A$ equal the solution of \cref{eq:kern_ridge} with $\lambda = [\sigma^{(t)}_{\text{noise}}]^2$, $K = K_{\tau}^A$, $Y = Y_A$
            		
            		\State $L = \frac{1}{M} \sum_{n \in [N] \setminus A} (y^{(n)} - f_A(x^{(n)}))^2$  \Comment{\edit{Cross-validation loss in \cref{eq:exact_leave_out}}}
            		
            		\State $\tau^{(t)} = \tau^{(t-1)}  - \gamma \nabla_{\tau^{(t-1)}} L$ \Comment{\edit{Gradient update to parameters via autodiff library}}
            		            		               
            \EndFor 
            
            \State Compute $\alpha^{(T)}$, the kernel ridge regression weights found by solving \cref{eq:kern_ridge} using all $N$ datapoints with SKIM-FA hyperparameters equal to $\kappa^{(T)}, \eta^{(T)}, \sigma^{(T)}_{\text{noise}}$
                                 
            \State \textbf{return} $\kappa^{(T)}, \eta^{(T)}, \sigma^{(T)}_{\text{noise}}, \alpha^{(T)}$
            
        \EndProcedure
    \end{algorithmic}
\end{algorithm}

\noindent \textit{Our sparsity inducing prior on $\kappa$.} To induce sparsity in $\kappa$ for variable selection, we pick a prior on $\kappa_i$ that equals the mixture of a discrete point mass at 0 and a Uniform(0, 1) random variable. Similar to a \emph{spike-and-slab} prior \citep{george1993variable}, the point mass at $0$ allows us to achieve exact sparsity. Unlike a spike-and-slab prior, however, we construct our prior so that we can still take gradients (and hence use continuous optimization techniques like gradient descent). Our construction involves introducing another random variable $U_i$ so that
\begin{equation} \label{eq:kappa_def}
	\kappa_i = \frac{1}{1 - c} \max(U_i - c, 0), \ U_i \sim \text{Uniform}(0, 1).
\end{equation}
Then, $\prob(\kappa_i = 0) = c$. Otherwise, with probability $1 - c$, $\kappa_i \sim \text{Uniform}(0, 1)$. Hence, $c$ plays a similar role as a prior inclusion probability in a spike-and-slab prior. Since the gradient of $\kappa_i$ equals 0 when $U_i < c$, this zero gradient property is key for inducing sparsity; see our proof of \cref{prop:grad_descent}.\footnote{\edit{At $U_i = c$, the derivative of $\max(U_i - c, 0)$ is undefined. Since $\max(U_i - c, 0)$ is a convex function, the set of all subgradients at $U_i = c$ is $[0, 1]$. We let the subgradient equal 0 at $U_i = c$.}}\\ 

\noindent \textit{Cross-validation loss and optimization.} Given the empirical success of cross-validation and its use in other functional ANOVA methods (e.g., as in \citet{ss_anova, cosso}), we also use cross-validation to fit the SKIM-FA kernel hyperparameters $\kappa$ and $\eta$. Specifically, we would like to pick $U, \eta, \noisevar$ (where $\kappa_i =  \frac{1}{1 - c} \max(U_i - c, 0)$) by minimizing a leave-$M$-out cross validation loss:
\begin{equation} \label{eq:exact_leave_out}
	\begin{split}
	L(U, \eta, \noisevar) &= \frac{1}{{N \choose M}} \sum_{\substack{A: A \subset [N] \\|A| = N - M}} \left[ \frac{1}{M} \sum_{m \in A} (y^{(m)} - \bar{f}_A(x^{(m)}))^2 \right] \\
									&=\E_{A \sim \pi} \left[  \frac{1}{M} \sum_{m \in [N] \setminus A} (y^{(m)} - \bar{f}_A(x^{(m)}))^2 \right],
	\end{split}
\end{equation}
where $\bar{f}_A$ equals the kernel ridge regression fit in \cref{eq:kern_ridge} using the subset of datapoints in $A$ and $\pi$ equals the uniform distribution over all $N-M$ sized subsets $A$ of $[N]$. 

Since the gradient of $L(U, \eta, \noisevar)$ exists when all $U_i \neq c$, and the subgradient when some $U_i = c$, we can minimize \cref{eq:exact_leave_out} using gradient descent.  However, this loss is computationally intensive; we need to solve \cref{eq:kern_ridge} ${N \choose M}$ times in order to take a single gradient descent step. Instead, we approximate \cref{eq:exact_leave_out} by using stochastic gradient descent. Specifically, we randomly draw a single $A$ from $\pi$ in \cref{eq:exact_leave_out} and use the mean-squared prediction error of $\bar{f}_A$ to estimate \cref{eq:exact_leave_out}. Then, this estimate leads to an unbiased estimate of \cref{eq:exact_leave_out}, and hence an unbiased estimate of the gradient of $L(U, \eta, \noisevar)$. We summarize our full procedure in \cref{algo:learn_skim_hyp}, and prove that it leads to sparsity below.\footnote{Note that in \cref{algo:learn_skim_hyp} we do not minimize over $U$ but instead over $\tilde{U}$, where $U_i = \frac{\tilde{U}_i^2}{\tilde{U}_i^2 + 1}$.  Since the range of $ \frac{\tilde{U}_i^2}{\tilde{U}_i^2 + 1}$ equals $(0, 1)$ when $\tilde{U}_i$ varies over all of $\R$, we can optimize $\tilde{U}_i$ over an unconstrained domain. Since we only care about estimating the $\kappa_i$, it does not matter that $U_i$ is not a 1-1 function of $\tilde{U}_i$.}
 
\begin{nprop} \label{prop:grad_descent}
Suppose $\kappa_i^{(t)} = 0$ at some iteration $t$ in \cref{algo:learn_skim_hyp}. Then, for all subsequent iterations $t' \geq t$, $\kappa_i^{(t')} = 0$. 
\end{nprop}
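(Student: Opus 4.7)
The plan is to show that the gradient of the cross-validation loss with respect to $U_i$ vanishes whenever $\kappa_i = 0$, so that the stochastic gradient update in \cref{algo:learn_skim_hyp} leaves $U_i$ (and hence $\kappa_i$) unchanged, after which a one-line induction closes the argument.

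First I would observe that $\kappa_i$ enters the mini-batch loss $\hat{L}(U,\eta,\sigma_{\text{noise}}^2)$ only through the SKIM-FA kernel $k_{\text{SKIM-FA}}$, which depends on the $U_i$'s solely via $\kappa_i = (1-c)^{-1}\max(U_i - c, 0)$. Hence by the chain rule,
\begin{equation*}
\frac{\partial \hat L}{\partial U_i} \;=\; \frac{\partial \hat L}{\partial \kappa_i}\cdot \frac{\partial \kappa_i}{\partial U_i}.
\end{equation*}
On the open set $\{U_i < c\}$ the function $\max(U_i - c, 0)$ is identically zero, so $\partial \kappa_i/\partial U_i = 0$ there, and consequently $\partial \hat L/\partial U_i = 0$ whenever $U_i < c$. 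At the single kink $U_i = c$ the function is not differentiable, but its subdifferential is $[0, (1-c)^{-1}]$; adopting the natural convention used throughout \cref{algo:learn_skim_hyp} of taking the left derivative at the kink yields subgradient $0$, so the effective update direction is again $0$.

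Next I would combine this with the form of the stochastic gradient update. If $\kappa_i^{(t)} = 0$ then by definition $U_i^{(t)} \leq c$. The preceding paragraph shows that the (sub)gradient of the stochastic loss in $U_i$ evaluated at $U^{(t)}$ is zero, so the update rule $U_i^{(t+1)} = U_i^{(t)} - \alpha_t\, \widehat{\partial \hat L/\partial U_i}\big|_{U^{(t)}}$ gives $U_i^{(t+1)} = U_i^{(t)} \leq c$, and hence $\kappa_i^{(t+1)} = (1-c)^{-1}\max(U_i^{(t+1)}-c,0) = 0$. Induction on $t' \geq t$ then yields $\kappa_i^{(t')} = 0$ for all subsequent iterations, which is the claim.

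The only delicate point is the non-smoothness at $U_i = c$: one must verify that the particular subgradient choice made in \cref{algo:learn_skim_hyp} (rather than, say, picking the right derivative) is the one that gives $0$, so that no positive update can move $U_i$ above the threshold. I expect this to be the main ``obstacle,'' but it is really a modeling/bookkeeping point rather than a genuine technical difficulty, since the construction of $\kappa_i$ in \cref{eq:kappa_def} was designed precisely to produce this zero-gradient behavior on the entire half-line $\{U_i \leq c\}$.
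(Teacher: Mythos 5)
Your proof is correct and follows essentially the same route as the paper: chain rule through $\kappa_i = \max(U_i - c, 0)$, the derivative factor vanishing on $\{U_i \le c\}$ (with the zero-subgradient convention at the kink, which is exactly what the paper's indicator $I(U_i^{(t)} > c)$ encodes), hence a fixed point of the update and an induction. The only cosmetic difference is that the paper carries out the computation in the unconstrained parameter $\tilde{U}_i$ with $U_i = \tilde{U}_i^2/(\tilde{U}_i^2+1)$ actually used by \cref{algo:learn_skim_hyp}, which just multiplies your zero gradient by one more finite chain-rule factor.
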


Based on \cref{prop:grad_descent}, we may  view \cref{algo:learn_skim_hyp} as a gradient-based analogue of backward stepwise regression; we start with the model that includes all covariates by initializing all $U_i > c$ (and consequently all $\kappa_i > 0$). Then, we keep pruning off covariates the longer we run gradient descent. We demonstrate empirically in \cref{sec:experiments} that the actual data-generating covariates remain while the irrelevant covariates get pruned off. Once we have found the kernel hyperparameters from \cref{algo:learn_skim_hyp}, \cref{algo:var_select} and \cref{algo:orth_effects} show how to perform variable selection and recover the effects, respectively. Both \cref{algo:var_select} and \cref{algo:orth_effects} follow directly from \cref{lem:kern_var_select} and \cref{lem:kern_select}. \edit{In \cref{A:add_algo_details}, we discuss additional algorithmic details such as how to select $c$ in \cref{algo:learn_skim_hyp}.}

\noindent \textit{Estimating $\Psi_{ij}^i$ for change-of-basis formula in \cref{thm:theo_proj}.} Our change-of-basis formula in  \cref{eq:change_basis}  requires computing $\Psi_{ij}^i$. As we show in our proof of \cref{eq:change_basis}, $\Psi_{ij}^i$ has the interpretation as the basis coefficients associated with an $L^2$ projection. Since this $L^2$ projection requires a high-dimensional integration, we use Monte Carlo to estimate $\Psi_{ij}^i$ in \cref{algo:change_basis}. We prove that our Monte Carlo estimate  converges to the true projection coefficients in \cref{prop:change_basis_correct} below.
\begin{nprop} \label{prop:change_basis_correct}
Let $W \rightarrow \infty$ in \cref{algo:change_basis}. Then, the components returned from \cref{algo:change_basis} converge to the decomposition in \cref{eq:change_basis}.
\end{nprop}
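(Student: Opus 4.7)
The plan is to reduce the claim to a standard law-of-large-numbers argument combined with the continuous mapping theorem. From the proof of Theorem~\ref{thm:theo_proj} (which we take as given), the coefficients $(\Psi_{ij}^i,\Psi_{ij}^j,\Psi_{ij}^{0})$ are characterized uniquely as the basis coefficients of the $L^2(\mu)$-orthogonal projection of the estimated pairwise component $f_{\{i,j\}}^{\mu_{\tens}}$ onto the linear subspace $\mathrm{span}\{1\}\oplus \hilb_{\{i\}}\oplus \hilb_{\{j\}}$. Under \cref{assum:finite_dim} this subspace is finite dimensional with basis $\{1,\phi_{i1},\dots,\phi_{iB_i},\phi_{j1},\dots,\phi_{jB_j}\}$, so the projection is determined by the normal equations $G\,\Psi = b$, where $G$ is the Gram matrix of these basis functions under $\mu$ and $b$ is the vector of $L^2(\mu)$ inner products of $f_{\{i,j\}}^{\mu_{\tens}}$ with each basis function. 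My first step is to make this identification explicit: I would state $\Psi = G^{-1}b$ and note that $G$ is invertible because the basis functions are linearly independent in $L^2(\mu)$ (which follows from \cref{eq:universal_space} and the independence hypothesis in \cref{assum:finite_dim}, since $\mu$ is absolutely continuous).

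Next, I would describe what \cref{algo:change_basis} does operationally: it draws $W$ i.i.d.\ samples $x^{(1)},\dots,x^{(W)}\sim\mu$ and replaces every $L^2(\mu)$ inner product $\inner{g}{h}_\mu = \E_\mu[g(X)h(X)]$ by its empirical average $\tfrac{1}{W}\sum_{w=1}^W g(x^{(w)})h(x^{(w)})$. This yields an empirical Gram matrix $\widehat G_W$ and right-hand side $\widehat b_W$, and the algorithm returns $\widehat\Psi_W = \widehat G_W^{-1}\widehat b_W$. Because all basis functions $\phi_{ib}$ are continuous on a compact domain (\cref{assum:finite_dim}) and $f_{\{i,j\}}^{\mu_{\tens}}$ is likewise continuous (being a finite sum of products of such $\phi$'s), every coordinate of $G$ and $b$ is the expectation of a bounded random variable. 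The strong law of large numbers then gives $\widehat G_W \to G$ and $\widehat b_W \to b$ almost surely as $W\to\infty$.

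The third step is to pass from the convergence of the Gram matrix and right-hand side to convergence of the solution. Since $G$ is invertible and matrix inversion is continuous at invertible matrices, almost surely $\widehat G_W$ is invertible for all sufficiently large $W$ and $\widehat G_W^{-1} \to G^{-1}$. By the continuous mapping theorem, $\widehat\Psi_W = \widehat G_W^{-1}\widehat b_W \to G^{-1}b = \Psi$ almost surely. Plugging the limiting $\Psi$ into the formulas of \cref{thm:theo_proj} yields exactly the decomposition \cref{eq:change_basis}, which is what \cref{algo:change_basis} returns in the limit.

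The only genuinely delicate issue I anticipate is the inversion step: one must be sure $G$ is nonsingular, which is why I would highlight that \cref{assum:finite_dim} requires the $\{\phi_{ib}\}$ to be linearly independent, and that $\mu$ being absolutely continuous ensures this carries over from Lebesgue-a.e.\ linear independence to $L^2(\mu)$ linear independence. Beyond this, the argument is a routine Monte Carlo consistency proof; no quantitative rate is asserted in the statement, so I would not attempt a concentration bound but simply invoke the strong law and continuous mapping as above.
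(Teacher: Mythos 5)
Your proof is correct, and it reaches the same conclusion as the paper's by a more self-contained route. The paper instead casts \cref{algo:change_basis} as ordinary least squares in a random-design linear model with misspecification: writing $y_{ij}^{(w)} = \Psi_{ij}^i \Phi_i(x_i^{(w)}) + \Psi_{ij}^j \Phi_j(x_j^{(w)}) + \Psi_{ij}^{0} + \epsilon_{ij}^{(w)}$ with $\epsilon_{ij}^{(w)} = f^{\perp}_{ij}(x^{(w)}_i, x^{(w)}_j)$ the $L^2(\mu)$-orthogonal residual from the Hilbert Projection Theorem step of \cref{thm:theo_proj}, it then invokes Theorem 11 of \citet{hsu2014random} to conclude consistency of the OLS estimate. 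Your argument --- population normal equations $\Psi = G^{-1}b$, strong law of large numbers applied coordinatewise to the bounded (continuous-on-compact) entries of the empirical Gram matrix and right-hand side, then continuity of matrix inversion at nonsingular matrices --- proves the same consistency statement without outsourcing to an external regression theorem, and it has the virtue of making explicit the one hypothesis both proofs actually need: nonsingularity of the Gram matrix $G$, equivalently $L^2(\mu)$-linear independence of $\{1,\phi_{i1},\dots,\phi_{iB_i},\phi_{j1},\dots,\phi_{jB_j}\}$ (the paper needs the same thing implicitly for $(X_{ij}^T X_{ij})^{-1}$ to stabilize). The only caveat, which applies equally to both arguments, is that linear independence of continuous functions carries over to $L^2(\mu)$-linear independence only when the support of $\mu$ is rich enough; absolute continuity alone does not quite deliver this, but it is the same implicit assumption the paper makes. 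What the paper's route buys in exchange for the citation is access to finite-sample rates from \citet{hsu2014random} if one wanted them; since the proposition asserts only convergence, your law-of-large-numbers argument suffices.
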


 \begin{algorithm}[h]
    \caption{SKIM-FA Variable Selection}
    \label{algo:var_select}
    \begin{algorithmic}[1] 
         \Procedure{VarSelect}{$\kappa$}
            \State \textbf{return} $ \{i: \kappa_i \neq 0 \} $
        \EndProcedure
    \end{algorithmic}
\end{algorithm}

 \begin{algorithm}[h]
    \caption{Estimated functional ANOVA effect $\bar{f}_V$ of $\bar{f}_{\theta}$ with respect to $\mu_{\tens}$}
    \label{algo:orth_effects}
    \begin{algorithmic}[1] 
         \Procedure{OrthEffects}{$V$, $\alpha$, $\kappa$, $\eta$, $\alpha$} 

		\State $\theta_V = \eta_{|V|}^2 \prod_{i \in V} \kappa_i^2$
                      
            \State \textbf{return} $\bar{f}_V(\cdot) = \theta_V \sum_{n=1}^N \alpha_n k_V(x^{(n)}, \cdot)$ 
        \EndProcedure
    \end{algorithmic}
\end{algorithm}

\begin{algorithm}[h]
    \caption{Change of Basis Formula for Finite Dimensional Model Selection Kernels}
    \label{algo:change_basis}
    \begin{algorithmic}[1] 
         \Procedure{ReExpressEffect}{$\alpha$, $k_{\theta}$, $W$, $\mu$}  
         	\State Compute $ f_{\{ i, j \}}^{\mu_{\tens}} , f_{\{ i \}}^{\mu_{\tens}}, f_{\emptyset}^{\mu_{\tens}}$ using \cref{algo:orth_effects}
			\State For $1 \leq w \leq W$ randomly sample $x^{(w)} \iid \mu$    
			\State Compute $X_{ij} = [\mathbf{1} \ \Phi_i(x^{(1)}_i) \cdots \Phi_i(x^{(W)}_i) \ \Phi_j(x^{(1)}_j) \cdots \Phi_j(x^{(W)}_j)]^T$, $\mathbf{1} = (1, \cdots, 1) \in \R^{W}$
			\State Compute  $f_{ij, W}^{\mu_{\tens}} = [f_{\{ i, j \}}^{\mu_{\tens}}(x^{(1)}_i,  x^{(1)}_j) \cdots f_{\{ i, j \}}^{\mu_{\tens}}(x^{(W)}_i,  x^{(W)}_j)]^T$
			\State Compute $[\hat{\Psi}_{ij}^{0} \ \hat{\Psi}_{ij}^i \ \hat{\Psi}_{ij}^j]^T = (X_{ij}^T X_{ij})^{-1} X_{ij}^T f_{ij, W}^{\mu_{\tens}}$ \Comment{Least-squares projection}
			\State Compute $\hat{f}_{\{ i, j \}}^{\mu} =  f_{\{ i, j \}}^{\mu_{\tens}} - [\hat{\Psi}_{ij}^i \Phi_i^T(\cdot) +  \hat{\Psi}_{ij}^j \Phi_j^T(\cdot) + \Psi_{ij}^{0}]$
			\State Compute $\hat{f}_{\{ i \}}^{\mu} =f_{\{ i \}}^{\mu_{\tens}} + \sum_{j > i} \hat{\Psi}_{ij}^i \Phi_i(\cdot) + \sum_{j < i} \hat{\Psi}_{ji}^{i} \Phi_i(\cdot)$ 
			\State Compute $\hat{f}_{\emptyset}^{\mu} = f_{\emptyset}^{\mu_{\tens}} + \sum_{i < j} \hat{\Psi}_{ij}^{0}$
            \State \textbf{return} $\hat{f}_{\{ i, j \}}^{\mu}, \hat{f}_{\{ i \}}^{\mu}, \hat{f}_{\emptyset}^{\mu} $
        \EndProcedure
    \end{algorithmic}
\end{algorithm}

\section{Related Work} \label{sec:related_anova_work}  
 
\edit{Below we compare SKIM-FA to existing functional ANOVA methods, and our previous work for the linear interaction case. We continue our literature review in  \cref{A:lit_review}, where
we also contrast with further methods used for interaction discovery. \\}

\noindent \edit{\textit{Comparison with existing functional ANOVA methods.}}  The foundational work by \citet{ss_anova} used a type of model selection kernel to estimate the functional ANOVA decomposition of $f^*$ with splines. Since the method in \citet{ss_anova} does not lead to sparsity, \citet{supanova, cosso} put an $L_1$ penalty on $\theta$ to achieve sparsity, similar to \emph{multiple kernel learning} techniques \citep{mult_kern_learn}. Adding an $L_1$ penalty does not lead to an analytical solution nor a convex optimization problem. Hence, \citet{supanova, cosso} alternate between minimizing $\theta$ and recomputing $\bar{f}_{\theta}$, similar to \cref{algo:learn_skim_hyp}. Other approaches use cross-validation and gradient descent to iteratively select $\theta$ \citep{ss_anova}. In either case, the computational bottleneck is computing and inverting $ (K_{\theta} + \noisevar I_{N \times N})^{-1} Y$:  $k_{\theta}$ takes $O(p^Q)$ time to compute on a pair of points. Hence, computing and inverting  $K_{\theta} + \noisevar I_{N \times N}$ take $O(p^Q N^2)$ time and $O(N^3)$ time, respectively. 



Many existing functional ANOVA techniques assume that all covariates are independent, i.e., that $\mu$ equals the \emph{product measure}; see, for example, \citet{supanova, cosso, ss_anova, anova_zero_mean_sens}. \citet{hooker_prod} highlighted pathologies that arise when using $\mu_{\tens}$ instead of $\mu$. Specifically, he empirically showed on synthetic and real data that the functional ANOVA decomposition of an $f \in \hilb$ with respect to $\mu$ can be significantly different than the decomposition with respect to $\mu_{\tens}$. This discrepancy arises because $\mu_{\tens}$ can place high probability  in regions where the actual covariate distribution $\mu$ has low probability; see also \cref{sec:issue_prod_measure}.  

Finally, unlike our approach, some functional ANOVA methods  assume sparsity in the effects rather than in the covariates the response depends on; see, for example, \citet{supanova, cosso}). Recall from the discussion and example in \cref{sec:prelims} that sparsity in the covariates is useful for interpretability and downstream applications. But sparsity in the effects need not imply sparsity in the covariates. For example, suppose $Q=2$. Then, there are on the order of $p^2$ interaction effects. A method that selects $p$ non-zero effects might be considered sparse in the effects since $p \ll p^2$. But the selected effects could correspond to $p$ (or nearly $p$) selected covariates, which would not reduce the number of covariates. \\


\noindent \edit{\textit{Comparison with \citet{kit}.} There are five main differences between this work and \citet{kit}:  the method in \citet{kit} (1) assumes linear interaction effects, (2) only considers pairwise interactions, (3) assumes strong-hierarchy (namely that interactions only occur among selected main effects), (4) does not necessarily correspond to an ANOVA decomposition, and (5) does not induce exact sparsity. See \cref{A:lit_review} for further discussion.}

\section{Experiments} \label{sec:experiments}  
\noindent \textit{Summary of experimental results.} In this section, we compare our inference methods in  \cref{sec:alg_implement} against existing procedures in terms of variable selection and estimation performance. We find that when the interaction effects are strong or comparable to the strength of the additive effects, our method outperforms existing methods in terms of variable selection and estimation performance. When the interaction effects are weak our method does not (uniformly) have the best performance but still performs well relative to many of the other methods.\footnote{All results can be re-generated using the data and code provided in \url{https://github.com/agrawalraj/skimfapaper}.}

There are two immediate challenges with our empirical evaluation. The first is that existing methods estimate the functional ANOVA decomposition assuming all covariates are independent (or sometimes do not even specify the measure). Hence, we start our evaluation by assuming the covariates are independent so that we can compare against existing methods in \cref{sec:syn_exp} and \cref{sec:real_exp}. Then, in \cref{sec:anova_sens}, we show why the assumption of independent covariates is problematic to demonstrate the practical utility of \cref{algo:change_basis}. 

The second challenge concerns our performance metrics (detailed in \cref{sec:benchmarks} and \cref{sec:eval_metrics}), which require knowing the ground truth effects. Since we do not know the ground truth effects in  real data, we start in \cref{sec:syn_exp} by evaluating each method on simulated data so that we have ground truth effects. To compare methods on real data, we use a similar evaluation procedure as in \citet{kit} to construct a synthetic ground truth for benchmarking (see \cref{sec:real_exp} for details).

 \subsection{Benchmark Methods} \label{sec:benchmarks}
We compare our method against other methods used to  model high-dimensional data and interactions. We focus on the $Q=2$ case throughout since (1) existing methods typically only work for the pairwise interaction case and (2) higher-order interactions are often difficult to interpret and estimate. Even when $Q=2$, the functional ANOVA methods outlined in \cref{sec:benchmarks}  take $O(p^2N^2 + N^3)$ time, making them computationally intractable for even moderate $p$ and $N$ settings. Instead, we focus on methods that can model interactions and actually scale to moderate-to-large $p$ and $N$ settings. These methods include approximate "two-stage" and greedy forward-stage regression methods, and linear interaction models. We detail these approaches in more depth in \cref{A:lit_review}. The list below summarizes the candidate methods (and software implementations) that we select from each category for empirical evaluation. In \cref{A:model_implement_details} and \cref{A:add_algo_details} we detail the hyperparameters used to fit SKIM-FA. \\

 \noindent \textit{SPAM-2Stage}: we perform variable selection by fitting a sparse additive model (SpAM)  \citep{spam} to the data. We use the \texttt{sam} package in \texttt{R}. Since \texttt{sam}  does not provide a default way to select the $L_1$ regularization strength, we use 5-fold cross-validation. For estimation, we generate all main and interaction effects among the subset of covariates selected by SpAM. We calculate these effects by taking pairwise products of  univariate basis functions generated from a natural cubic spline basis with 5 total knots; see \cref{A:model_implement_details} for details. We estimate the basis coefficients (and hence effects) using ridge regression, where again we use 5-fold cross-validation to pick the $L_2$ regularization strength. \\
  
 	\noindent \textit{Multivariate Additive Regression Splines (MARS)}: we use the \texttt{python} implementation of MARS \citep{mars} in \texttt{py-earth}. We consider two functional ANOVA decompositions of the fitted regression function: (1) \textit{MARS-Vanilla} and (2) \textit{MARS-EMP}. For MARS-Vanilla, the main effect of each covariate equals the sum of all selected univariate basis functions of that covariate (i.e., after the pruning step of MARS). Similarly, each pairwise effect equals the sum of all selected bivariate basis functions of those two covariates. This is the functional ANOVA decomposition originally proposed in \citet{mars} and the one actually implemented in existing MARS software packages.  It is unclear, however, what measure this functional ANOVA decomposition is taken with respect to. To the best of our knowledge, there currently does not exist a procedure to perform the functional ANOVA decomposition of MARS with respect to the empirical distribution of the covariates. We describe how to perform such a decomposition via \textit{MARS-EMP}, which assumes the covariates are jointly independent. This method could be of independent interest and is outlined in \cref{A:mars_anova}. \\

 	\noindent \textit{Hierarchical Lasso (HierLasso)}: we use the implementation of HierLasso \citep{lim_heirch_lasso} in the authors' \texttt{R} package \texttt{glinternet}. Since \citet{lim_heirch_lasso} use cross-validation to pick the $L_1$ regularization strength, we similarly use 5-fold cross-validation. \\
 	
 	\noindent \textit{Pairs Lasso}: we fit the Lasso on the expanded set of features $\{x_i \}_{i=1}^p$ and $\{x_ix_j\}_{i,j=1}^p$. We fit the Lasso using the \texttt{python} package \texttt{sklearn}, and use 5-fold cross-validation to select the $L_1$ regularization strength.

 \subsection{Evaluation Metrics} \label{sec:eval_metrics}
 
  \noindent \textit{Variable selection evaluation metrics.} We consider both the power to select correct covariates and avoid incorrect ones. \emph{\# Correct Selected} counts the number  of covariates correctly selected by the method. Higher is better. \emph{\# Wrong Selected}  counts the number of covariates incorrectly selected by the method (i.e., Type I error). Lower is better. \emph{\# Correct Not Selected} counts the number of covariates that belong to the true model but were not selected by the method (i.e., Type II error). Lower is better. \\

\noindent \textit{Estimation evaluation metrics.} We evaluate how well a method estimates main effects and interaction effects. Instead of  looking only at the total mean squared estimation error, we break this error into multiple buckets to understand what bucket drives the majority of the error. Lower is better for all of the following quantities. \emph{Correct Selected SSE (Main)} takes the sum of squared errors (SSE) between each estimated main effect component and true main effect component. This sum equals $\sum_{i \in S_1} \| f_i^* - \hat{f}_i \|_{\mu}^2$, where $S_1$ is the set of correctly identified main effects, $\hat{f}_i $ is the estimated main effect, and $f_i^*$ is the true main effect. \emph{Correct Not Selected SSE (Main)} takes the sum of squared norms of main effects not selected. This sum equals $\sum_{i \in S_2} \| f_i^* \|_{\mu}^2$, where $S_2$ is the set of correct  main effects not selected. \emph{Wrong Selected SSE (Main)} takes the sum of  squared norms of  main effect components incorrectly selected. This sum equals $\sum_{i \in S_3} \| \hat{f}_i \|_{\mu}^2$, where $S_3$ is the set of incorrect  main effects selected. \emph{Correct Selected SSE (Pair)}, \emph{Correct Not Selected SSE (Pair)}, and \emph{Wrong Selected SSE (Pair)} are the same as the analogous main effect metrics but instead considers interaction effects. \emph{Total SSE} equals the sum of the 6 buckets above and \emph{Total SSE / Signal Variance} equals the relative estimation error, i.e., Total SSE divided by the true signal variance.

 \subsection{Synthetic Data Evaluation} \label{sec:syn_exp}

  %

%


 We randomly generate covariates and responses as follows. For the covariates, we draw each data point and covariate dimension $x^{(n)}_i \iid \text{Uniform}([-1, 1])$. Since $[-1, 1]$ is compact, \cref{thm:unique_fanova} ensures that the functional ANOVA decomposition is unique. We let $y$ depend on the first 5 covariates; the remaining $p-5$ covariates are taken as noise covariates that we do not want to select. To generate responses reflective of what we might expect in real data, we consider the 5 trends shown in \cref{fig:ex_trends}: linear, sine, logistic, quadratic, and exponential. We let the main effects equal the sum of these 5 trends, where the $i$th trend is applied to covariate $i$. For the interactions between the first 5 covariates, we consider all pairwise products of the 5 trends above, resulting in 10 total interactions. 
We select a noise variance such that the $R^2 = \frac{\sigma^2_{\text{signal}}}{\sigma^2_{\text{signal}} + \sigma^2_{\text{noise}}} = 0.8$, where $\sigma^2_{\text{signal}} = \inner{f^*}{f^*}_{\mu}$. We further decompose the signal variance in terms of the total variance explained by main effects and interactions. Similar to the empirical evaluations in \citet{lim_heirch_lasso}, we consider the following three settings:
 \begin{itemize}
  	\item \textbf{Weak Main Effects:} each main effect and pairwise effect has 0.01 * 1/5 and 0.99 * (1 / 10) of the total signal variance, respectively. Hence, the total main effect and pairwise effect variances equal 1\% and 99\% of the total signal variance, respectively.
 	
 	\item \textbf{Equal Main and Interaction Effects:} each main effect and pairwise effect has 0.5 * 1/5 and 0.5 * (1 / 10) of the total  signal variance, respectively. Hence, the total main effect variance equals the total pairwise signal variance.
 	
 		\item \textbf{Main Effects Only:} each of the 5 main effects has 1/5th of the total signal variance, and each pairwise effect has 0 signal variance (i.e., no pairwise interactions).
 	 	
 \end{itemize}

To test the impact of increasing dimensionality on inference quality, we consider $p \in \{250, 500, 1000\}$  and keep $N=1000$ fixed for each setting. For estimation, we compare only the nonlinear methods; linear methods will artificially perform poorly since some of the effects are highly nonlinear by construction. Evaluating estimation performance is trickier than evaluating selection performance since the functional ANOVA decomposition depends on the choice of measure. Unless otherwise stated, the target of inference is finding the functional ANOVA decomposition of $f^*$ with respect to $\mu$ (the joint distribution of the covariates). 

\begin{Table}[t]
\centering
\renewcommand{\arraystretch}{.667}
\resizebox{\textwidth}{!}{%
\begin{tabular}{ccccc}
\hline
\textbf{Method} & \textbf{Setting} & \textbf{\# Correct Selected} & \textbf{\# Wrong Selected} & \textbf{\# Correct Not Selected} \\ \hline

\textbf{SKIM-FA} & Weak Main & 5 & 9 & 0 \\
MARS & Weak Main & 5 & 75 & 0 \\
SPAM-2Stage & Weak Main & 1 & 41 & 4 \\
HierLasso & Weak Main & 5 & 120 & 0 \\
Pairs Lasso & Weak Main & 5 & 144 & 0 \\ \hline

\textbf{SKIM-FA} & Equal & 5 & 0 & 0 \\
MARS & Equal & 5 & 71 & 0 \\
SPAM-2Stage & Equal & 5 & 15 & 0 \\
HierLasso & Equal & 5 & 40 & 0 \\
Pairs Lasso & Equal & 5 & 213 & 0 \\  \hline

\textbf{SKIM-FA} & Main-Only & 3 & 0 & 2 \\
MARS & Main-Only & 5 & 70 & 0 \\ 
SPAM-2Stage & Main-Only & 5 & 15 & 0 \\
HierLasso & Main-Only & 4 & 5 & 1 \\
Pairs Lasso & Main-Only & 4 & 6 & 1 \\ \hline

\end{tabular}
}
\caption{Synthetic Data Variable Selection Performance Results for $p=1000$. The method with the fewest number of incorrect covariates selected is bolded.}
\label{tab:1k_var_select}
\end{Table}



\begin{Table}[t]
\centering
\renewcommand{\arraystretch}{.75}
\resizebox{\textwidth}{!}{%
\begin{tabular}{cccccccccc}
\hline
\textbf{Method} & \textbf{Setting} & \textbf{\begin{tabular}[c]{@{}c@{}}Correct \\ Selected \\ SSE \\ (Main)\end{tabular}} & \textbf{\begin{tabular}[c]{@{}c@{}}Correct \\ Not \\ Selected \\ SSE \\ (Main)\end{tabular}} & \textbf{\begin{tabular}[c]{@{}c@{}}Wrong\\ Selected \\ SSE\\ (Main)\end{tabular}} & \textbf{\begin{tabular}[c]{@{}c@{}}Correct\\ Selected \\ SSE \\ (Pair)\end{tabular}} & \textbf{\begin{tabular}[c]{@{}c@{}}Correct\\ Not \\ Selected \\ SSE\\ (Pair)\end{tabular}} & \textbf{\begin{tabular}[c]{@{}c@{}}Wrong\\ Selected \\ SSE\\ (Pair)\end{tabular}} & \textbf{\begin{tabular}[c]{@{}c@{}}Total \\ SSE\end{tabular}} & \textbf{\begin{tabular}[c]{@{}c@{}}Total SSE\\ ÷ \\ Signal \\ Variance\end{tabular}} \\ \hline

\textbf{SKIM-FA} & Weak Main & 0.72 & 0 & 1.37 & 0.61 & 0 & 0.63 & 3.33 & 0.17 \\
SPAM-2Stage & Weak Main & 0.16 & 0.2 & 6.69 & 0 & 18.33 & 0.31 & 25.69 & 1.28 \\
MARS-EMP & Weak Main & 0.67 & 0 & 5.86 & 3.37 & 0 & 5.63 & 15.52 & 0.78 \\
MARS-VANILLA & Weak Main & 23.62 & 0 & 3.18 & 23.16 & 0 & 15.43 & 65.39 & 3.27 \\ \hline

\textbf{SKIM-FA} & Equal & 1.54 & 0 & 0 & 0.29 & 0 & 0 & 1.82 & 0.09 \\
SPAM-2Stage & Equal & 1.67 & 0 & 1.07 & 0.41 & 0 & 2.16 & 5.31 & 0.27 \\
MARS-EMP & Equal & 0.61 & 0 & 3.84 & 1.7 & 0 & 2.52 & 8.67 & 0.43 \\
MARS-VANILLA & Equal & 454.88 & 0 & 3.16 & 21.46 & 0 & 13.22 & 492.72 & 24.64 \\ \hline

SKIM-FA & Main Only & 2.7 & 8.1 & 0 & 0 & 0 & 0.24 & 11.03 & 0.55 \\
\textbf{SPAM-2Stage} & Main Only & 2.67 & 0 & 0.78 & 0 & 0 & 0.02 & 3.46 & 0.17 \\
MARS-EMP & Main Only & 0.45 & 0 & 2.68 & 0 & 0 & 2.39 & 5.51 & 0.28 \\
MARS-VANILLA & Main Only & 16.14 & 0 & 1.56 & 0 & 0 & 10.33 & 28.02 & 1.4 \\ \hline

\end{tabular}%
}
\caption{Synthetic Data Estimation Performance Results for $p=1000$. The method with the smallest total SSE is bolded.}
\label{tab:1k_est_perform}
\end{Table}

We summarize the variable selection and estimation performances of each method for $p = 1000$ in \cref{tab:1k_var_select} and \cref{tab:1k_est_perform}, respectively; see \cref{A:add_exp} for model performance results for all choices of $p$. As we discuss below, SKIM-FA outperforms all of the other methods (in terms of both variable selection and estimation) in the Weak Main Effects and the Equal Main and Interaction settings. For the Main Effects Only setting, SKIM-FA selects the fewest number of incorrect covariates. Since SKIM-FA does not select two of the correct covariates in this setting, however, its estimation performance is worse than some of the other benchmark methods. \\

\noindent \textit{Weak main effects setting.} In the setting of weak main effects, Spam-2Stage performs worse than the other methods in terms of the power to select correct covariates; Spam-2Stage only selects one correct covariate for $p = 500$ and $p=1000$. This poor variable selection is expected since the signal is locked away in the interactions but SpAM assumes additive effects. In particular, only 1\% of the variance is explained by additive effects (even though additive \emph{and} interaction effects explain 80\% of the variance in the response). MARS, HierLasso, and Pairs Lasso detect all 5 correct covariates but they all pick up many more incorrect covariates relative to SKIM-FA.

MARS selects many incorrect covariates because it can only form an interaction between two covariates if at least one of the covariates has an additive effect (similar to Spam2Stage). In the extreme case of no additive effects, for example, MARS randomly selects covariates to have additive effects. By random chance, MARS will eventually select a correct covariate (i.e., one the response actually depends on) to have an additive effect. Since this covariate has an interaction effect, in the next step MARS will (likely) select the correct interaction effect. Hence, MARS will need to select many incorrect covariates as additive effects before identifying the true interactions.  



In terms of estimation performance, SKIM-FA has the smallest total mean-squared estimation error. Since Spam-2Stage only considers interactions between covariates selected by SpAM, its poor estimation performance is driven by not selecting many of the correct covariates. MARS-VANILLA performs a functional ANOVA decomposition with respect to an unspecified measure. Hence, it is unclear how to interpret its main and interaction effects. One might think (and truthfully what we initially thought) that MARS-VANILLA would still return a functional decomposition close to one with respect to the actual covariate distribution. \cref{tab:est_main_weak}  shows that this intuition is incorrect; the relative estimation error of MARS-VANILLA always exceeds 1! This poor estimation performance stems from not specifying the measure (and hence the target of inference), not MARS's ability in finding a model with good predictive performance. In particular, MARS-EMP, which produces the \emph{exact same predictions} as MARS-VANILLA, yields better  performance because it re-orthogonalizes the fit with respect to the covariate distribution. \\

\noindent \textit{Equal main and interaction effects setting.} In this setting, all methods are able to recover all 5 true covariates. 
For both estimation and variable selection, SKIM-FA performs best. \\
  
\noindent \textit{Main effects only setting.} Each method selects the majority of correct covariates. However, some methods -- namely Pairs Lasso and HierLasso -- have a systematic bias; for all choices of $p$, they never select covariate 3 (the quadratic trend) since a quadratic trend has a weak linear correlation. Since the other methods can model nonlinear relationships, they can pick up this trend. Hence, they have better statistical power to detect correct covariates, improving variable selection performance.
In terms of Type I error, some methods select incorrect covariates much more frequently. For example, MARS consistently selects over 50 incorrect covariates for all choices of $p$. A potential reason for this poor performance is that MARS induces sparsity through a greedy pruning step instead of an actual sparsity inducing penalty as in the other methods. \\

\noindent \textit{Runtime comparisons.} We conclude this section by comparing each method in term of runtime in the high-dimensional setting. The two Lasso methods  take $O(p^2N)$ time while the remaining methods depend only linearly on $p$. When $p > N$ and $\omega = p / N$, our method takes $O(\omega N^3)$ while the two Lasso based methods take $O(\omega^2N^3)$ time. Hence, for higher-dimensional problems, our method will become much faster relative to the Lasso methods. For example, in genome-wide association studies, data sets can have $N$ on the order of $10^3$ and $p$ on the order of $10^7$ \citep{1000_genomes}. Hence, $\omega = 10^4$, which corresponds to a potential $10^4$ computational speedup factor. In \cref{fig:syn_runtime}, we compare the runtimes of each method as we vary $p / N$ on simulated data. We keep $N$ fixed at $100$ and vary $p$ from $10$ to $10^4$.  As expected, as  $p / N$ increases, our method yields substantial computational savings relative to Pairs Lasso and HierLasso. Relative to Spam-2Stage and MARS, our method does not yield better computational scaling. However, based on our synthetic evaluation above (and real data evaluation in \cref{sec:real_exp}), we have better statistical performance.

%

%

\subsection{Evaluation on Real Data: \edit{Bike Sharing \& Concrete Compressive Strength Data sets}}  \label{sec:real_exp}
Evaluating the methods in terms of variable selection and estimation quality is challenging because we typically do not have ground truth main and interaction effects for high-dimensional (real) data. Similar to the evaluation procedure in \citet{kit}, we instead take a low-dimensional data set where $N$ is large and $p$ is small.  We make it high-dimensional by adding synthetic random noise covariates. These two choices have several purposes. First, by fitting a regression function on the original low-dimensional data set, standard $N^{-1/2}$ statistical convergence rates apply. Hence, for large $N$, a maximum-likelihood estimate of the regression function will be close to the true regression function, creating a (near) ground truth for estimation evaluation. For variable selection, the random noise covariates create a ``synthetic control;'' if a method selects any of the random noise covariates as a main or interaction effect, we know the method selected an incorrect covariate.

\begin{Table}[]
\centering
\renewcommand{\arraystretch}{.667}
\begin{tabular}{@{}cccc@{}}
\toprule
\textbf{Method} & \textbf{\# Covariates} & \textbf{\# Original Selected} & \textbf{\# Wrong Selected} \\ \midrule
\textbf{SKIM-FA}         & 1000                   & 3                             & 0                          \\
HierLasso       & 1000                   & 3                             & 5                          \\
SPAM-2Stage     & 1000                   & 3                             & 8                          \\
Pairs Lasso     & 1000                   & 3                             & 76                         \\
MARS            & 1000                   & 3                             & 119                        \\ \bottomrule
\end{tabular}
\caption{Variable Selection Performance for the Bike Sharing Data Set.}
\label{tab:bike_var_perform}
\end{Table}


Based on these ideas, we consider the popular (low-dimensional) Bike Sharing data set, which we downloaded from the UCI Machine Learning Repository. This data set contains 17,389 datapoints and $13$ covariates. We consider 4 continuous variables (hour, air temperature, humidity, windspeed) and use the total number of bikes rented as the response.  We standardize the response by subtracting the mean and dividing by the standard deviation, and min-max standardize the covariates so that each covariate belongs to $[0, 1]$.  For the proxy ground truth, we fit a pairwise interaction model consisting of all 4 main effects and 6 possible pairwise interactions.

\begin{Table}[]
\centering
\renewcommand{\arraystretch}{.667}
\resizebox{\textwidth}{!}{%
\begin{tabular}{@{}ccccccccc@{}}
\toprule
\textbf{Method} & \textbf{\# Noise} & \textbf{\begin{tabular}[c]{@{}c@{}}Correct \\ Selected \\ SSE \\ (Main)\end{tabular}} & \textbf{\begin{tabular}[c]{@{}c@{}}Correct \\ Not \\ Selected \\ SSE \\ (Main)\end{tabular}} & \textbf{\begin{tabular}[c]{@{}c@{}}Wrong\\ Selected \\ SSE\\ (Main)\end{tabular}} & \textbf{\begin{tabular}[c]{@{}c@{}}Correct\\ Selected \\ SSE \\ (Pair)\end{tabular}} & \textbf{\begin{tabular}[c]{@{}c@{}}Correct\\ Not \\ Selected \\ SSE\\ (Pair)\end{tabular}} & \textbf{\begin{tabular}[c]{@{}c@{}}Wrong\\ Selected \\ SSE\\ (Pair)\end{tabular}} & \textbf{\begin{tabular}[c]{@{}c@{}}Total \\ SSE\end{tabular}} \\ \midrule
\textbf{SKIM-FA} & 1000 & 0.145 & 0.002 & 0 & 0.107 & 0.009 & 0 & 0.263 \\
SPAM-2Stage & 1000 & 0.149 & 0.002 & 0.027 & 0.081 & 0.009 & 0.000 & 0.269 \\
MARS-EMP & 1000 & 0.214 & 0.002 & 0.485 & 0.054 & 0.026 & 0.245 & 1.026 \\
MARS-Vanilla & 1000 & 6.556 & 0.002 & 0.796 & 0.947 & 0.026 & 1.882 & 10.209 \\ \bottomrule
\end{tabular}
}
\caption{Estimation Performance for the Bike Sharing Data Set.}
\label{tab:bike_est_perform}
\end{Table}

Similar to our synthetic evaluation, we randomly subsample a total of $N=10^3$ datapoints, and then train each benchmark method on this subsampled data set. To make the inference task high-dimensional we inject $p_{\text{noise}} \in \{250, 500, 1000\}$ random noise covariates, where these noise covariates are drawn iid from a Uniform(0, 1) distribution.  We report on the same variable selection and estimation metrics as in the synthetic experiments for $p_{\text{noise}} = 1000$ in \cref{tab:bike_var_perform} and \cref{tab:bike_est_perform}, respectively; see \cref{tab:bike_var_perform_full} and \cref{tab:bike_est_perform_full} for all choices of $p_{\text{noise}}$. We see that again SKIM-FA has similar or much better estimation and variable selection performance relative to the other methods. Finally, to understand the impact of correlated predictors on performance, we append correlated (real) covariates to the Bike Sharing data set (instead of synthetic ones drawn from a Uniform(0, 1) distribution) in \cref{A:bike_fake_real}. We again find that SKIM-FA has better performance than the other methods. 

\begin{figure}[]
        \centering
        \begin{subfigure}[b]{0.33\textwidth}
            \centering
            \includegraphics[width=\textwidth]{./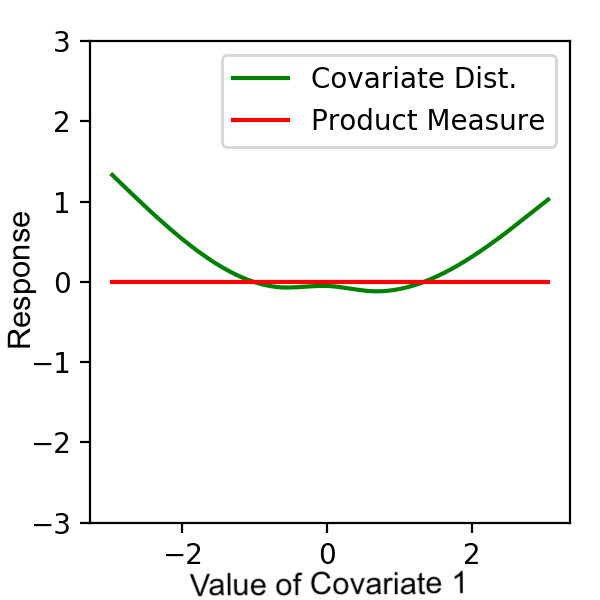}
            \caption[Network2]%
            {{\small $\rho = 0.1$}}    
        \end{subfigure}
        \hspace{2cm}
        \begin{subfigure}[b]{0.33\textwidth}  
            \centering 
            \includegraphics[width=\textwidth]{./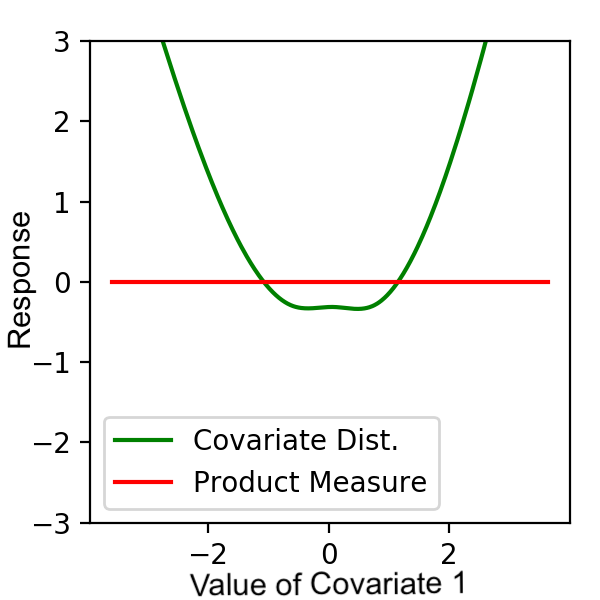}
            \caption[]%
            {{\small $\rho = 0.5$}}    
        \end{subfigure}
                \caption{The left hand and right hand plots show how the additive effect of $x_1$ (in the functional ANOVA decomposition of the function $x_1 x_2$) varies as the correlation between $x_1$ and $x_2$ increases.} \label{fig:corr_anova}

\end{figure}

\subsection{Impact of Correlated Predictors on the Functional ANOVA} \label{sec:anova_sens}
So far we have performed the functional ANOVA decomposition assuming that the covariates are jointly independent; for our synthetic data evaluation in \cref{sec:syn_exp} this independence held by design. Here we show the effect correlated predictors have on the resulting decomposition. Recall that previous functional ANOVA methods assume product measure, but our \cref{algo:change_basis} provides the flexibility to select different measures. We demonstrate the practical utility of this flexibility here. To this end, we consider the simplest possible regression function with interactions: $f(x_1, x_2) = x_1 x_2$. If $x_1 \indep x_2$, then the functional ANOVA decomposition of $f$ with respect to $\mu(x_1, x_2)$ equals $x_1 x_2$. However, if $x_1$ and $x_2$ are correlated, then the functional ANOVA decomposition no longer equals $x_1 x_2$. In particular, as the correlation between $x_1$ and $x_2$ increases, $f$ can be explained better by additive effects (e.g., in the degenerate case when $x_1 = x_2$, then $f(x_1, x_2) = x_1^2)$. To test this empirically, we randomly generate $x_1, x_2$ from a multivariate Gaussian distribution with marginal variances equal to $1$ and pairwise correlation equal to $\rho$. We let $\rho \in \{0.1, 0.5\}$. \cref{fig:corr_anova} shows that when $\rho$ gets stronger, the discrepancy between a functional ANOVA decomposition with respect to $\mu(x_1, x_2)$ versus product measure $\mu_{\tens} = N(0, 1) \tens N(0, 1)$ increases. As expected, as the correlation increases, a quadratic-like function of $x_1$ and $x_2$ explains $f$ increasingly well.

\begin{figure}[]
        \centering
        \begin{subfigure}[b]{0.4\textwidth}
            \centering
            \includegraphics[width=\textwidth]{./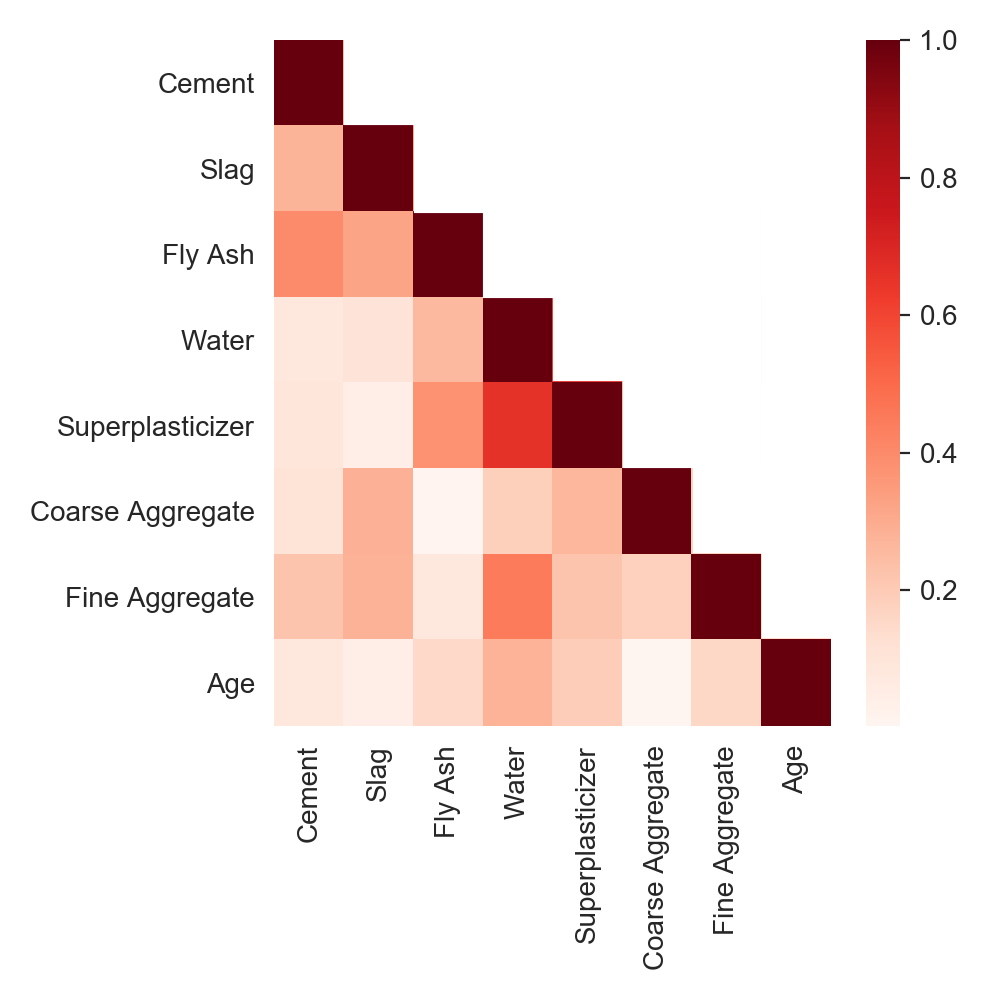}
            \caption[Network2]%
            {{\small Absolute Value of Correlation Matrix}}    
        \end{subfigure}
        \hspace{1cm}
        \begin{subfigure}[b]{0.4\textwidth}  
            \centering 
            \includegraphics[width=\textwidth]{./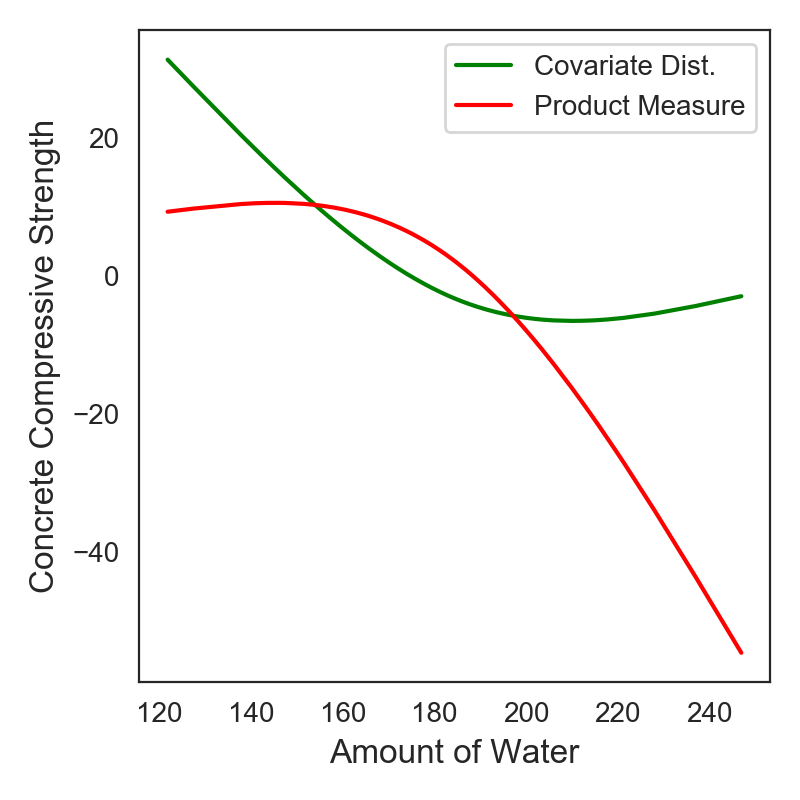}
            \caption[]%
            {{\small Main Effect of Water on Strength}}    
        \end{subfigure}
                \caption{Effect of Correlated Predictors on the Concrete Compressive Strength Data Set} \label{fig:concrete_anova}

\end{figure}

We perform a similar analysis above but for real data, namely the  Concrete Compressive Strength data set from the UCI machine learning repository. In \cref{fig:concrete_anova}, we plot the correlations between the 8 covariates that potentially predict the response (concrete strength). The two most correlated covariates are the amount of water and the amount of superplasticizer. Since the covariates have non-trivial correlations, the functional ANOVA decomposition with respect to $\mu$ and $\mu_{\tens}$ might be different based on \cref{lem:arb_far_intercept}. In \cref{fig:concrete_anova} we see that there indeed is a difference; the (estimated) additive effect for water on concrete strength varies substantially depending on which measure is selected to perform the functional ANOVA decomposition. In \cref{A:corr_pred_bike}, we compare how the functional ANOVA decomposition changes depending on if we use $\mu$ or $\mu_{\tens}$ for the Bike Sharing data set. Unlike the Concrete Compressive Strength data set, however, we do not see a large difference between the two functional ANOVA decompositions for the Bike Sharing data set.

\subsection{Evaluation on Real Data: Obesity Gene-Expression and SNP Data Set}
\edit{We conclude by evaluating each method on a high-dimensional genomics data set where $N \ll p$. Unlike our previous evaluation, however, we do not know the ground truth effects. Hence, we are unable to compute the evaluation metrics in \cref{sec:eval_metrics}. We instead report the mean-squared prediction error of each method on a left-out test set as a proxy. As a qualitative check on inference quality, we interpret the genes selected by SKIM-FA and find that some correspond to genes already flagged as obesity-related based on previous biological studies. Below we summarize the data and our findings in more depth.}

\begin{Table}[]
\centering
\renewcommand{\arraystretch}{.667}
\begin{tabular}{@{}cccr@{}}
\toprule
\textbf{Method} & \textbf{\# Covariates Selected} & \textbf{MSE} & \textbf{$R^2$} \\ \midrule
\textbf{SKIM-FA}         & 31                   & 39.1                             & 0.43                          \\
HierLasso       &    8                & 68.1                             & 0.01                          \\
SPAM-2Stage     & 0                   & 68.9                             & 0.00                          \\
MARS            &    15                & 96.1                             & -0.39 \\
Pairs Lasso     & --                   &  --                             & --                         
                        \\ \bottomrule
\end{tabular}
\caption{Test Set Predictive Performance for
the Obesity Gene-Expression and SNP Data Set.}
\label{tab:obese_perform}
\end{Table}

\edit{We consider the data set kindly provided by \citet{obese_data}, which consists of the body mass index (BMI) of $N=87$ individuals. After using the pre-processing steps in \citet{obese_data}, we also consider 13,276 gene-expression levels, 16 single-nucleotide
polymorphisms (SNPs), and a genetic risk-score feature as covariates for a total of $p=$ 13,293 covariates. Since the number of covariates is more than 100 times the number of observations, and the number of pairwise interactions almost exceeds 100 million, this data set leads to a non-trivial inference task. We report the out-of-sample mean-squared error and out-of-sample $R^2$ for each method in \cref{tab:obese_perform} (15\% of the data is used for testing purposes). \cref{tab:obese_perform} has missing values for  Pairs Lasso since the number of interactions is too large to run on a single machine. SPAM-2Stage does not select any covariates, and hence the $R^2$ is zero. Both HierLasso and MARS seem to overfit given the poor $R^2$ performance, even though each selects 8 and 15 covariates, respectively. SKIM-FA performs the best ($R^2 = 0.43$), and selects 31 covariates; see \cref{A:obese_details} for the names of all genes and SNPs selected.}

\edit{We do not know which of these 31 genes are truly associated with obesity. Nevertheless, we find that several genes SKIM-FA selects are obesity related based on previous studies. SKIM-FA selects IRS2, which is a gene associated with obesity and diabetes risk; see, for example, \citet{irs2_one}. SKIM-FA says that IRS2 has a negative effect on BMI (i.e., a higher expression of IRS2 decreases BMI) which agrees with the findings in \citet{irs2_two} based on experimental data on mice; see \cref{A:obese_details} for details. SKIM-FA also selects a SNP (Rs2112347) and two genes (KISS1R and SKP1) which are obesity related based on \citet{snp_obese, kissr, skp1}. Interestingly, as we discuss in \cref{A:obese_details}, SKP1 does not have strong additive effects, but it has the strongest interactions. Hence, SKP1 might be an
interesting candidate for further study of its interaction
properties.}

\section{Conclusion} 
 In this paper, we developed a new, computationally efficient method to perform sparse functional ANOVA decompositions. The heart of our procedure relied on a new kernel trick to implicitly represent nonlinear interactions (\cref{thm:skim_fast_form}), and a change-of-basis formula (\cref{thm:theo_proj}) to re-express the fit in terms of an arbitrary measure. We compared our method against other methods often used to model high-dimensional data with interactions. We found improved performance on both simulated and real data sets by relaxing assumptions such as linearity and the presence of strong-additive effects while still remaining competitive (or being orders of magnitude faster) in terms of runtime.

There are many interesting future research directions. One involves scaling our method to both the large $N$ and $p$ setting; our current method takes $O(pQN^2 + N^3)$ time which becomes problematic for large $N$.  This cubic dependence, however, is not unique to our method but rather a fundamental obstacle faced by kernel ridge regression and Gaussian processes. Fortunately, many methods already exist to help alleviate these computational challenges with respect to $N$; see, for example, \citet{black_box, Titsias09variationallearning, Quinonero}. Another interesting direction involves applying our method to biological data sets. In particular, an open challenge in genomics has been detecting \emph{epistasis}, or interaction effects between genetic variants, from genome sequencing data  \citep{missing_herit, gwas_interaction, epistatis_gene2, gpu_epistasis}. Detecting epistasis has been statistically and computationally challenging because $p$ is in the millions, so the number of pairwise interactions is on the order of \emph{trillions}. Since our method does not require explicitly generating all interactions, it has the potential to tractably detect interactions in such especially high-dimensional data regimes. 

 \acks{This work was supported in part by the DARPA I2O LwLL program, NSF Award 2029016, an ONR Early Career Grant, and MIT Lincoln Laboratory.}


\newpage

\appendix 
\counterwithin{figure}{section}
\counterwithin{Table}{section}

\noindent \textbf{Supporting Materials}


 \section{Figures and Tables Referenced in \cref{sec:experiments}} 

\begin{figure}[H]
        \centering
        \begin{subfigure}[b]{0.6\textwidth}
            \centering
            \includegraphics[width=\textwidth]{./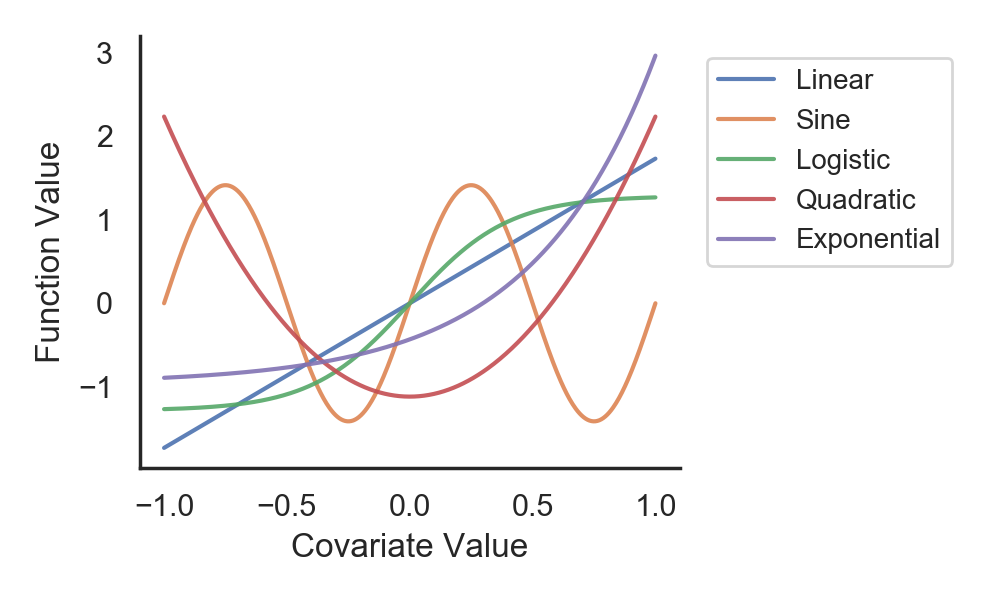}
        \end{subfigure}     
        	\caption{Test functions used to generate synthetic data}  \label{fig:ex_trends}
\end{figure}

\begin{Table}[H]
\centering
\caption{Variable Selection Performance for the Bike Sharing Data Set.}
\label{tab:bike_var_perform}
\renewcommand{\arraystretch}{.667}
\begin{tabular}{@{}cccc@{}}
\toprule
\textbf{Method} & \textbf{\# Covariates} & \textbf{\# Original Selected} & \textbf{\# Wrong Selected} \\ \midrule
\textbf{SKIM-FA}         & 1000                   & 3                             & 0                          \\
HierLasso       & 1000                   & 3                             & 5                          \\
SPAM-2Stage     & 1000                   & 3                             & 8                          \\
Pairs Lasso     & 1000                   & 3                             & 76                         \\
MARS            & 1000                   & 3                             & 119                        \\ \bottomrule
\end{tabular}
\end{Table}

\begin{Table}[H]
\centering
\caption{Estimation Performance for the Bike Sharing Data Set.}
\label{tab:bike_est_perform}
\renewcommand{\arraystretch}{.667}
\resizebox{\textwidth}{!}{%
\begin{tabular}{@{}ccccccccc@{}}
\toprule
\textbf{Method} & \textbf{\# Noise} & \textbf{\begin{tabular}[c]{@{}c@{}}Correct \\ Selected \\ SSE \\ (Main)\end{tabular}} & \textbf{\begin{tabular}[c]{@{}c@{}}Correct \\ Not \\ Selected \\ SSE \\ (Main)\end{tabular}} & \textbf{\begin{tabular}[c]{@{}c@{}}Wrong\\ Selected \\ SSE\\ (Main)\end{tabular}} & \textbf{\begin{tabular}[c]{@{}c@{}}Correct\\ Selected \\ SSE \\ (Pair)\end{tabular}} & \textbf{\begin{tabular}[c]{@{}c@{}}Correct\\ Not \\ Selected \\ SSE\\ (Pair)\end{tabular}} & \textbf{\begin{tabular}[c]{@{}c@{}}Wrong\\ Selected \\ SSE\\ (Pair)\end{tabular}} & \textbf{\begin{tabular}[c]{@{}c@{}}Total \\ SSE\end{tabular}} \\ \midrule
\textbf{SKIM-FA} & 1000 & 0.145 & 0.002 & 0 & 0.107 & 0.009 & 0 & 0.263 \\
SPAM-2Stage & 1000 & 0.149 & 0.002 & 0.027 & 0.081 & 0.009 & 0.000 & 0.269 \\
MARS-EMP & 1000 & 0.214 & 0.002 & 0.485 & 0.054 & 0.026 & 0.245 & 1.026 \\
MARS-Vanilla & 1000 & 6.556 & 0.002 & 0.796 & 0.947 & 0.026 & 1.882 & 10.209 \\ \bottomrule
\end{tabular}
}
\end{Table}

%

%

\clearpage

\section{Proofs} \label{A:proofs}
 
 \subsection{Proof of \cref{prop:kern_exist}} \label{A:proof_exists}
It suffices to prove that $\hilb_V^o$ is an RKHS. First we prove that $\hilb_V^o$ is a Hilbert space. Since $\hilb_V^o \subset \hilb_V$, it suffices to show that  $\hilb_V^o$ is a vector space and complete. To show that $\hilb_V^o$ is a vector space, take arbitrary $f, g \in \hilb_V^o$ and $\alpha, \beta \in R$. We want to show $\alpha f + \beta g \in  \hilb_V^o$. Take an arbitrary $f_A \in \hilb_A$, $A \subsetneq V$. Then,
 \begin{equation*}
 	\begin{split}
 		\inner{\alpha f + \beta g}{f_A}_{\mu} &= \alpha \inner{f}{f_A}_{\mu} + \beta \inner{g}{f_A}_{\mu} \\
 													&= 0
 	\end{split}
 \end{equation*}
 since $f, g \in  \hilb_V^o$. Hence, $\hilb_V^o$ is a vector space. 
 
Suppose towards a contradiction that $\hilb_V^o$ is not complete. Then, since $\hilb_V$ is complete, there exists an $f' \in \hilb_V \setminus \hilb_V^o$ and Cauchy sequence $\{f_n\}_{n=1}^{\infty}$ such that $\lim_{n \rightarrow \infty} \|f' - f_{n} \|_{\hilb_V}$ = 0, where $f_n \in \hilb_V^o$ and $\| \cdot \|_{\hilb_V}$ denotes the induced RKHS norm for $\hilb_V$. Then, there exists an $\epsilon > 0$ and $f_A \in \hilb_A$, $A \subsetneq V$ such that
\begin{equation}
	\begin{split}
			\epsilon & =  \inner{f'}{f_A}_{\mu} \\
						&=  \inner{f'  + f_m - f_m}{f_A}_{\mu} \\
						&= \inner{f'  - f_m}{f_A}_{\mu} + \inner{f_m}{f_A}_{\mu} \\
						&= \inner{f'  - f_m}{f_A}_{\mu} \\
						& \leq \|f' - f_m \|_{\mu} \|f_A\|_{\mu} \quad (\text{by Cauchy-Schwarz}).
	\end{split}
\end{equation}
To reach a contraction, it suffices to show that there exists an $m < \infty$ such that $\|f' - f_m \|_{\mu} < \frac{\epsilon}{\|f_A\|_{\mu}}$. To obtain this inequality, we upper bound $\| \cdot \|_{\mu}$ in terms of $\| \cdot \|_{\hilb_V}$.  Let $r_V$ be the reproducing kernel for $\hilb_V$. Then, for $f \in \hilb_V$,
\begin{equation}
	\begin{split}
		|f(x)|^2 &= |\inner{f}{r_V(x, \cdot)}_{\hilb_V}|^2 \quad (\text{by the reproducing property}) \\
					&\leq \|f\|_{\hilb_V}^2 r_V(x, x)^2 \quad (\text{by Cauchy-Schwarz}).
	\end{split}
\end{equation}
Then, 
\begin{equation}
	\begin{split}
		\|f \|_{\mu}^2 &= \int |f(x)|^2 d\mu \\
					&\leq  \|f\|_{\hilb_V}^2 \int  r_V(x, x)^2 d\mu
	\end{split}
\end{equation}
Since $\hilb_V$ belongs to the space of square integrable functions,  $\int  r_V(x, x)^2 d\mu = M_V < \infty$. Hence,
\begin{equation}
	 \|f' - f_m \|_{\mu} \leq M_V \|f' - f_m \|_{\hilb_V}^2 < \infty.
\end{equation}
Since $\|f' - f_m \|_{\hilb_V}^2 \rightarrow 0$, there exists an $m$ such that $\|f' - f_m \|_{\mu} < \frac{\epsilon}{\|f_A\|_{\mu}}$. Hence, $\hilb_V^o$ is complete.

To complete the proof it suffices to show that the evaluation functional on  $\hilb_V^o$  is a bounded operator. Since $\hilb_V$ is an RKHS there exists an $M_x < \infty$ such that for all $f \in \hilb_V$
\begin{equation}
	|f(x)| \leq M_x  \|f\|_{\hilb_V}. 
\end{equation}
Since $\hilb_V^o \subset \hilb_V$, then for all $g \in \hilb_V^o$,
\begin{equation}
	|g(x)| \leq M_x  \|g\|_{\hilb_V}. 
\end{equation}

\subsection{Proof of \cref{lem:kern_select}}
\begin{equation}
	\begin{split}
		f^{(M)}(x) &= \sum_{m=1}^M \alpha_m k_{\theta}(x_m, x) \\
						&= \sum_{m=1}^M \left ( \sum_{V: |V| \leq Q} \theta_V k_V(x_m, x) \right) \\
						&= \sum_{V: |V| \leq Q} \theta_V \left ( \sum_{m=1}^M   k_V(x_m, x) \right) \\
						&= \sum_{V: |V| \leq Q} f_V(x).
	\end{split}
\end{equation}
It remains to show that $f_V \in \hilb_V^o$. For all $m \in [M]$, $k_V(x_m, \cdot) \in \hilb_V^o$. Hence, $ \theta_V \sum_{m=1}^M   k_V(x_m, x) \in \hilb_V^o$ since $\hilb_V^o$ is a Hilbert space.

\subsection{Proof of \cref{lem:arb_far_intercept}} \label{A:far_intercept_proof}
We prove the claim using a constructive proof with $p=2$ variables. Consider the function
\begin{equation}
	f(x_1, x_2) = 1 + (x_1 - x_2)^{2k} I(|x_1| \leq M)  I(|x_2| \leq M).
\end{equation}
Suppose the joint distribution of  $(x_1, x_2)$ under $\mu$ equals 
\[\begin{pmatrix}
x_1 \\
x_2
\end{pmatrix}\sim N\left(\begin{pmatrix}
0 \\
0
\end{pmatrix},\begin{pmatrix}
1 & \rho \\
\rho & 1
\end{pmatrix}\right).
\]
Then, the joint distribution of $(x_1, x_2)$ under $\mu_{\tens}$ equals
\[\begin{pmatrix}
x_1 \\
x_2
\end{pmatrix}\sim N\left(\begin{pmatrix}
0 \\
0
\end{pmatrix},\begin{pmatrix}
1 & 0 \\
0 & 1
\end{pmatrix}\right).
\]
By symmetry, 
\begin{equation}  \label{eq:tens_bound}
	\begin{split}
\E_{\mu_{\tens}}[f(x_1, x_2)] &= 2 \mu_2(x_2 < 0)  \E_{\mu_{\tens}}[f(x_1, x_2) \mid x_2 < 0] \\
											& \geq 2  \mu_1(x_1 > c) \mu_2(x_2 < 0)  \E_{\mu_{\tens}}[f(x_1, x_2) \mid x_1 > c, x_2 < 0] \\
											&=  \mu_1(x_1 > c) \E_{\mu_{\tens}}[f(x_1, x_2) \mid x_1 > c, x_2 < 0] \\
											& \geq \mu_1(x_1 > c) c^{2k} I(|c| < M).
	\end{split}
\end{equation}
Under $\mu$, we may assume without loss of generality that
\begin{equation*}
	\begin{split}
			x_1 &\sim \normal(0, 1) \\
			\epsilon &\sim \normal(0, 1) \quad \text{s.t.} \quad \epsilon \indep x_1 \\
			x_2 &= \rho x_1 + \sqrt{1 - \rho^2} \epsilon.
	\end{split}
\end{equation*}
Then, 
\begin{equation}
	\begin{split}
			\lim_{\rho \rightarrow 1} \E_{\mu} f(x_1, x_2) &= 1 + \lim_{\rho \rightarrow 1} \int (x_1 - x_2)^{2k} I(|x_1| \leq M)  I(|x_2| \leq M) d\mu(x_1, x_2) \\
																			&= 1 + \lim_{\rho \rightarrow 1} \int (x_1 - \rho x_1 - \sqrt{1 - \rho^2} \epsilon)^{2k} I(|x_1| \leq M)  I(|x_2| \leq M) d\mu_{\tens}(x_1, \epsilon) \\
																			&= 1 + \int \lim_{\rho \rightarrow 1} (x_1 - \rho x_1 - \sqrt{1 - \rho^2} \epsilon)^{2k} I(|x_1| \leq M)  I(|x_2| \leq M) d\mu_{\tens}(x_1, \epsilon) \\
																			&= 1,
	\end{split}
\end{equation}
where the second to last line follows from he Dominated Convergence Theorem since $f(x_1, x_2)$ is uniformly bounded by $(2M)^{2k}$. Since $\E_{\mu} f(x_1, x_2) > 1$ for $0 \leq \rho < 1$, there exists a sequence $\{\rho_k\}_{k=1}^{\infty}$ such that for all $k \in \mathbb{N}$, $1 < \E_{\mu_k} f(x_1, x_2) < 2$ and $0 < \rho_k < 1$, where $\mu_k$ sets $\rho = \rho_k$. Pick $k'$ large enough so that $ f_{\{\emptyset\}}^{\mu_{\tens}} > 2$.  Then, for $k \geq k'$,
\begin{equation} \label{eq:bound}
	\begin{split}
		\frac{|  f_{\{\emptyset\}}^{\mu_{\tens}} - f_{\{\emptyset\}}^{\mu_k} |}{ |f_{\{\emptyset\}}^{\mu_k}|} &\geq \frac{| f_{\{\emptyset\}}^{\mu_{\tens}} - f_{\{\emptyset\}}^{\mu_k} |}{ 2} \\
			& = \frac{ f_{\{\emptyset\}}^{\mu_{\tens}} - f_{\{\emptyset\}}^{\mu_k} }{ 2} \\
			& > \frac{ f_{\{\emptyset\}}^{\mu_{\tens}} -2 }{ 2}
	\end{split}
\end{equation}
Let $k^* = \max\left(k', \left \lceil .5\sqrt[c]{\frac{2(\Delta + 1)}{\mu_1(x_1 > c} } \right\rceil \right)$. Then, by \cref{eq:tens_bound} and \cref{eq:bound}, $\frac{|  f_{\{\emptyset\}}^{\mu_{\tens}} - f_{\{\emptyset\}}^{\mu_{k^*}} |}{ |f_{\{\emptyset\}}^{\mu_{k^*}}|}  > \Delta$.

\subsection{Proof of \cref{prop:model_select_bayes}} \label{A:reparam_proof}
By equation 2.25 of \citet[Chapter 2]{gp_book}, \cref{eq:kern_ridge} equals the posterior predictive mean of the following Bayesian model:
\begin{equation*}
	\begin{split}
		f &\sim GP(0, k_{\theta}) \\
		y \mid f, x &\sim \normal(f(x), \noisevar = \lambda).
	\end{split}
\end{equation*}
We may re-write $k_{\theta}$ as,
\begin{equation*}
	\begin{split}
		k_{\theta}(x, \tilde{x})  &= \sum_{V: |V| \leq Q} \theta_V \Phi_V^T(x) \Phi_V^T(\tilde{x}) \\
										 &=  \sum_{V: |V| \leq Q} \Phi_V^T(x)[ \theta_V I_{B^V \times B^V}  ]\Phi_V^T(\tilde{x}) \\
										 &= \sum_{V: |V| \leq Q} \Phi_V^T(x)\Sigma_V\Phi_V^T(\tilde{x}),
	\end{split}
\end{equation*}
where $\Sigma_V =  \theta_V I_{B^V \times B^V} $. Then, by \citet[Chapter 2.1.2]{gp_book} and the additive property of kernels, $f \sim GP(0, k_{\theta})$ has the same distribution as drawing a set of regression coefficients $\Theta_V \sim \normal(0, \Sigma_V)$ and setting $f = \sum_{V: |V| \leq Q} \Theta_V^T \Phi_V(\cdot)$. Hence, the posterior predictive mean of the Gaussian process at a point $x$ equals $\sum_{V: |V| \leq Q} \hat{\Theta}_V^T \Phi_V(x)$.

\subsection{Proof of \cref{thm:skim_fast_form}} \label{A:kern_trick_two_proof}
\begin{equation} \label{eq:skim_to_vap}
	\begin{split}
		k_{\text{SKIM-FA}}(x, \tilde{x}) &= \sum_{V: |V| \leq Q} \left[  \eta_{|V|}^2 \prod_{i \in V} \kappa_i^2 \right] k_V(x, \tilde{x}) \\
							&= \sum_{V: |V| \leq Q} \left[  \eta_{|V|}^2 \prod_{i \in V} \kappa_i^2 \right] \prod_{i \in V} k_i(x_i, \tilde{x}_i) \\
							&= \sum_{V: |V| \leq Q} \left[  \eta_{|V|}^2 \prod_{i \in V} \kappa_i^2 k_i(x_i, \tilde{x}_i) \right]  \\
							&= \sum_{q=1}^Q \sum_{V: |V| = Q} \left[  \eta_{|V|}^2 \prod_{i \in V} \kappa_i^2 k_i(x_i, \tilde{x}_i) \right] \\
							&= \sum_{q=1}^Q \eta_{q}^2 \sum_{V: |V| = q} \left[ \prod_{i \in V} \kappa_i^2 k_i(x_i, \tilde{x}_i) \right]	 
	\end{split}
\end{equation}
Let $\tilde{k}_i(\cdot, \cdot) = \kappa_i^2 k_i(\cdot, \cdot)$. Then,  \citet[pg. 199]{vapnik95} shows that
\begin{equation} \label{eq:vap_form}
	\bar{k}_q \coloneqq \sum_{V: |V| = q} \prod_{i \in V} \tilde{k}_i = \frac{1}{q} \sum_{s=1}^q (-1)^{s+1} \bar{k}_{q-s} k^s,	
\end{equation}
where $k^s(x, \tilde{x}) = \sum_{i=1}^p [\tilde{k}_i(x_i, \tilde{x}_i)]^s$ and $\bar{k}_0(x, \tilde{x}) = 1$. The result follows from \cref{eq:skim_to_vap} and \cref{eq:vap_form}.

\subsection{Proof of \cref{corr:skim_kern_time}} \label{A:recur_proof}

Computing and storing $k^{1}(x, \tilde{x}), \cdots, k^{Q}(x, \tilde{x})$ takes $O(pQ)$ time and requires $O(Q)$ memory, respectively. After computing and storing $\bar{k}_1(x, \tilde{x}), \cdots, \bar{k}_{q}(x, \tilde{x})$, $\bar{k}_{q+1}(x, \tilde{x})$ takes $O(q+1)$ time. Hence, computing  all $\bar{k}_1(x, \tilde{x}), \cdots, \bar{k}_{Q}(x, \tilde{x})$ terms takes $O(Q^2)$ time given $k^{1}(x, \tilde{x}), \cdots, k^{Q}(x, \tilde{x})$. Since $Q < p$, computing  $k_{\text{SKIM-FA}}(x, \tilde{x})$ takes $O(pQ)$ time.

\subsection{Proof of \cref{prop:grad_descent}}
\begin{equation*}
	\begin{split}
		\frac{\partial L}{\partial \tilde{U}_i^{(t)}} &= \frac{\partial L}{\partial \kappa_i^{(t)}}  \frac{\partial \kappa_i}{\partial U_i^{(t)}} \frac{\partial U_i^{(t)}}{\partial \tilde{U}_i^{(t)}} \\
															&= \frac{\partial L}{\partial \kappa_i^{(t)}}  I(U_i^{(t)} > c) \frac{2 \tilde{U}_i^{(t)}}{(\tilde{U}_i^{(t)} + 1)^2}.		
	\end{split}
\end{equation*}
Since $\kappa_i^{(t)} = 0$, that implies $U_i^{(t)} \leq c$. Hence, $\frac{\partial L}{\partial \tilde{U}_i^{(t)}} = 0$. Consequently, 
\begin{equation} \label{eq:fixed_point}
	\begin{split}
		\tilde{U}_i^{(t + 1)} &= \tilde{U}_i^{(t)} - \gamma \frac{\partial L}{\partial \tilde{U}_i^{(t)}} \\
									&= \tilde{U}_i^{(t)}.
	\end{split}
\end{equation}
By \cref{eq:fixed_point}, $\kappa_i^{(t')} = 0$ for all $t' \geq t$.

\subsection{Proof of \cref{eq:universal_space}}
 It suffices to prove that any $f_V \in \hilb_V$ is square-integrable  with respect to  any probability measure. Since $\phi_{ib}$ is a continuous function on a compact set, there exists a $0 < M_{ib} < \infty$ such that $|\phi_{ib}|$ is bounded by $M_{ib}$. Without loss of generality, assume $V = \{1, \cdots, q\}$. Then, there exists coefficients $c_{b_1, \cdots, b_q} \in \R$ such that
\begin{equation*}
	\begin{split}
		f_V(x_V) &= \sum_{b_1 \in [B_1]} \cdots \sum_{b_{q} \in [B_q]} c_{b_1, \cdots, b_q}  \prod_{i = 1}^q  \phi_{ i b_i}(x_i) \\
					& \leq \sum_{b_1 \in [B_1]} \cdots \sum_{b_{q} \in [B_q]} c_{b_1, \cdots, b_q} M_*^q \\
					& < \infty
	\end{split}
\end{equation*}
for all $x_V$, where $M_* = \max_{i \in [p]} \max_{b \in [B_i]} M_{ib} < \infty$ since $B_i < \infty$. Hence, for any probability measure $\mu$, 
\begin{equation}
\begin{split}
	\int |f_V(x_V)|^2 d\mu &< \int \left( \sum_{b_1 \in [B_1]} \cdots \sum_{b_{q} \in [B_q]} c_{b_1, \cdots, b_q} M_*^q \right)^2 d\mu \\
					& = \left( \sum_{b_1 \in [B_1]} \cdots \sum_{b_{q} \in [B_q]} c_{b_1, \cdots, b_q} M_*^q \right)^2 \\
					& < \infty.
\end{split}
\end{equation}

\subsection{Proof of \cref{thm:theo_proj}} \label{A:theo_proj_proof}
Let 
\begin{equation}
	 \begin{split} 
	 	\tilde{f}_{ij} &=  f_{\{ i, j \}}^{\mu_{\tens}} - [\Psi_{ij}^i \Phi_i +  \Psi_{ij}^j \Phi_j + \Psi_{ij}^{0}] \\
	 	\tilde{f}_i  &=f_{\{ i \}}^{\mu_{\tens}} + \sum_{j > i} \Psi_{ij}^i \Phi_i + \sum_{j < i} \Psi_{ji}^{i} \Phi_i(x_i)  \\
	 	\tilde{f}_{\emptyset}  &= f_{\emptyset}^{\mu_{\tens}} + \sum_{i < j} \Psi_{ij}^{0}.
	 \end{split}
\end{equation}
We start by proving that $f =\tilde{f}_{\emptyset} + \sum_{i=1}^p \tilde{f}_i  + \sum_{i,j=1}^p \tilde{f}_{ij}$. Expanding each component,
\begin{equation*}
	\begin{split}
		\tilde{f}_{\emptyset} + \sum_{i} \tilde{f}_i + \sum_{i < j} \tilde{f}_{ij} &= \tilde{f}_{\emptyset} + \sum_{i} \tilde{f}_i + \sum_{i < j} \left[f_{\{ i, j \}}^{\mu_{\tens}} - [\Psi_{ij}^i \Phi_i +  \Psi_{ij}^j \Phi_j + \Psi_{ij}^{0}] \right] \\
		&=  f_{\emptyset}^{\mu_{\tens}} + \sum_{i} \tilde{f}_i + \sum_{i < j} \left[f_{\{ i, j \}}^{\mu_{\tens}} - [\Psi_{ij}^i \Phi_i +  \Psi_{ij}^j \Phi_j] \right] \\
		&=  f_{\emptyset}^{\mu_{\tens}} + \sum_{i}  \left[ f_{\{ i \}}^{\mu_{\tens}} + \sum_{j > i} \Psi_{ij}^i \Phi_i + \sum_{j < i} \Psi_{ji}^i \Phi_i \right] + \\ & \qquad \sum_{i < j} \left[f_{\{ i, j \}}^{\mu_{\tens}} - [\Psi_{ij}^i \Phi_i +  \Psi_{ij}^j \Phi_j] \right] \\
		&=  f_{\emptyset}^{\mu_{\tens}} + \sum_{i} f_{\{ i \}}^{\mu_{\tens}} + \sum_{i < j}f_{\{ i, j \}}^{\mu_{\tens}} + \\
			& \qquad  \sum_{i}  \sum_{j > i} \Psi_{ij}^i \Phi_i +  \sum_{i} \sum_{j < i} \Psi_{ji}^i \Phi_i -  \sum_{i < j} \left[\Psi_{ij}^i \Phi_i +  \Psi_{ij}^j \Phi_j \right]  \\
		&= f +  \sum_{i}  \sum_{j > i} \Psi_{ij}^i \Phi_i +  \sum_{i} \sum_{j < i} \Psi_{ji}^i \Phi_i - \sum_{i}  \sum_{j > i} \left[\Psi_{ij}^i \Phi_i +  \Psi_{ij}^j \Phi_j \right] \\
		&= f + \sum_{i} \sum_{j < i} \Psi_{ji}^i \Phi_i -  \sum_{i}  \sum_{j > i} \Psi_{ij}^j \Phi_j \\
		&= f + \sum_{i} \sum_{j < i} \Psi_{ji}^i \Phi_i -  \sum_{j}  \sum_{j < i} \Psi_{ji}^i \Phi_j \\
		&= f.
	\end{split}
\end{equation*}
We now prove that there exists unique coefficients, $\Psi_{ij}^i \in \R^{1 \times B_i}, \Psi_{ij}^j  \in \R^{1 \times B_j}, \Psi_{ij}^{0} \in \R$, such that $\tilde{f}_{ij}$ belongs to the orthogonal complement of the Hilbert space $\hilb^{\text{add}}_{\{i, j\}} \coloneqq \text{span}\{1, \{\phi_{ib}\}_{b=1}^{B_i}, \{\phi_{jb} \}_{b=1}^{B_j} \} =  \hilb_{\emptyset} \bigoplus \hilb_{\{i\}} \bigoplus  \hilb_{\{j\}}$. Recall that  $$\hilb_{\{i,j\}} = \text{span}\{1, \{\phi_{ib}\}_{b=1}^{B_i}, \{\phi_{jb}\}_{b=1}^{B_j} \},  \{ \phi_{ib}\phi_{jb'} \}_{b \in [B_i], b' \in [B_j]}\}.$$ Then, $f_{\{ i, j \}}^{\mu_{\tens}}  \in \hilb_{\{i,j\}}$ and $\hilb^{\text{add}}_{\{i, j\}}$ is a closed convex subspace of $\hilb_{\{i,j\}}$. Therefore, by the Hilbert Projection Theorem, there exists unique $\bar{f}_{ij} \in \hilb^{\text{add}}_{\{i, j\}}$ and ${f}^{\perp}_{ij} \in \hilb_{\{i,j\}}$ such that
\begin{equation} \label{eq:hilb_proj}
	\begin{split}
		f_{\{ i, j \}}^{\mu_{\tens}} &= \bar{f}_{ij} + {f}^{\perp}_{ij} \quad \text{s.t.} \\
		& \inner{g}{{f}^{\perp}_{ij}}_{\mu} = 0 \quad \forall g \in \hilb^{\text{add}}_{\{i, j\}}.
	\end{split}
\end{equation}
Since $\text{span}\{1, \{\phi_{ib}\}_{b=1}^{B_i}, \{\phi_{jb} \}_{b=1}^{B_j} \}$ is a linearly independent basis of $\hilb^{\text{add}}_{\{i, j\}}$, there exists unique coefficients, $\Psi_{ij}^i \in \R^{1 \times B_i}, \Psi_{ij}^j  \in \R^{1 \times B_j}, \Psi_{ij}^{0} \in \R$, such that $\bar{f}_{ij} = \Psi_{ij}^i \Phi_i^T(x_i) +  \Psi_{ij}^j \Phi_j^T(x_j) + \Psi_{ij}^{0}$. 
 
To complete the proof, we need to show that $\tilde{f}_{ij} =  f_{\{ i, j \}}^{\mu}, \tilde{f}_i  =  f_{\{ i\}}^{\mu},   \tilde{f}_{\emptyset} = f_{\emptyset}^{\mu}$. It suffices to show that  
\begin{equation} \label{eq:check_orth}
	\begin{split}
		& \int_{x_i} \tilde{f}_i \ d\mu_i = 0  \\
		& \int_{x_i, x_j} \tilde{f}_{ij} \ d\mu_i = 0  \\
		& \int_{x_i, x_j} \tilde{f}_i  \tilde{f}_{ij} \ d\mu(x_i, x_j) = 0.
	\end{split}
\end{equation}
The last two equalities in \cref{eq:check_orth} follow directly from \cref{eq:hilb_proj}. For the first equality in \cref{eq:check_orth}, notice that 
\begin{equation*}
	\begin{split}
		 \int_{x_i} \tilde{f}_i \ d\mu_i  &= \E_{\mu_i} \tilde{f}_i \\
		 											&= \E_{\mu_i} \left[  f_{\{ i \}}^{\mu_{\tens}} + \sum_{j > i} \Psi_{ij}^i \Phi_i + \sum_{j < i} \Psi_{ji}^{i} \Phi_i \right] \\
		 											&=  \E_{\mu_i}  f_{\{ i \}}^{\mu_{\tens}} + \sum_{j > i} \E_{\mu_i}[\Psi_{ij}^i \Phi_i] + \sum_{j < i} \E_{\mu_i} [\Psi_{ji}^{i} \Phi_i] \\
		 											&= \sum_{j > i} \Psi_{ij}^i \E_{\mu_i}[\Phi_i] + \sum_{j < i}  \Psi_{ji}^{i} \E_{\mu_i}[\Phi_i]  \\
		 											&= 0,
	\end{split}
\end{equation*}
where the last equation follows from the fact that the components of $\Phi_i$ span $\hilb_{\{i\}}^o$ (and hence are all zero mean).

\subsection{Proof of \cref{prop:change_basis_correct}}
As shown in the proof of  \cref{thm:theo_proj}, $\Psi_{ij}^i \in \R^{1 \times B_i}, \Psi_{ij}^j  \in \R^{1 \times B_j}, \Psi_{ij}^{0} \in \R$ equal the unique set of coefficients such that $\bar{f}_{ij} = \Psi_{ij}^i \Phi_i +  \Psi_{ij}^j \Phi_j + \Psi_{ij}^{0}$ for $\bar{f}_{ij}$ defined in \cref{eq:hilb_proj} and also shown below:
\begin{equation*} 
	\begin{split}
		f_{\{ i, j \}}^{\mu_{\tens}} &= \bar{f}_{ij} + {f}^{\perp}_{ij} \quad \text{s.t.} \\
		& \inner{g}{{f}^{\perp}_{ij}}_{\mu} = 0 \quad \forall g \in \hilb^{\text{add}}_{\{i, j\}}.
	\end{split}
\end{equation*}
Let $y_{ij}^{(w)} = f_{\{ i, j \}}^{\mu_{\tens}}(x_i^{(m)}, x_j^{(w)})$ and $\epsilon_{ij}^{(w)} = {f}^{\perp}_{ij}(x_i^{(w)}, x_j^{(w)})$, where $x^{(w)} \iid \mu$. Then, 
\begin{equation} \label{eq:rand_des_ols}
	\begin{split}
		y_{ij}^{(w)} &= \Psi_{ij}^i \Phi_i(x_i^{(w)}) +  \Psi_{ij}^j \Phi_j(x_j^{(w)}) + \Psi_{ij}^{0} +  \epsilon_{ij}^{(w)} \quad x^{(w)} \iid \mu.
	\end{split}
\end{equation}
Then, \cref{eq:rand_des_ols} is a special case of the random design linear model under misspecification studied in \citet{hsu2014random}. Hence, by \citet[Theorem 11]{hsu2014random}  we can consistently recover $\Psi_{ij}^i$, $\Psi_{ij}^j$, $ \Psi_{ij}^{0}$ by using ordinary least-squares. Hence  \cref{algo:change_basis} recovers  $\Psi_{ij}^i$, $\Psi_{ij}^j$, $ \Psi_{ij}^{0}$ as $W \rightarrow \infty$.

\section{Literature Review} \label{A:lit_review} 
\noindent \textit{Finite Basis Expansion Methods.} \citet{stone1994} introduced the hierarchical functional decomposition and derived statistical rates of convergence by approximating $\hilb$ using a finite B-spline tensor product basis. \citet{projection_anova} later extended this result to general tensor product families such as wavelets, polynomials, etc. There have been a number of specific Bayesian and frequentist methods that fall within the general class of models described in \citet{projection_anova}; see, for example, \citet{sparse_bayes_pairwise, spike_slab_additive, Curtis13fastbayesian, dunson_gp_inter, bayes_smooth}. Unfortunately, since these methods explicitly generate the tensor product basis, they are computationally intractable as $p$ increases beyond a few hundred or thousand covariates. \edit{In \citet{radchenko_var}, the authors consider the $Q=2$ setting, and develop the VANISH algorithm to fit nonlinear interaction models under a heredity constraint (i.e., interaction terms are only added if the main effects are selected). They authors use a finite basis to model the main and interaction effects but do not assume a tensor product basis. Unfortunately, their method does not scale well with larger $p$ since the runtime is $O(p^2)$; see Step 0 of the VANISH algorithm on page 6. In  \citet{haris_herid}, the authors generalize VANISH (and several other algorithms) by  using an alternating directions method of multipliers algorithm to fit interactions.}
 
Linear models trivially fall within this class as well. For $Q=1$, the Lasso and the many related techniques provide fast variable selection and estimation in high-dimensional linear models \citep{atom_pursuit, dantzig, pruning_lasso}. For $Q=2$, the hierarchical Lasso \citep{lass_heirch} extends the Lasso to model interactions, and there have been many variants of this model; see, for example, \citet{lim_heirch_lasso, backtrack_heirch_lasso}. However, these methods take at least $O(p^2)$ time since they explicitly model all main and interaction effects. Other linear interaction methods assume that the interactions have a low-rank structure. This structure helps both statistically and computationally; see, for example, \citet{fact_machines, dunson_factor}. However, this low-rank structure in the interaction effects might not always hold in practice. \\

\noindent \textit{Two-Stage \& Forward-Stage Approaches.} Instead of modeling interactions jointly, a common heuristic (similar in spirit to forward stepwise regression) is greedily adding interactions such as in multivariate additive regression splines (MARS) or GA2M \citep{intelligible_model_caruana}. \edit{The iFORM algorithm proposed in \citet{hao_high} is the middle-ground between MARS and fitting a model with all interactions terms included at the start. Specifically, iFORM starts with the empty model, and then proceeds by adding one more predictor at a time, where all interactions between the current active set of predictors are considered.} Another common approach is performing computationally cheap variable selection methods designed for generalized additive models (e.g., Lasso or SpAM \citep{spam}) to identify a sparse set of relevant variables. By restricting to a small set of variables, one can then apply more computationally intensive interactions techniques such as RKHS ANOVA methods.

The approaches above requires some form of strong-hierarchy, namely that all interactions have non-zero main effects, to consistently identify the correct set of variables. While some problems have strong main effects, in other applications this may not be the case. For example, in genome-wide associate studies, fitting an additive-only model to predict an individual's height from genetics only has an $R^2$ of about $5\%$ even though height is well-predicted by parents' heights (thought to be between $80\%-90\%$) \citep{missing_herit}. This discrepancy, more generally called the problem of \emph{missing heritability}, remains an open challenge in biology for understanding complex diseases based on genetics. One explanation for missing heritability is not modeling genetic interactions \citep{missing_herit, gwas_interaction, epistatis_gene2, gpu_epistasis}. In other words, the main effects might be weak, or in the extreme case some genes might only have interaction effects. Hence, from a purely variable selection standpoint, modeling interactions could help better identify genes that are risk-factors for certain diseases. 

In the orthogonal $\mu$ case, the statistical benefit from modeling interactions can be easily seen from the decomposition in \cref{eq:sig_var_decomp}. Suppose $Q=2$ and that main effects total signal variance equals $5$, the pairwise signal variance equals equals $90$, and the noise variance equals $5$. Then, the $R^2$ for an additive-only model is $5\%$ while the $R^2$ for interaction model is $90\%$.  Since the achievable signal increases (and necessarily the effective noise variance decreases), performing variable selection in a lower signal-to-noise regime might offset the statistical price of modeling more parameters. \\

\noindent \edit{\textit{Tree-Based Approaches.}  Tree-based methods (e.g., random forest, CART, gradient boosting) are often used for black-box prediction tasks. While these methods sometimes provide variable importance measures, it is unclear how to access the effects from the fitted prediction function and perform variable selection. Nevertheless, some authors have adapted tree-based methods to estimate effects and perform variable selection. For example, in \citet{bart_var}, the authors modify the Bayesian additive regression trees method in \citet{bart} by placing Dirichlet priors on the splitting proportions of the regression tree prior to induce sparsity. While \citet{bart_var} perform variable selection by looking at posterior inclusion probabilities, it is unclear how to access the interaction effects from the fit. Other authors have adapted tree-based methods to estimate heterogeneous treatment effects (i.e., interactions between a treatment and set of covariates). For example, in \citet{iter_trees}, the authors modify the CART splitting rule to get better estimates of heterogeneous treatment effects. Similarly, in \citet{tree_medicine}, the authors use a model-based recursive partitioning to identify covariates that most strongly interact with a particular treatment. In general, estimating treatment-by-covariate interactions is less computationally demanding since there are only $O(p)$ number of pairwise interactions between a single treatment and all covariates.  \\ }


\noindent \edit{\textit{Kernel Methods.} Many of the functional ANOVA methods described in \cref{sec:related_anova_work} use kernels to model nonlinear interaction effects. Unfortunately, as we discuss in \cref{sec:related_anova_work}, these methods are computationally intractable for moderate $p$ when $Q > 1$.  Some kernel methods used to identify interactions fall under the general area of "multiple kernel learning," where the goal is to learn some weighted combination of kernels \citep{mult_kern_one, mult_kern_bach}. \emph{Hierarchical kernel learning}, for example, is a multiple kernel learning method that  learns nonlinear interactions via hierarchy conditions encoded in a directed acyclic graph \citep{back_kern_heir}. This hierarchy condition translates into a method similar to the greedy forward-stage methods discussed above where higher-order interactions are added only when all lower-interactions are present. For $R$ selected kernels, the runtime stated in \citet{back_kern_heir} is $O(N^3R+ N^2Rp^2 +N^2R^2p)$. The quadratic dependence on $p$ makes this method unsuitable for larger $p$ problems. \citet{comp_kernel} considers a greedy kernel search based on adding and multiplying a base set of kernels together. Since the  multiplication of two kernel corresponds to an interaction, this method has the flexibility to model interaction effects. However, the focus in \citet{comp_kernel} is on prediction and understanding the structure of the fitted kernel instead of estimating effects and performing variable selection. \\}

\noindent \textit{Comparison with \citet{kit}.} The structure of the SKIM-FA prior in \cref{eq:skim_fa_weight} generalizes the prior used in \citet{kit} to handle non-linear effects. However, in \citet{kit}, the authors use a regularized horseshoe prior to achieve sparsity in $\kappa$. While a regularized horseshoe prior does not lead to exact sparsity in $\kappa$, the authors in \citet{kit} were still able to develop an $O(p)$ variable selection procedure by exploiting strong-hierarchy, namely that interactions only occur among selected main effects.  In the current work, we do not make any strong-hierarchy assumption. Hence, to develop an $O(p)$ variable selection procedure, we need \emph{exact} sparsity in $\kappa$; see \cref{sec:alg_implement} for details.

In terms of computational complexity, \citet{kit} fit linear interaction models in $O(pN^2 + N^3)$ time per iteration, which has the same asymptotic complexity as \cref{algo:learn_skim_hyp}. However, they use Hamiltonian Monte Carlo (HMC) to perform inference. Each HMC step requires computing and inverting an $N\times N$ kernel matrix many times. Hence, their method takes hours to complete when $p$ and $N$ are larger than 500. Due to this computational intensity, we do not benchmark against their method.


\section{Zero Mean Kernels and Finite-Basis Functions} \label{A:model_implement_details}
In this section, we show how we construct $k_i$, i.e., the reproducing kernel for $\hilb_{\{i\}}^o$. We construct $k_i$ by first generating a finite-dimensional basis for $\hilb_{\{i\}}$. Then, we normalize each basis function to be zero mean and unit variance so that the normalized basis functions span $\hilb_{\{i\}}^o$. For a more general approach to construct zero mean kernels (e.g., even when $\hilb_{\{i\}}$ is infinite-dimensional) see \citet{anova_zero_mean_sens}. \\

\noindent \textit{Construction of $\hilb_{\{i\}}^o$.}
For each covariate dimension $i$, consider a set of linearly independent basis functions $\{ \phi_{ib} \}_{b=1}^{B_i}$ such that
\begin{equation*}
	\begin{split}
	& \hilb_{\{i\}} = \text{span} \{1, \phi_{i1}, \cdots, \phi_{iB_i} \}. 
	\end{split}
\end{equation*}
Let $\tilde{\phi}_{ib} = \frac{\phi_{ib} - \E_{\mu}[\phi_{ib}]}{\sqrt{\text{Var}_{\mu}[\phi_{ib}] }}$. Then, 
\begin{equation*}
	\begin{split}
	& \hilb_{\{i\}}^o = \text{span} \{\tilde{\phi}_{i1}, \cdots, \tilde{\phi}_{iB_i} \}, \quad \Phi_i \coloneqq [\tilde{\phi}_{i1}, \cdots, \tilde{\phi}_{iB_i}].
	\end{split}
\end{equation*}
Hence, $k_i(x_i, \tilde{x}_i) =\Phi_i(x_i)^T \Phi_i(\tilde{x}_i)$ is the reproducing kernel for $\hilb_i^o$. In many instances, we do not actually know the joint distribution of the covariates. In this case, we approximate $\mu$ with the empirical distribution $\hat{\mu}$ of the datapoints:
\begin{equation*}
	\begin{split}
	 \tilde{\phi}_{ib} &= \frac{\phi_{ib} - \E_{\hat{\hat{\mu}}}[\phi_{ib}]}{\sqrt{\text{Var}_{\hat{\mu}}[\phi_{ib}] }} \quad \text{s.t.} \\
	& \quad \hat{\mu} = \frac{1}{N} \sum_{n=1}^N \delta_{x^{(n)}} \\
	& \quad \E_{\hat{\hat{\mu}}}[\phi_{ib}] = \frac{1}{N} \sum_{n=1}^N \phi_{ib}(x_i^{(n)}) \\
	& \quad\text{Var}_{\hat{\mu}}[\phi_{ib}]= \frac{1}{N} \sum_{n=1}^N \phi_{ib}^2(x_i^{(n)}) -  \E_{\hat{\hat{\mu}}}[\phi_{ib}] .
	\end{split}
\end{equation*}
%

\noindent \textit{Note}. for the experiments in \cref{sec:experiments}, we use a natural cubic spline basis with 5 knots at the quantiles to generate each $\hilb_{\{i\}}$ for SKIM-FA and SPAM-2Stage.  \\

\noindent \textit{Practical Considerations for Picking Basis Functions.} While any set of basis functions can be used to generate $\hilb_{\{i\}}^o$ in principle, we provide several suggestions below; see Chapter of \citet{esl} for a more in-depth review. 
\begin{itemize}
	\item If $x_i$ is a seasonal covariate, use a wavelet basis to generate   $\hilb_{\{i\}}^o$.
	\item If $x_i$ is categorical, let $\hilb_{\{i\}}^o$ equal the one-hot encoding of $x_i$.
	\item if $x_i$ is continuous, use a polynomial spline basis. 
		\begin{itemize}
			\item If extrapolation beyond the data is a concern, use a natural cubic spline basis (which enforces linearity beyond the boundary knots).
		\end{itemize}
\end{itemize}

\section{Additional Algorithmic Details} \label{A:add_algo_details} 
To fit SKIM-FA, we set the number of iterations $T=2000$,  learning rate $\gamma=0.1$, and cross-validation batch size $M = 0.2 N$ in \cref{algo:learn_skim_hyp}. We let the truncation level $c$ in  \cref{algo:learn_skim_hyp} depend on the iteration number $t$ for $1 \leq t \leq T$. Empirically, we find gradually increasing $c$ as a function of $t$ works well for accurately selecting the correct covariates and inducing sparsity in $\kappa$. As outlined in \cref{algo:trun_schedule}, we suggest the following schedule for $c$:

 \begin{algorithm}
    \caption{Scheduler for Truncation Level $c$}
    \label{algo:trun_schedule}
    \begin{algorithmic}[1] 
         \Procedure{TruncScheduler}{$U^{(t)}$, $c_{t-1}$, $t$, $r=.01$, $\gamma=.75$} 
         \If{$t < 500$}
         	\Return 0
         \EndIf
         
         \If{$t = 500$}
       		\Return $q_{25}(U^{(t)}_1, \cdots, U^{(t)}_p)$ \Comment{Take the 25th\% of the components in $U^{(t)}$}
       	\EndIf
       	\If{$t > 500$}
		\Return $\max(\min( (1+r) c_{t-1}, \ \gamma), \ c_{t-1})$
		\EndIf
        \EndProcedure
    \end{algorithmic}
\end{algorithm}

When $t<500$, $c=0$ in \cref{algo:trun_schedule}. Hence, since $\kappa_i^{(t)} = \max(U_i^{(t)} - c, 0)$, $\kappa_i^{(t)} \neq 0$ for $t \leq 500$ and $i \in [p]$. At iteration 500, we drop the bottom 25th\% of covariates (determined by their importance measure $U_i^{(t)}$). Specifically, $\kappa^{(500)}$ has 25\% of its entries equal to zero. For subsequent iterations, we take the previous trunction level $c_t$ and increase that level by a factor of $(1+r)$ until $c_t$ reaches $\gamma$. If $c_{t-1}$ is larger than $\gamma$, we set $c_t$ equal to $c_{t-1}$.

\section{SKIM-FA Extensions} \label{A:skim_extend}
\color{editcolor}

\subsection{Beyond Gaussian Responses}
Throughout we have assumed that $y \mid x$ is drawn from a Gaussian distribution. In general, we may assume that the response belongs to an exponential family, which allows us to model, for example, count, binary, and exponential response data. For non-Gaussian responses, there does not exist an analytical solution to \cref{eq:pen_ls}. Nevertheless, a combination of reweighted least squares and the Newton Raphson method can be used to iteratively solve \cref{eq:pen_ls} when $\hilb$ is an RKHS; see \citet{gen_kern_machines} for details. Since the results in \citet{gen_kern_machines} are not unique to the specific kernel used, we can extend SKIM-FA beyond Gaussian errors. However, from an implementation standpoint, this extension might be challenging since we must take gradients of the kernel hyperparameters during the Newton Raphson optimization steps. Hence, a similar framework used in \citet{laplace_integrated} might be needed for the practical implementation of SKIM-FA to general responses.

\subsection{Change-of-Basis Formula For General $Q$ and Arbitrary Measures}
We discuss how to extend the change-of-basis formula in  \cref{algo:change_basis} to general $Q$. The general idea is as follows:
\begin{enumerate}
	\item (PROJECT) Project each $Q$ way interaction onto the space spanned by all lower-order interactions
	\item (UPDATE) Subtract out the lower-order variation from the $Q$ way interactions, and add back this projected variation to the lower-order interactions
	\item (RECURSE) Repeat Step 1 and Step 2 for the $Q-1$ way interactions, then $Q-2$ interactions, until the highest order interaction is the constant term
\end{enumerate}
Line 6 in \cref{algo:change_basis} is the analogue of the PROJECT step, and Lines 7-9 in  \cref{algo:change_basis} is the analogue of the UPDATE STEP. Under \cref{assum:tens_space}, we describe how the PROJECT-UPDATE-RECURSE methodology can be used to move between two arbitrary functional ANOVA decompositions with respect to $\mu^{\prime}$ and $\mu$. 

Suppose we are given the functional ANOVA decomposition with respect to $\mu^{\prime}$: $f = \sum_{V: |V| \leq Q} f_V^{\prime}$, where $f_V^{\prime} \in \hilb_{V, \mu^{\prime}}^o$. Given each $f_V^{\prime}$, we would like re-express $f$ as $\sum_{V: |V| \leq Q} f_V$, where $f_V \in \hilb_{V, \mu}^o$. Such a decomposition exists since $ \hilb^o_{V, \mu_{\prime}} =  \hilb^o_{V, \mu}$ by  \cref{assum:tens_space} and \cref{eq:universal_space}. We define the projection operator of a function with interactions in $A \subset [p], |A| \leq Q$ below:
\begin{equation}
	\mathrm{proj}_{\mathcal{F}_A, \mu}[f_A] \coloneqq \sum_{V: V \subsetneq A} g_{V_A}, \quad g_{V_A} \in \hilb_{V,  \mu}^o, \quad \mathcal{F}_A = \bigoplus_{V: V \subsetneq A} \hilb_{V, \mu}^o
\end{equation}
The UPDATE step involves adding and subtracting $g_{V_A}$ from the interactions. Since each component $g_{V_A}$ in $\mathrm{proj}_{\mathcal{F}_A, \mu}[f_A]$ is unique by the Hilbert Projection Theorem, this procedure is well defined. We summarize our algorithm in \cref{algo:change_basis_general}. The proof of correctness, namely that \cref{algo:change_basis_general} recovers the functional ANOVA decomposition of $f$ with respect to $\mu$ follows by using the same proof strategy in  \cref{thm:theo_proj}.

\begin{algorithm}[h]
    \caption{Change of Basis Formula for General $Q$ and Measures}
    \label{algo:change_basis_general}
    \begin{algorithmic}[1] 
         \Procedure{ReExpressANOVA}{$\sum_{V: |V| \leq Q} f_V^{\prime}$, $\mu$}  
         	\State Let $I$ equal the highest order interaction in $\sum_{V: |V| \leq Q} f_V^{\prime}$
         	 \If{$I = 0$}
         	\Return $f_{\emptyset}^{\prime}$
         \EndIf
			\State For all $A \subset [p], |A| = I$, compute  $\mathrm{proj}_{\mathcal{F}_A, \mu}[f_A^{\prime}]$.
			\State  For all $A \subset [p], |A| = I$, let $f_A = f_A^{\prime} - \mathrm{proj}_{\mathcal{F}_A, \mu}[f_A^{\prime}]$ \Comment{Update higher-order interaction effects}
			\State For all $V \subset [p], |V| < I$, let $f_V^{\prime} = f_V^{\prime} + \sum_{A: |A| = I, V \subsetneq A} g_{V_A}$ \Comment{Update lower-order interaction effects}
            \State \textbf{return} $\sum_{A: |A| = I} f_A$ + ReExpressANOVA($\sum_{V: |V| \leq I-1} f_V^{\prime}$, $\mu$) \Comment{RECURSE step}
        \EndProcedure
    \end{algorithmic}
\end{algorithm}

\noindent \textit{Computing the Projection Operator via Monte-Carlo.} We show how to compute $\mathrm{proj}_{\mathcal{F}_A, \mu}[f_A^{\prime}]$ via Monte-Carlo as we do in \cref{algo:change_basis} when $Q = 2$. To this end, let the components of $\Phi_A = \bigotimes_{j \in A} \Phi_j$. Let $d^* = \max_{V: |V| \leq Q} \mathrm{dim}(\Phi_A)$.  For $1 \leq w \leq W$ randomly sample $x^{(w)} \iid \mu$, where $W > d^*$. Define
\begin{equation}
	X_A = [\Phi_V(x^{(1)}_V) \cdots \Phi_V(x^{(W)}_V)]^T_{V: V \subsetneq A}
\end{equation}
where $\Phi_{\emptyset} = 1$. Let $f_{A, W}^{\prime} = [f_{A}^{\prime}(x^{(1)}_V) \cdots f_{A}^{\prime}(x^{(W)}_V)]^T$. Then,
\begin{equation}
	\hat{g}_{V_A}(\cdot) = \hat{\Psi}_{V}^T \Phi_V(\cdot), \quad \text{where} \quad  [\hat{\Psi}_{V}]^T_{V: V \subsetneq A} = (X_{A}^T X_{A})^{-1} X_{A}^T f_{A, W}^{\prime}.
\end{equation}
By following nearly an identical proof of \cref{prop:change_basis_correct}, $\hat{g}_{V_A}(\cdot) \rightarrow g_{V_A}(\cdot)$ as $W \rightarrow \infty$.

\subsection{Consistency Guarantees}
Proving selection consistency is beyond the scope of the current paper. Given selection consistency, however, estimation consistency follows from the work in \citet{projection_anova}, where the author examines the consistency properties of fitting functional ANOVA models via a finite-dimensional tensor product basis (i.e., as we do in SKIM-FA). 

To make this connection more concrete, suppose the number of correct covariates $S$ is fixed (and does not depend on $p$ or $N$). If selection consistency holds, then SKIM-FA consistently recovers $S$ with probability one as $N,p \rightarrow \infty$. To simplify the analysis, suppose we use sample splitting, where the first $\frac{N}{2}$ datapoints are used for selection, and the remaining $\frac{N}{2}$ datapoints are used to re-estimate the effects among the selected covariates. For any desired probability, $N$ can be chosen sufficiently large such that SKIM-FA selects all $S$ correct covariates exceeding the chosen probability. Since $S$ is fixed, the results in \citet{projection_anova} apply when estimating the effects on the held-out set of  $\frac{N}{2}$ datapoints. In particular, \citet{projection_anova} provides the rate at which $N$ must grow as a function of the size and smoothness of the tensor product basis generated from the $S$ selected covariates; see Theorem 3 and Corollary 2 of \citet{projection_anova} for different rates of convergence. 

\color{black}

\section{MARS ANOVA Procedure} \label{A:mars_anova} 

We show how to perform the functional ANOVA decomposition of $\hat{f}$ with respect to $\hat{\mu}_{\tens} = \hat{\mu}_1 \tens \cdots \tens \hat{\mu}_p$, where $\hat{f}$ denotes the regression function fit from MARS and $\mu_i$ the empirical distribution of covariate $i$: $\hat{\mu}_i = \frac{1}{N} \sum_{n=1}^N \delta_{x^{(n)}_i}$. Under $\hat{\mu}_{\tens}$, the functional ANOVA decomposition of $\hat{f}$ equals
\begin{equation} \label{eq:anova_mars_expect}
	\begin{split}
		\hat{f}_{\emptyset} &= \E_{\hat{\mu}_{\tens}}[\hat{f}] \\
		\hat{f}_{\{i\}}(x_i) &= \E_{\hat{\mu}_{\tens}}[\hat{f} \mid x_i = x_i] - \hat{f}_{\emptyset} \\
		\hat{f}_{\{i, j\}}(x_i, x_j) &= \E_{\hat{\mu}_{\tens}}[\hat{f} \mid x_i = x_i, x_j = x_j] - \hat{f}_{\{i\}}(x_i) - \hat{f}_{\{j\}}(x_i) - \hat{f}_{\emptyset},
	\end{split}
\end{equation}
which is also shown in \citet[Equation 5]{anova_zero_mean_sens}. We show how to compute each of the expectations in \cref{eq:anova_mars_expect}. The intercept $\hat{f}_{\emptyset}$ equals the sample average of the fitted values (i.e., $\hat{f}$ applied to each of the $N$ training datapoints). Let $X$ denote the $N \times p$  matrix of training data. Let $X^{i}$ equal the matrix obtained by setting all values in the $i$th column of $X$ equal to $x_i$ and the remaining columns unchanged. Then,  
\begin{equation*}
	 \E_{\hat{\mu}_{\tens}}[\hat{f} \mid x_i = x_i]  = \frac{1}{N} \sum_{n=1}^N \hat{f}(X^{i}_n),
\end{equation*}
where $X^{i}_n$ is the $n$th row of $X^{i}_n$. Similarly, let $X^{ij}$ equal the matrix obtained by setting all values in the $i$th and $j$th columns of $X$ equal to $x_i$ and $x_j$ respectively, and the remaining columns unchanged. Then,  
\begin{equation*}
	\E_{\hat{\mu}_{\tens}}[\hat{f} \mid x_i = x_i, x_j = x_j] = \frac{1}{N} \sum_{n=1}^N \hat{f}(X^{ij}_n).
\end{equation*}

\section{Additional Experimental Details} \label{A:add_exp_detail}

\subsection{Fitting benchmark methods}
\noindent \textit{SPAM-2Stage}: we perform variable selection by fitting a sparse additive model (SpAM)  \citep{spam} to the data. We use the \texttt{sam} package in \texttt{R}. Since \texttt{sam}  does not provide a default way to select the $L_1$ regularization strength, we use 5-fold cross-validation. For estimation, we generate all main and interaction effects among the subset of covariates selected by SpAM. We calculate these effects by taking pairwise products of  univariate basis functions generated from a natural cubic spline basis with 5 total knots; see \cref{A:model_implement_details} for details. We estimate the basis coefficients (and hence effects) using ridge regression, where again we use 5-fold cross-validation to pick the $L_2$ regularization strength. \\
  
 	\noindent \textit{Multivariate Additive Regression Splines (MARS)}: we use the \texttt{python} implementation of MARS \citep{mars} in \texttt{py-earth}.  \\
 		
 	\noindent \textit{Hierarchical Lasso (HierLasso)}: we use the implementation of HierLasso \citep{lim_heirch_lasso} in the authors' \texttt{R} package \texttt{glinternet}. Since \citet{lim_heirch_lasso} use cross-validation to pick the $L_1$ regularization strength, we similarly use 5-fold cross-validation. \\
 	
 	\noindent \textit{Pairs Lasso}: we fit the Lasso on the expanded set of features $\{x_i \}_{i=1}^p$ and $\{x_ix_j\}_{i,j=1}^p$. We fit the Lasso using the \texttt{python} package \texttt{sklearn}, and use 5-fold cross-validation to select the $L_1$ regularization strength.
 	
\subsection{Evaluation Criteria}
\noindent \textit{Variable Selection Evaluation Metrics.} We consider both the power to select correct covariates and avoid incorrect ones. \emph{\# Correct Selected} counts the number  of covariates correctly selected by the method. Higher is better. \emph{\# Wrong Selected}  counts the number of covariates incorrectly selected by the method (i.e., Type I error). Lower is better. \emph{\# Correct Not Selected} counts the number of covariates that belong to the true model but were not selected by the method (i.e., Type II error). Lower is better. \\

\noindent \textit{Estimation Evaluation Metrics.} We evaluate how well a method estimates main effects and interaction effects. Instead of  looking only at the total mean squared estimation error, we break this error into multiple buckets to understand what bucket drives the majority of the error. Lower is better for all of the following quantities. \emph{Correct Selected SSE (Main)} takes the sum of squared errors (SSE) between each estimated main effect component and true main effect component. This sum equals $\sum_{i \in S_1} \| f_i^* - \hat{f}_i \|_{\mu}^2$, where $S_1$ is the set of correctly identified main effects, $\hat{f}_i $ is the estimated main effect, and $f_i^*$ is the true main effect. \emph{Correct Not Selected SSE (Main)} takes the sum of squared norms of main effects not selected. This sum equals $\sum_{i \in S_2} \| f_i^* \|_{\mu}^2$, where $S_2$ is the set of correct  main effects not selected. \emph{Wrong Selected SSE (Main)} takes the sum of  squared norms of  main effect components incorrectly selected. This sum equals $\sum_{i \in S_3} \| \hat{f}_i \|_{\mu}^2$, where $S_3$ is the set of incorrect  main effects selected. \emph{Correct Selected SSE (Pair)}, \emph{Correct Not Selected SSE (Pair)}, and \emph{Wrong Selected SSE (Pair)} are the same as the analogous main effect metrics but instead considers interaction effects. \emph{Total SSE} equals the sum of the 6 buckets above and \emph{Total SSE / Signal Variance} equals the relative estimation error, i.e., Total SSE divided by the true signal variance.

 \section{Additional Experimental Results} \label{A:add_exp}

\begin{landscape}
\begin{Table}[]
\centering
\begin{tabular}{@{}ccccc@{}}
\toprule
\textbf{Method} & \textbf{\# Covariates} & \textbf{\# Correct Selected} & \textbf{\# Wrong Selected} & \textbf{\# Correct Not Selected} \\ \midrule
HierLasso       & 250        & 4                   & 0                          & 1                   \\
SKIM -FA           & 250        & 4                   & 0                          & 1                   \\
Pairs Lasso     & 250        & 4                   & 5                          & 1                   \\
SPAM-2Stage     & 250        & 5                   & 52                         & 0                   \\
MARS            & 250        & 5                   & 58                         & 0                   \\ \hline
                &            &                     &                            &                     \\
SKIM-FA            & 500        & 5                   & 13                         & 0                   \\
SPAM-2Stage     & 500        & 5                   & 28                         & 0                   \\
Pairs Lasso     & 500        & 4                   & 39                         & 1                   \\
HierLasso       & 500        & 4                   & 48                         & 1                   \\
MARS            & 500        & 5                   & 64                         & 0                   \\ \hline
                &            &                     &                            &                     \\
SKIM-FA            & 1000       & 3                   & 0                          & 2                   \\
HierLasso       & 1000       & 4                   & 5                          & 1                   \\
Pairs Lasso     & 1000       & 4                   & 6                          & 1                   \\
SPAM-2Stage     & 1000       & 5                   & 15                         & 0                   \\
MARS            & 1000       & 5                   & 70                         & 0                   \\ \bottomrule
\end{tabular}
\caption{Variable Selection Performance for Main Effects Only Setting.} \label{tab:var_select_main_only}
\end{Table}
\end{landscape}

\begin{landscape}
\begin{Table}[]
\centering
\begin{tabular}{cccccccccc}
\hline
\textbf{Method} & \textbf{p} & \textbf{\begin{tabular}[c]{@{}c@{}}Correct \\ Selected \\ SSE \\ (Main)\end{tabular}} & \textbf{\begin{tabular}[c]{@{}c@{}}Correct \\ Not \\ Selected \\ SSE \\ (Main)\end{tabular}} & \textbf{\begin{tabular}[c]{@{}c@{}}Wrong\\ Selected \\ SSE\\ (Main)\end{tabular}} & \textbf{\begin{tabular}[c]{@{}c@{}}Correct\\ Selected \\ SSE \\ (Pair)\end{tabular}} & \textbf{\begin{tabular}[c]{@{}c@{}}Correct\\ Not \\ Selected \\ SSE\\ (Pair)\end{tabular}} & \textbf{\begin{tabular}[c]{@{}c@{}}Wrong\\ Selected \\ SSE\\ (Pair)\end{tabular}} & \textbf{\begin{tabular}[c]{@{}c@{}}Total \\ SSE\end{tabular}} & \textbf{\begin{tabular}[c]{@{}c@{}}Total SSE\\ ÷ \\ Signal \\ Variance\end{tabular}} \\ \hline
MARS-EMP & 250 & 0.31 & 0.00 & 2.11 & 0.00 & 0.00 & 2.24 & 4.66 & 0.23 \\
SPAM-2Stage & 250 & 2.77 & 0.00 & 1.99 & 0.00 & 0.00 & 0.08 & 4.84 & 0.24 \\
SKIM-FA & 250 & 2.75 & 4.02 & 0.00 & 0.00 & 0.00 & 0.39 & 7.16 & 0.36 \\
MARS-VANILLA & 250 & 73.17 & 0.00 & 2.37 & 0.00 & 0.00 & 9.89 & 85.43 & 4.27 \\ \hline
 &  &  &  &  &  &  &  &  &  \\
SPAM-2Stage & 500 & 2.82 & 0.00 & 1.25 & 0.00 & 0.00 & 0.04 & 4.11 & 0.21 \\
SKIM-FA & 500 & 2.75 & 0.00 & 0.49 & 0.00 & 0.00 & 1.40 & 4.64 & 0.23 \\
MARS-EMP & 500 & 0.38 & 0.00 & 2.37 & 0.00 & 0.00 & 2.19 & 4.95 & 0.25 \\
MARS-VANILLA & 500 & 35.67 & 0.00 & 3.62 & 0.00 & 0.00 & 9.22 & 48.51 & 2.43 \\ \hline
 &  &  &  &  &  &  &  &  &  \\
SPAM-2Stage & 1000 & 2.67 & 0.00 & 0.78 & 0.00 & 0.00 & 0.02 & 3.46 & 0.17 \\
MARS-EMP & 1000 & 0.45 & 0.00 & 2.68 & 0.00 & 0.00 & 2.39 & 5.51 & 0.28 \\
SKIM-FA & 1000 & 2.70 & 8.10 & 0.00 & 0.00 & 0.00 & 0.24 & 11.03 & 0.55 \\
MARS-VANILLA & 1000 & 16.14 & 0.00 & 1.56 & 0.00 & 0.00 & 10.33 & 28.02 & 1.40 \\ \hline
\end{tabular}
\caption{Estimation Performance for Main Effects Only Setting.}
\label{tab:est_main_only} 
\end{Table}
\end{landscape}

\begin{landscape}
\begin{Table}[]
\centering
\begin{tabular}{@{}ccccc@{}}
\toprule
\textbf{Method} & \textbf{\# of Covariates} & \textbf{\# Correct Selected} & \textbf{\# Wrong Selected} & \textbf{\# Correct Not Selected} \\ \midrule
SKIM-FA         & 250                       & 5                            & 1                          & 0                                  \\
HierLasso       & 250                       & 5                            & 25                         & 0                                  \\
SPAM-2Stage     & 250                       & 5                            & 37                         & 0                                  \\
MARS            & 250                       & 5                            & 84                         & 0                                  \\
Pairs Lasso     & 250                       & 5                            & 89                         & 0                                  \\ \hline
                &            &                     &                            &                     \\

SKIM-FA         & 500                       & 5                            & 0                          & 0                                  \\
SPAM-2Stage     & 500                       & 5                            & 29                         & 0                                  \\
HierLasso       & 500                       & 5                            & 30                         & 0                                  \\
MARS            & 500                       & 5                            & 69                         & 0                                  \\
Pairs Lasso     & 500                       & 5                            & 182                        & 0                                  \\ \hline
                &            &                     &                            &                     \\

SKIM-FA         & 1000                      & 5                            & 0                          & 0                                  \\
SPAM-2Stage     & 1000                      & 5                            & 15                         & 0                                  \\
HierLasso       & 1000                      & 5                            & 40                         & 0                                  \\
MARS            & 1000                      & 5                            & 71                         & 0                                  \\
Pairs Lasso     & 1000                      & 5                            & 213                        & 0                                  \\ \bottomrule
\end{tabular}
\caption{Variable Selection Performance for Equal Main and Interaction Effects Setting.}  \label{tab:var_select_equal}
\end{Table}
\end{landscape}

\begin{landscape}
\begin{Table}[]
\centering
\begin{tabular}{@{}cccccccccc@{}}
\toprule
\textbf{Method} & \textbf{p} & \textbf{\begin{tabular}[c]{@{}c@{}}Correct \\ Selected \\ SSE \\ (Main)\end{tabular}} & \textbf{\begin{tabular}[c]{@{}c@{}}Correct \\ Not \\ Selected \\ SSE \\ (Main)\end{tabular}} & \textbf{\begin{tabular}[c]{@{}c@{}}Wrong\\ Selected \\ SSE\\ (Main)\end{tabular}} & \textbf{\begin{tabular}[c]{@{}c@{}}Correct\\ Selected \\ SSE \\ (Pair)\end{tabular}} & \textbf{\begin{tabular}[c]{@{}c@{}}Correct\\ Not \\ Selected \\ SSE\\ (Pair)\end{tabular}} & \textbf{\begin{tabular}[c]{@{}c@{}}Wrong\\ Selected \\ SSE\\ (Pair)\end{tabular}} & \textbf{\begin{tabular}[c]{@{}c@{}}Total \\ SSE\end{tabular}} & \textbf{\begin{tabular}[c]{@{}c@{}}Total SSE\\ ÷ \\ Signal \\ Variance\end{tabular}} \\ \midrule
SKIM-FA & 250 & 1.62 & 0.00 & 0.08 & 0.52 & 0.00 & 0.17 & 2.39 & 0.12 \\
SPAM-2Stage & 250 & 1.63 & 0.00 & 1.72 & 8.84 & 0.00 & 0.11 & 12.30 & 0.62 \\
MARS-EMP & 250 & 0.71 & 0.00 & 4.44 & 2.17 & 0.00 & 5.69 & 13.01 & 0.65 \\
MARS-VANILLA & 250 & 24.91 & 0.00 & 5.28 & 17.13 & 0.00 & 18.03 & 65.35 & 3.27 \\ \hline
 &  &  &  &  &  &  &  &  &  \\ 
SKIM-FA & 500 & 1.52 & 0.00 & 0.00 & 0.41 & 0.00 & 0.00 & 1.93 & 0.10 \\
SPAM-2Stage & 500 & 1.62 & 0.00 & 3.74 & 2.16 & 0.00 & 5.47 & 12.99 & 0.65 \\
MARS-EMP & 500 & 0.71 & 0.00 & 4.69 & 1.63 & 0.96 & 6.57 & 14.56 & 0.73 \\
MARS-VANILLA & 500 & 11.36 & 0.00 & 13.22 & 15.62 & 0.96 & 23.55 & 64.71 & 3.24 \\ \hline
 &  &  &  &  &  &  &  &  &  \\
SKIM-FA & 1000 & 1.54 & 0.00 & 0.00 & 0.29 & 0.00 & 0.00 & 1.82 & 0.09 \\
SPAM-2Stage & 1000 & 1.67 & 0.00 & 1.07 & 0.41 & 0.00 & 2.16 & 5.31 & 0.27 \\
MARS-EMP & 1000 & 0.61 & 0.00 & 3.84 & 1.70 & 0.00 & 2.52 & 8.67 & 0.43 \\
MARS-VANILLA & 1000 & 454.88 & 0.00 & 3.16 & 21.46 & 0.00 & 13.22 & 492.72 & 24.64 \\ \bottomrule
\end{tabular}
\caption{Estimation Performance for Equal Main and Interaction Effects Setting.} \label{tab:est_equal}
\label{tab:my-table}
\end{Table}
\end{landscape}

\begin{landscape}
\begin{Table}[]
\centering
\begin{tabular}{@{}ccccc@{}}
\toprule
\textbf{Method} & \textbf{\# Covariates} & \textbf{\# Correct Selected} & \textbf{\# Wrong Selected} & \textbf{\# Correct Not Selected} \\ \midrule
SKIM-FA         & 250                    & 5                            & 6                          & 0                                \\
MARS            & 250                    & 5                            & 75                         & 0                                \\
SPAM-2Stage     & 250                    & 4                            & 77                         & 1                                \\
Pairs Lasso     & 250                    & 5                            & 123                        & 0                                \\
HierLasso       & 250                    & 5                            & 160                        & 0                                \\ \hline
                &                        &                              &                            &                                  \\
SKIM-FA         & 500                    & 5                            & 16                         & 0                                \\
SPAM-2Stage     & 500                    & 1                            & 21                         & 4                                \\
HierLasso       & 500                    & 5                            & 62                         & 0                                \\
Pairs Lasso     & 500                    & 5                            & 85                         & 0                                \\
MARS            & 500                    & 2                            & 132                        & 3                                \\ \hline
                &                        &                              &                            &                                  \\
SKIM-FA         & 1000                   & 5                            & 9                          & 0                                \\
SPAM-2Stage     & 1000                   & 1                            & 41                         & 4                                \\
MARS            & 1000                   & 5                            & 75                         & 0                                \\
HierLasso       & 1000                   & 5                            & 120                        & 0                                \\
Pairs Lasso     & 1000                   & 5                            & 144                        & 0                                \\ \bottomrule
\end{tabular}
\caption{Variable Selection Performance for Weak Main Effects Setting.}
\label{tab:var_select_main_weak}
\end{Table}
\end{landscape}

\begin{landscape}
\begin{Table}[]
\centering
\begin{tabular}{@{}cccccccccc@{}}
\toprule
\textbf{Method} & \textbf{p} & \textbf{\begin{tabular}[c]{@{}c@{}}Correct \\ Selected \\ SSE \\ (Main)\end{tabular}} & \textbf{\begin{tabular}[c]{@{}c@{}}Correct \\ Not \\ Selected \\ SSE \\ (Main)\end{tabular}} & \textbf{\begin{tabular}[c]{@{}c@{}}Wrong\\ Selected \\ SSE\\ (Main)\end{tabular}} & \textbf{\begin{tabular}[c]{@{}c@{}}Correct\\ Selected \\ SSE \\ (Pair)\end{tabular}} & \textbf{\begin{tabular}[c]{@{}c@{}}Correct\\ Not \\ Selected \\ SSE\\ (Pair)\end{tabular}} & \textbf{\begin{tabular}[c]{@{}c@{}}Wrong\\ Selected \\ SSE\\ (Pair)\end{tabular}} & \textbf{\begin{tabular}[c]{@{}c@{}}Total \\ SSE\end{tabular}} & \textbf{\begin{tabular}[c]{@{}c@{}}Total SSE\\ ÷ \\ Signal \\ Variance\end{tabular}} \\ \midrule
SKIM-FA & 250 & 0.45 & 0.00 & 0.95 & 0.73 & 0.00 & 0.77 & 2.89 & 0.14 \\
MARS-EMP & 250 & 1.46 & 0.00 & 4.02 & 4.83 & 0.00 & 4.67 & 14.97 & 0.75 \\
SPAM-2Stage & 250 & 0.09 & 0.05 & 2.22 & 10.72 & 7.73 & 0.42 & 21.23 & 1.06 \\
MARS-VANILLA & 250 & 22497.35 & 0.00 & 7.31 & 148073.29 & 0.00 & 18.55 & 170596.50 & 8529.83 \\ \hline
 &  &  &  &  &  &  &  &  &  \\
SKIM-FA & 500 & 0.69 & 0.00 & 2.05 & 1.50 & 0.00 & 1.37 & 5.61 & 0.28 \\
SPAM-2Stage & 500 & 0.27 & 0.20 & 4.09 & 0.00 & 19.46 & 0.08 & 24.11 & 1.21 \\
MARS-EMP & 500 & 0.41 & 0.15 & 21.92 & 0.00 & 19.46 & 15.56 & 57.51 & 2.88 \\
MARS-VANILLA & 500 & 0.10 & 0.15 & 323788.65 & 0.00 & 19.46 & 324588.33 & 648396.70 & 32419.83 \\ \hline
 &  &  &  &  &  &  &  &  &  \\
SKIM-FA & 1000 & 0.72 & 0.00 & 1.37 & 0.61 & 0.00 & 0.63 & 3.33 & 0.17 \\
MARS-EMP & 1000 & 0.67 & 0.00 & 5.86 & 3.37 & 0.00 & 5.63 & 15.52 & 0.78 \\
SPAM-2Stage & 1000 & 0.16 & 0.20 & 6.69 & 0.00 & 18.33 & 0.31 & 25.69 & 1.28 \\
MARS-VANILLA & 1000 & 23.62 & 0.00 & 3.18 & 23.16 & 0.00 & 15.43 & 65.39 & 3.27 \\ \bottomrule
\end{tabular}
\caption{Estimation Performance for Weak Main Effects Setting.}
\label{tab:est_main_weak}
\end{Table}
\end{landscape}

\begin{Table}[]
\centering
\begin{tabular}{@{}cc@{}}
\toprule
\textbf{Effect}       & \textbf{Signal Variance} \\ \midrule
Hour                  & 0.382             \\
Air Temp.             & 0.104             \\
Humidity              & 0.024             \\
Windspeed             & 0.002             \\
Hour x Air Temp.      & 0.047             \\
Hour x Humidity       & 0.01              \\
Hour x Windspeed      & 0.002             \\
Air Temp. x Humidity  & 0.012             \\
Air Temp. x Windspeed & 0.005             \\
Humidity x Windspeed  & 0.003             \\ \bottomrule
\end{tabular}
\caption{Proxy Ground Truth Effects and Signal Variances for the Bike Sharing Data Set.}
\label{tab:ground_truth_effects}
\end{Table}

\begin{Table}[]
\centering
\begin{tabular}{@{}cccc@{}}
\toprule
\textbf{Method} & \textbf{\# Covariates} & \textbf{\# Original Selected} & \textbf{\# Wrong Selected} \\ \midrule
SKIM-FA         & 250                    & 2                             & 0                          \\
HierLasso       & 250                    & 3                             & 7                          \\
Pairs Lasso     & 250                    & 3                             & 29                         \\
MARS            & 250                    & 3                             & 96                         \\
SPAM-2Stage     & 250                    & 4                             & 97                         \\ \hline \\
                
SKIM-FA         & 500                    & 2                             & 0                          \\
HierLasso       & 500                    & 3                             & 8                          \\
SPAM-2Stage     & 500                    & 3                             & 22                         \\
Pairs Lasso     & 500                    & 3                             & 39                         \\
MARS            & 500                    & 4                             & 109                        \\  \hline \\

SKIM-FA         & 1000                   & 3                             & 0                          \\
HierLasso       & 1000                   & 3                             & 5                          \\
SPAM-2Stage     & 1000                   & 3                             & 8                          \\
Pairs Lasso     & 1000                   & 3                             & 76                         \\
MARS            & 1000                   & 3                             & 119                        \\ \bottomrule
\end{tabular}
\caption{Variable Selection Performance for the Bike Sharing Data Set.}
\label{tab:bike_var_perform_full}
\end{Table}

\begin{Table}[]
\centering
\resizebox{\textwidth}{!}{%
\begin{tabular}{@{}ccccccccc@{}}
\toprule
\textbf{Method} & \textbf{\# Noise} & \textbf{\begin{tabular}[c]{@{}c@{}}Correct \\ Selected \\ SSE \\ (Main)\end{tabular}} & \textbf{\begin{tabular}[c]{@{}c@{}}Correct \\ Not \\ Selected \\ SSE \\ (Main)\end{tabular}} & \textbf{\begin{tabular}[c]{@{}c@{}}Wrong\\ Selected \\ SSE\\ (Main)\end{tabular}} & \textbf{\begin{tabular}[c]{@{}c@{}}Correct\\ Selected \\ SSE \\ (Pair)\end{tabular}} & \textbf{\begin{tabular}[c]{@{}c@{}}Correct\\ Not \\ Selected \\ SSE\\ (Pair)\end{tabular}} & \textbf{\begin{tabular}[c]{@{}c@{}}Wrong\\ Selected \\ SSE\\ (Pair)\end{tabular}} & \textbf{\begin{tabular}[c]{@{}c@{}}Total \\ SSE\end{tabular}} \\ \midrule

SKIM-FA & 250 & 0.15 & 0.027 & 0 & 0.019 & 0.038 & 0 & 0.233 \\
SPAM-2Stage & 250 & 0.149 & 0 & 0.172 & 0.091 & 0 & 0.01 & 0.422 \\
MARS-EMP & 250 & 0.209 & 0.002 & 0.476 & 0.052 & 0.026 & 0.344 & 1.11 \\
MARS-Vanilla & 250 & 6.522 & 0.002 & 1.644 & 1.036 & 0.026 & 2.2 & 11.431 \\ \hline \\

SKIM-FA & 500 & 0.148 & 0.027 & 0 & 0.019 & 0.038 & 0 & 0.231 \\
SPAM-2Stage & 500 & 0.15 & 0.002 & 0.057 & 0.081 & 0.009 & 0.002 & 0.302 \\
MARS-EMP & 500 & 0.225 & 0 & 0.529 & 0.052 & 0.026 & 0.3 & 1.131 \\
MARS-Vanilla & 500 & 5.564 & 0 & 0.5 & 1.037 & 0.026 & 2.085 & 9.212 \\ \hline \\

SKIM-FA & 1000 & 0.145 & 0.002 & 0 & 0.107 & 0.009 & 0 & 0.263 \\
SPAM-2Stage & 1000 & 0.149 & 0.002 & 0.027 & 0.081 & 0.009 & 0.000 & 0.269 \\
MARS-EMP & 1000 & 0.214 & 0.002 & 0.485 & 0.054 & 0.026 & 0.245 & 1.026 \\
MARS-Vanilla & 1000 & 6.556 & 0.002 & 0.796 & 0.947 & 0.026 & 1.882 & 10.209 \\ \bottomrule
\end{tabular}
}
\caption{Estimation Performance for the Bike Sharing Data Set.}
\label{tab:bike_est_perform_full}
\end{Table}

\subsection{Appending Irrelevant but Real Covariates to the Bike Sharing Data Set} \label{A:bike_fake_real}

In \cref{sec:experiments}, we appended fake covariates drawn from a Uniform(0, 1) distribution to the Bike Sharing Data Set for various choices of $p_{\mathrm{noise}}$. In many applications, however, covariates are correlated and this correlation structure might affect the performance of a method.  To create a design matrix with a non-trivial correlation structure, we start by taking a completely different data set, namely the SECOM data set from the UCI Machine Learning repository which contains 591 covariates related to semi-conductor manufacturing.\footnote{We only consider 432 continuous covariates (with non-zero variance) in the SECOM data set.} Then, we append these covariates to the Bike Sharing data set. Since these two data sets are independent, the covariates in the SECOM data set play the same role as the synthetic fake covariates in \cref{sec:experiments} (i.e., should not be selected) but now have a real correlation structure. \cref{tab:bike_var_perform_real_append} and \cref{tab:bike_est_perform_real_append} summarize how each method performs in terms of variable selection and estimation, respectively.

\begin{Table}[]
\centering
\label{tab:bike_var_perform_real_append}
\begin{tabular}{@{}cccc@{}}
\toprule
\textbf{Method} & \textbf{\# Covariates} & \textbf{\# Original Selected} & \textbf{\# Wrong Selected} \\ \midrule
\textbf{SKIM-FA}         & 432                   & 2                             & 0                          \\
HierLasso       & 432                   & 3                             & 1                          \\
SPAM-2Stage     & 432                   & 0                             & 0                          \\
Pairs Lasso     & 432                   & 3                             & 14                         \\
MARS            & 432                   & 3                             & 97                        \\ \bottomrule
\end{tabular}
\caption{Variable Selection Performance for the Bike Sharing-SECOM Data Set.}
\end{Table}

\begin{Table}[]
\centering
\resizebox{\textwidth}{!}{%
\begin{tabular}{@{}ccccccccc@{}}
\toprule
\textbf{Method} & \textbf{\# Noise} & \textbf{\begin{tabular}[c]{@{}c@{}}Correct \\ Selected \\ SSE \\ (Main)\end{tabular}} & \textbf{\begin{tabular}[c]{@{}c@{}}Correct \\ Not \\ Selected \\ SSE \\ (Main)\end{tabular}} & \textbf{\begin{tabular}[c]{@{}c@{}}Wrong\\ Selected \\ SSE\\ (Main)\end{tabular}} & \textbf{\begin{tabular}[c]{@{}c@{}}Correct\\ Selected \\ SSE \\ (Pair)\end{tabular}} & \textbf{\begin{tabular}[c]{@{}c@{}}Correct\\ Not \\ Selected \\ SSE\\ (Pair)\end{tabular}} & \textbf{\begin{tabular}[c]{@{}c@{}}Wrong\\ Selected \\ SSE\\ (Pair)\end{tabular}} & \textbf{\begin{tabular}[c]{@{}c@{}}Total \\ SSE\end{tabular}} \\ \midrule
\textbf{SKIM-FA} & 432 & 0.137 & 0.026 & 0 & 0.016 & 0.029 & 0 & 0.208 \\
SPAM-2Stage & 432 & 0 & 0.549 & 0 & 0 & 0.074 & 0 & 0.623 \\
MARS-EMP & 432 & 0.212 & 0.001 & 7.416 & 0.049 &  0.018 & 6.113 & 13.810 \\
MARS-Vanilla & 432 & 3.876 & 0.001 & 85.369 & 1.704 & 0.0178 & 104.405 &  195.373 \\ \bottomrule
\end{tabular}
}
\caption{Estimation Performance for the Bike Sharing-SECOM Data Set.}
\label{tab:bike_est_perform_real_append}
\end{Table}

\subsection{Impact of  Correlated Predictors on the Functional ANOVA for the Bike Sharing Data Set} \label{A:corr_pred_bike}

We perform the same analysis as in \cref{sec:anova_sens} but for the Bike Sharing data set in \cref{fig:bike_anova}. Unlike the Concrete Compressive Strength data set, however, we do not see a large difference between the two functional ANOVA decompositions for the Bike Sharing data set in  \cref{fig:bike_anova}.

\begin{figure}[]
        \centering
        \begin{subfigure}[b]{0.45\textwidth}
            \centering
            \includegraphics[width=\textwidth]{./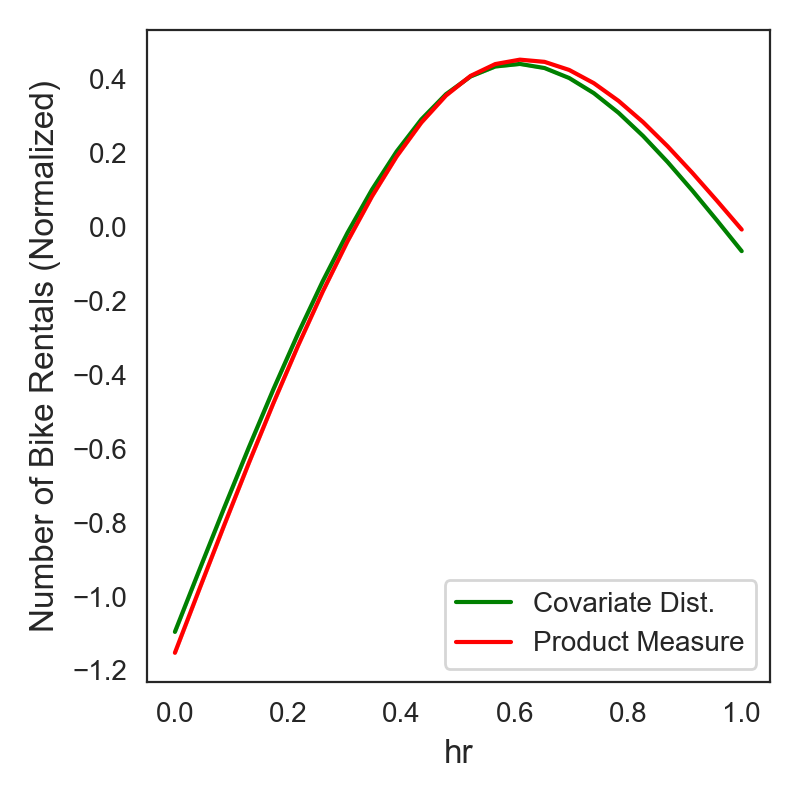}
            \caption[Network2]%
            {{\small  Main Effect of Hour of the Day on Rentals}}    
        \end{subfigure}
        \hspace{1cm}
        \begin{subfigure}[b]{0.45\textwidth}  
            \centering 
            \includegraphics[width=\textwidth]{./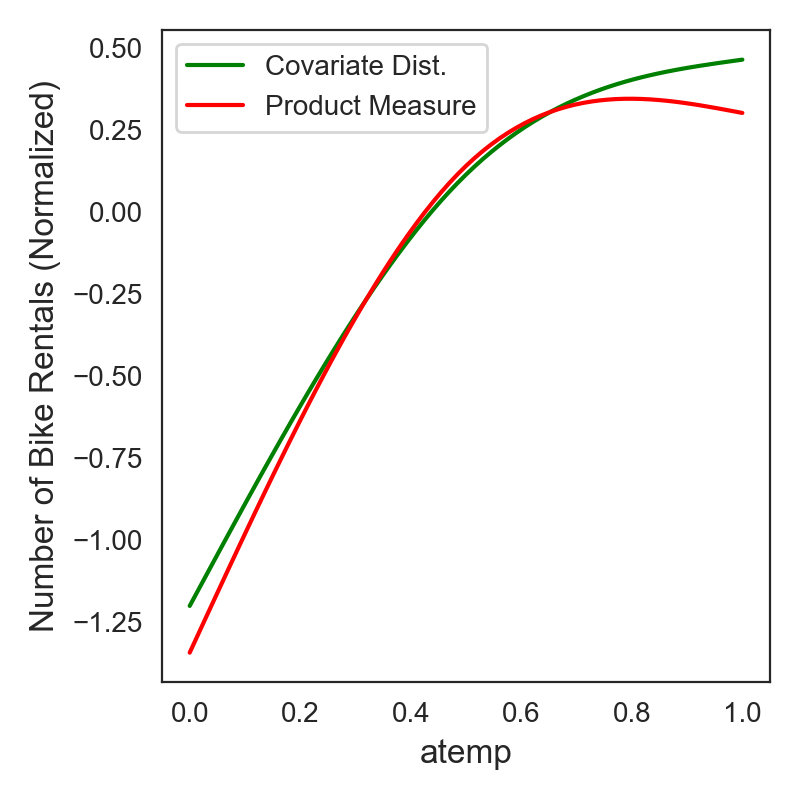}
            \caption[]%
            {{\small Main Effect of Hour of Temperature}}    
        \end{subfigure}
        \hspace{1cm}
        \begin{subfigure}[b]{0.45\textwidth}  
            \centering 
            \includegraphics[width=\textwidth]{./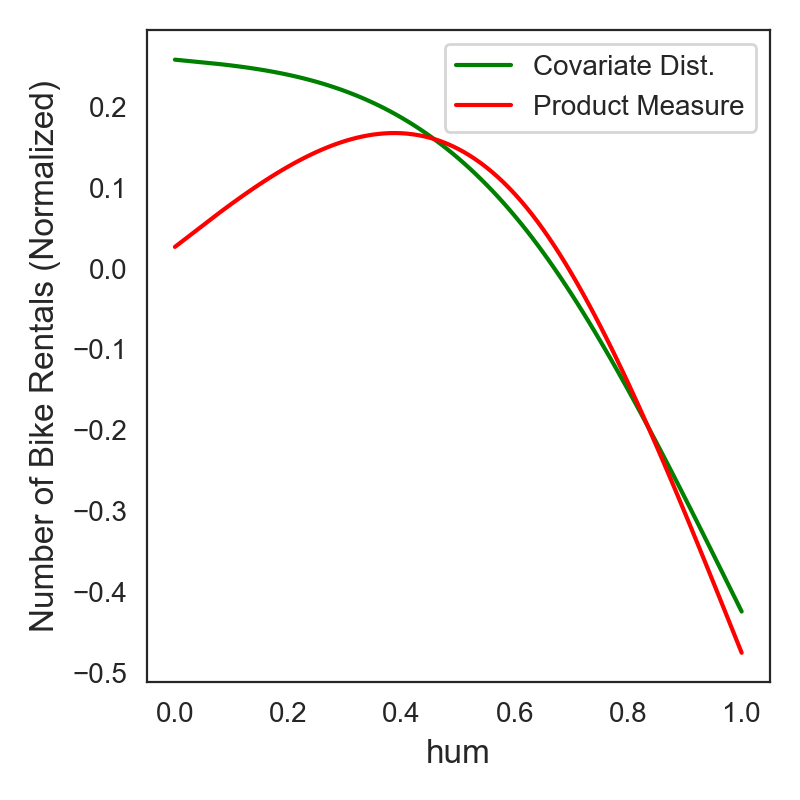}
            \caption[]%
            {{\small Main Effect of Humidity}}    
        \end{subfigure}
        \hspace{1cm}
        \begin{subfigure}[b]{0.45\textwidth}  
            \centering 
            \includegraphics[width=\textwidth]{./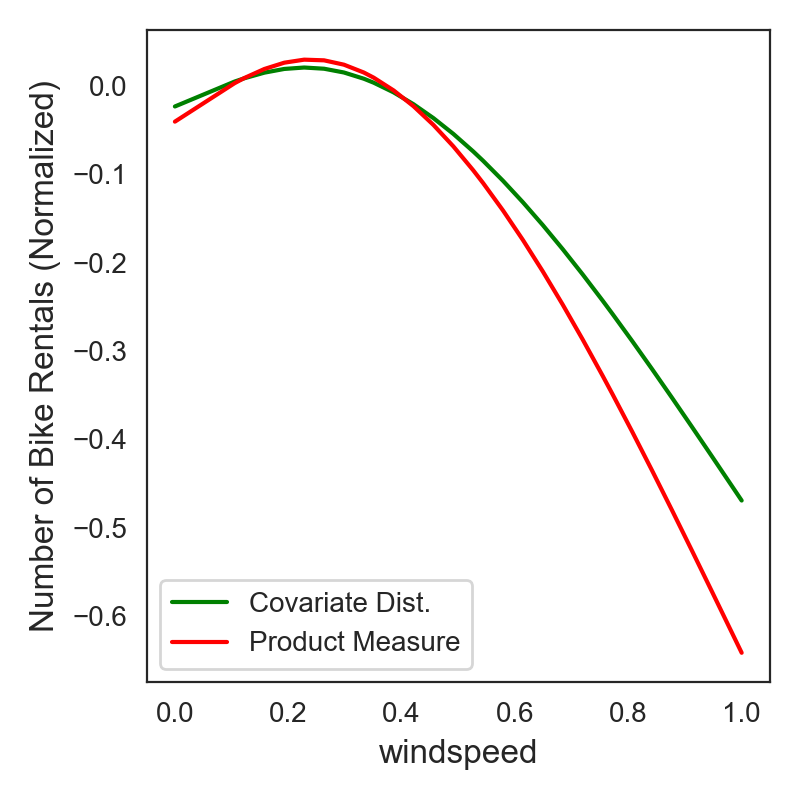}
            \caption[]%
            {{\small Main Effect of Windspeed}}    
        \end{subfigure}
                \caption{Effect of Correlated Predictors on the Bike Sharing Data Set} \label{fig:bike_anova}

\end{figure}

\color{editcolor}
\subsection{Obesity Gene-Expression and SNP Data Set: Additional SKIM-FA Fit Details} \label{A:obese_details}

Since the number of datapoints is small ($N=87$), the number of datapoints in between any adjacent knots in a spline basis is small. Hence, we instead fit a linear interaction model for this data set using SKIM-FA. Since we fit a linear interaction model, we can examine the regression coefficients to understand the fit. \cref{tab:skim_main_obese} summarizes the 31 variables selected by SKIM-FA and their estimated main effects.

We report the 10 strongest interaction effects in \cref{tab:skim_iter_obese}. We see that SKP1 has the largest number of strong interaction effects follows by IRSP2.  

\begin{Table}[b]
\centering
\begin{tabular}{@{}cc@{}}
\toprule
\textbf{Effect} & \textbf{Coeff.} \\ \midrule
GSC & 1.40 \\
IRS2 & -1.04 \\
EOGT & 1.03 \\
SNORA71B & -1.00 \\
SPATA20 & -0.95 \\
AK299501 & -0.80 \\
CLIC5 & 0.77 \\
SKP1 & 0.77 \\
SETDB2 & -0.77 \\
BTN2A3P & 0.76 \\
SAP130 & 0.76 \\
KISS1R & 0.69 \\
ZBED3 & -0.66 \\
AQP11 & -0.64 \\
IYD & 0.59 \\
LOC100287177 & -0.56 \\
TMEM74B & 0.47 \\
ATP2B3 & -0.44 \\
LOC283070 & 0.42 \\
IRX1 & 0.39 \\
RPA4 & -0.34 \\
TSHZ3 & 0.31 \\
INTS4 & -0.30 \\
ALDH1A2 & -0.30 \\
PCDH8 & 0.30 \\
FBN2 & 0.29 \\
KLK7 & -0.28 \\
FBXL12 & -0.27 \\
SEMA6B & 0.24 \\
SLC25A13 & -0.00 \\ \bottomrule
\end{tabular}
\caption{Main Effects Selected by SKIM-FA on the Obesity Gene-Expression and SNP Data Set}
\label{tab:skim_main_obese}
\end{Table}

\begin{Table}[b]
\centering
\begin{tabular}{@{}cc@{}}
\toprule
\textbf{Effect} & \textbf{Coeff.} \\ \midrule
(SKP1, SETDB2) & -0.14 \\
(SKP1, ZBED3) & 0.13 \\
(RPA4, SETDB2) & 0.12 \\
(IRS2, RPA4) & 0.11 \\
(IRS2, SNORA71B) & -0.11 \\
(SKP1, RPA4) & -0.10 \\
(SNORA71B, RPA4) & -0.08 \\
(RPA4, ZBED3) & 0.08 \\
(SKP1, IRS2) & -0.06 \\
(RPA4, SAP130) & 0.06 \\ \bottomrule
\end{tabular}%
\caption{Interaction Effects Selected by SKIM-FA on the Obesity Gene-Expression and SNP Data Set (10 Strongest Interactions Shown)}
\label{tab:skim_iter_obese}
\end{Table}

\subsection{Sensitivity to Non-Compactness and Sparse Interaction Effects}
To test the sensitivity of SKIM-FA to the compactness assumption in \cref{thm:unique_fanova} we instead draw covariates each independently from a Gaussian distribution. Since a Gaussian distribution has support on all of $\R$, the covariates do not belong to a compact set. On an unbounded set, the exponential function has an infinite mean. Hence, we replace the exponential trend in \cref{sec:syn_exp} with a cubic trend.  On an unbounded set, to model a sine trend, we would need to use a wavelet basis. Since we have only have support for a spline basis in the current version of our package, we replace the sine trend in \cref{sec:syn_exp} with a leaky rectified linear unit (ReLU) trend, which is often used in neural networks. We keep the linear, logistic, and quadatic effects. 

We let $y$ depend on the first 5 covariates; the remaining $995$ covariates are taken as noise covariates that we do not want to select. Hence, we consider a total of $p=1000$ covariates. We let the main effects equal the sum of the 5 trends discussed above, where the $i$th trend is applied to covariate $i$. To additionally test the impact of not having all interactions present, we only consider 5 out of the 10 possible pairwise interactions: linear-logistic, leaky ReLU-linear, leaky ReLU-quadratic, logistic-quadratic, and cubic-logistic. We select a noise variance such that the $R^2 = \frac{\sigma^2_{\text{signal}}}{\sigma^2_{\text{signal}} + \sigma^2_{\text{noise}}} = 0.8$, where $\sigma^2_{\text{signal}} = \inner{f^*}{f^*}_{\mu}$. We generate a total of $N=1000$ datapoints.

In terms of variable selection performance, SKIM-FA selects all 5 true covariates and 0 incorrect covariates. We summarize the estimation performance below:
\begin{itemize}
	\item Corrected Selected SSE (Main): .95
	\item Corrected Not Selected SSE (Main): 0
	\item Wrong Selected SSE (Main): 0
	\item Correct Selected SSE (Pair):  2.17
	\item Correct Not Selected SSE (Pair): 0
	\item Wrong Selected SSE (Pair): .94
	\item Total SSE: 4.05
\end{itemize}
Since SKIM-FA considers all pairwise interactions among selected covariates, SKIM-FA estimates 5 incorrect interactions. However, the total variance of these 5 incorrect interactions estimated by SKIM-FA is only .94. Hence, SKIM-FA shrinks all 5 incorrect interactions close to 0. 

Each true main and pairwise effect has variance 2. Since all covariates are independent, the total signal variance of the main and pairwise interaction effects is each 10. Hence, the normalized SSE for the true main effects is .95/10 = .095 and 2.17/10 = .217 for the true pairwise effects. 

\color{black}


%
%
%
%

\clearpage
\newpage

\bibliographystyle{rss}
\bibliography{references}

\end{document}